%
\documentclass[acmlarge,nonacm]{acmart}

\AtBeginDocument{%
  \providecommand\BibTeX{{%
    \normalfont B\kern-0.5em{\scshape i\kern-0.25em b}\kern-0.8em\TeX}}}

\setcopyright{acmcopyright}
\copyrightyear{2018}
\acmYear{2018}
\acmDOI{XXXXXXX.XXXXXXX}

\acmJournal{JACM}
\acmNumber{2}
\acmMonth{4}





\usepackage{amsmath,amsfonts,stmaryrd,amssymb}
\usepackage{hyperref}
\usepackage{todonotes}
\usepackage{xcolor}
\hypersetup{
	colorlinks,
	linkcolor={red!50!black},
	citecolor={blue!50!black},
	urlcolor={blue!80!black}
}

\bibliographystyle{acm}

\usepackage{wrapfig}
\usepackage{graphicx}
\usepackage{subcaption}
\usepackage{cleveref} 
\usepackage{quiver} 
\usepackage{proof} 
\usepackage{scalerel} 

\usepackage{tikzit}

\tikzstyle{observation}=[text=blue]
\tikzstyle{morphism}=[fill=white, draw=black, shape=rectangle]
\tikzstyle{small box}=[fill=white, draw=black, shape=rectangle, minimum width=0.4cm, minimum height=0.45cm]
\tikzstyle{medium box}=[fill=white, draw=black, shape=rectangle, minimum width=0.8cm, minimum height=0.9cm]
\tikzstyle{large morphism}=[fill=white, draw=black, shape=rectangle, minimum width=1.7cm, minimum height=1cm]
\tikzstyle{bn}=[fill=black, draw=black, shape=circle, inner sep=1.5pt]
\tikzstyle{state}=[fill=white, draw=black, regular polygon, regular polygon sides=3, minimum width=0.8cm, shape border rotate=180, inner sep=0pt]
\tikzstyle{effect}=[fill=white, draw=black, regular polygon, regular polygon sides=3, minimum width=0.8cm, shape border rotate=0, inner sep=0pt]
\tikzstyle{medium state}=[fill=white, draw=black, regular polygon, regular polygon sides=3, minimum width=1.3cm, inner sep=0pt, shape border rotate=180]
\tikzstyle{small state}=[fill=white, draw=black, regular polygon, regular polygon sides=3, minimum width=0.7cm, inner sep=0pt, shape border rotate=180]
\tikzstyle{medium effect}=[fill=white, draw=black, regular polygon, regular polygon sides=3, minimum width=1.3cm, inner sep=0pt, shape border rotate=0]
\tikzstyle{large state}=[fill=white, draw=black, regular polygon, regular polygon sides=3, minimum width=2.2cm, shape border rotate=180, inner sep=0pt]
\tikzstyle{wn}=[fill=white, draw=black, shape=circle, inner sep=1.5pt]
\tikzstyle{treenode}=[fill=white, draw=none, shape=circle]

\tikzstyle{arrow}=[->]
\tikzstyle{dashed box}=[-, dashed]
\tikzstyle{condition}=[draw=blue, dashed]

\pgfdeclarelayer{edgelayer}
\pgfdeclarelayer{nodelayer}
\pgfsetlayers{edgelayer,nodelayer,main}
\tikzstyle{none}=[]

\tikzset{baseline=(current  bounding  box.center)}

\usepackage{color}
\definecolor{deepblue}{rgb}{0,0,0.5}
\definecolor{deepred}{rgb}{0.6,0,0}
\definecolor{deepgreen}{rgb}{0,0.5,0}
\definecolor{darkgray}{rgb}{0.5,0.5,0.5}


\usepackage{listings}

\DeclareFixedFont{\ttb}{T1}{txtt}{bx}{n}{9} 
\DeclareFixedFont{\ttm}{T1}{txtt}{m}{n}{9}  

\lstset{            
	mathescape=true,
	frame=tb,                       
	aboveskip=0.5cm,        
	belowskip=0.5cm,	
	showstringspaces=false
}

\lstdefinestyle{python_ppl}{
	language=Python,
	basicstyle=\ttm,
	otherkeywords={let, to},          
	keywordstyle=\ttb\color{deepblue},
	emph={Gauss,N,condition, =:=, observe, sample, score, normal, gp_sample},     
	emphstyle=\ttb\color{deepred},    
	stringstyle=\color{deepgreen}  
}

\lstdefinelanguage{CustomML}{
	keywords={match, with, rec, true, false, fun, return, let, in, if, then, else, type, val, module, sig, end, ref, struct},
	keywordstyle=\color{deepblue}\bfseries,
	ndkeywords={ref},
	ndkeywordstyle=\color{darkgray}\bfseries,
	identifierstyle=\color{black},
	sensitive=false,
	comment=[l]{//},
	morecomment=[s]{/*}{*/},
	commentstyle=\color{darkgray}\ttfamily,
	stringstyle=\color{red}\ttfamily,
	morestring=[b]',
	morestring=[b]"
}

\lstdefinestyle{ml_ppl}{
	language=CustomML,
	basicstyle=\ttm,
	otherkeywords={},          
	keywordstyle=\ttb\color{deepblue},
	emph={Gauss,N,condition, =:=, observe, sample, score, normal, gp_sample},     
	emphstyle=\ttb\color{deepred},    
	stringstyle=\color{deepgreen}
}

\lstdefinelanguage{JavaScript}{
	keywords={typeof, new, true, false, catch, function, return, null, catch, switch, var, if, in, while, do, else, case, break},
	keywordstyle=\color{blue}\bfseries,
	ndkeywords={class, export, boolean, throw, implements, import, this},
	ndkeywordstyle=\color{darkgray}\bfseries,
	identifierstyle=\color{black},
	sensitive=false,
	comment=[l]{//},
	morecomment=[s]{/*}{*/},
	commentstyle=\color{darkgray}\ttfamily,
	stringstyle=\color{red}\ttfamily,
	morestring=[b]',
	morestring=[b]"
}

\lstdefinestyle{webppl}{
	language=JavaScript,
	basicstyle=\ttm,
	otherkeywords={let},            
	keywordstyle=\ttb\color{deepblue},
	emph={flip, condition, factor, sample, normal, score, observe},
	emphstyle=\ttb\color{deepred},    
	stringstyle=\color{deepgreen}
}

\lstdefinestyle{prolog}{
	language=Prolog,
	basicstyle=\ttm,         
	keywordstyle=\ttb\color{deepblue},
	emphstyle=\ttb\color{deepred},    
	stringstyle=\color{deepgreen}, 
}

\lstdefinestyle{funprolog}{
	language=Prolog,
	basicstyle=\ttm,
	otherkeywords={let, in, return},            
	keywordstyle=\ttb\color{deepblue},
	emphstyle=\ttb\color{deepred},    
	stringstyle=\color{deepgreen}, 
}

\newcommand*{\mlstinline}[1]{\text{\lstinline|#1|}}
\newcommand*{\code}[1]{\lstinline|#1|}



\providecommand{\C}{}
\providecommand{\G}{}
\newcommand{\fv}{\mathsf{fv}} 
\newcommand{\cat}{\mathbb} 
\renewcommand{\C}{\cat C} 
\renewcommand{\G}{\mathcal G} 
\renewcommand{\d}{\mathrm d} 

\newcommand{\catname}{\mathsf} 
\newcommand{\set}{\catname{Set}}

\newcommand{\sbs}{\catname{Sbs}}

\newcommand{\finstoch}{\catname{FinStoch}}
\newcommand{\finprojstoch}{\catname{FinProjStoch}}
\newcommand{\finsubstoch}{\catname{FinSubStoch}}
\newcommand{\borelstoch}{\catname{BorelStoch}}

\newcommand{\gauss}{\catname{Gauss}}

\newcommand{\rel}{\catname{Rel}}


\newcommand{\del}{\mathrm{del}}
\newcommand{\cpy}{\mathrm{copy}}

\newcommand{\kl}{\catname{Kl}}

\newcommand{\kerto}{\leadsto} 


\newcommand{\N}{\mathcal{N}}

\newcommand{\id}{\mathsf{id}}

\newcommand{\swap}{\mathsf{swap}}
\newcommand{\ev}{\mathrm{ev}}
\newcommand{\supp}{\mathsf{supp}}

\newcommand{\R}{\mathbb R}


\newcommand{\s}{\,|\,}

\newcommand{\sem}[1]{{\llbracket #1 \rrbracket}}

\newcommand{\letin}[2]{\mathsf{let}\,#1\,=\,#2\,\mathsf{in}\,}
\newcommand{\return}{\mathrm{return}\,}

\newcommand{\true}{\mathrm{true}}
\newcommand{\false}{\mathrm{false}}


\newcommand{\ite}[3]{\mathsf{if}~#1~\mathsf{then}~#2~\mathsf{else}~#3}

\renewcommand{\det}{\mathrm{det}}
\newcommand{\cond}{\catname{Cond}}
\newcommand{\obs}{\catname{Obs}}

\newcommand{\col}{\mathrm{col}}

\newcommand{\type}[1]{\mathsf{#1}}

\newcommand{\tunit}{\type{unit}}

\newcommand{\Id}{\mathrm{Id}}

\newcommand{\obsto}{\leadsto}
\newcommand{\red}{\vartriangleright}
\newcommand{\rv}{\mathsf{R}}
\newcommand{\unit}{\mathsf{I}}

\newcommand{\normal}{\mathsf{normal}}
\newcommand{\bernoulli}{\mathsf{bernoulli}}

\newcommand{\eq}{\mathrel{\scalebox{0.4}[1]{${=}$}{:}{\scalebox{0.4}[1]{${=}$}}}}
\newcommand{\eqo}{{:}\scalebox{0.5}[1]{${=}$}\,}

\newcommand{\defeq}{\stackrel{\text{def}}=}

\newcommand{\ppi}{\mathbf{P}}
\newcommand{\ppisl}{\mathbf{PSL}}

\newcommand{\norm}{\mathsf{normalize}}

\newcommand{\subd}{D_{\leq 1}}

\newcommand{\ct}[1]{\underline{#1}}

\begin{document}

\title{Probabilistic Programming with Exact Conditions}

\author{Dario Stein}
\email{dario.stein@ru.nl}
\affiliation{%
  \institution{Radboud University Nijmegen}
  \streetaddress{Erasmusplein 1}
  \city{Nijmegen}
  \country{The Netherlands}
}

\author{Sam Staton}
\affiliation{%
  \institution{University of Oxford}
  \streetaddress{Parks Road, OX1 3QD}
  \city{Oxford}
  \country{United Kingdom}}
\email{sam.staton@cs.ox.ac.uk}

\renewcommand{\shortauthors}{Stein and Staton}

\begin{abstract}
We spell out the paradigm of \emph{exact conditioning} as an intuitive and powerful way of conditioning on observations in probabilistic programs. This is contrasted with likelihood-based \emph{scoring} known from languages such as \textsc{Stan}. We study exact conditioning in the cases of discrete and Gaussian probability, presenting prototypical languages for each case and giving semantics to them. We make use of categorical probability (namely Markov and CD categories) to give a general account of exact conditioning which avoids limits and measure theory, instead focusing on restructuring dataflow and program equations. The correspondence between such categories and a class of programming languages is made precise by defining the internal language of a CD category.
\end{abstract}

\begin{CCSXML}
	<ccs2012>
	<concept>
	<concept_id>10003752.10010124.10010131.10010133</concept_id>
	<concept_desc>Theory of computation~Denotational semantics</concept_desc>
	<concept_significance>500</concept_significance>
	</concept>
	<concept>
	<concept_id>10003752.10010124.10010131.10010137</concept_id>
	<concept_desc>Theory of computation~Categorical semantics</concept_desc>
	<concept_significance>500</concept_significance>
	</concept>
	<concept>
	<concept_id>10003752.10003753.10003757</concept_id>
	<concept_desc>Theory of computation~Probabilistic computation</concept_desc>
	<concept_significance>500</concept_significance>
	</concept>
	</ccs2012>
\end{CCSXML}

\ccsdesc[500]{Theory of computation~Denotational semantics}
\ccsdesc[500]{Theory of computation~Categorical semantics}
\ccsdesc[500]{Theory of computation~Probabilistic computation}


\maketitle

\section{Introduction}\label{sec:intro} \lstset{style=python_ppl}

Probabilistic programming is a programming paradigm that uses code to formulate generative statistical models and perform inference on them~\cite{van_de_meent:ppl_introduction, probmods2}. Techniques from programming language theory can be used to understand modelling assumptions such as conditional independence, as well as enable optimizations that improve the efficiency of various inference algorithms (e.g. \cite{staton:commutative_semantics}). We'll also elaborate on this application in \Cref{sec:extendedintro}.

There are two different styles of conditioning on data in a probabilistic program: \emph{scoring} and \emph{exact conditioning}. \emph{Scoring} features constructs to re-weight the current execution trace of the probabilistic program with a given likelihood. By contrast, \emph{exact conditioning} focuses on a primitive operation $E_1 \eq E_2$ which signifies that expressions $E_1$ and $E_2$ shall be conditioned to be exactly equal. A prototypical exact conditioning program looks as follows, where we infer some underlying value \lstinline|x| from a noisy measurement \lstinline|y| (see also \Cref{sec:extendedintro}):

\begin{lstlisting}
x = normal($\mu$=50, $\sigma$=10) # prior
y = normal($\mu$=x, $\sigma$=5)     # noisy measurement
y =:= 40               # make exact observation
\end{lstlisting}

Variants of exact conditioning are available in different frameworks: In \textsc{Hakaru} \citep{shan_ramsey}, certain exact conditioning queries can be addressed using symbolic disintegration, but $(\eq)$ is not a first-class construct in Hakaru. \textsc{Infer.NET} \citep{InferNET18} does allow exact conditioning on variables and employs an approximate inference algorithm to solve the resulting queries (e.g. \citep{infernet_tutorial}). The intended formal meaning of exact conditioning is however far from obvious when continuous distributions such as Gaussians are involved (in our example, the observation \lstinline|y == 40| has probability zero). The goal of this article is to rigorously spell out the exact conditioning paradigm, give semantics to it and analyze its properties.

We note that, even in this simple example, the use of exact conditioning is intuitive and allows the programmer to cleanly decouple the generative model from the data observation stage.
As the example makes clear, exact conditioning lends itself to logical reasoning about programs. For example, after conditioning $s \eq t$, the expressions $s$ and $t$ are known to be equal and can be interchanged.

As we will show, exact conditioning also enjoys good formal properties which allow us to simplify programs compositionally.
Among the desired properties are the following: Program lines can be reordered as long as dataflow is respected. That is, the \emph{commutativity equation} remains valid for programs with conditioning
	\begin{equation}
	\begin{array}{l}
	\letin x u \\
	\letin y v t
	\end{array}
	\equiv
	\begin{array}{l}
	\letin y v \\
	\letin x u t
	\end{array} \label{eqn:initcommutativity}
	\end{equation}
	where $x$ not free in $v$ and $y$ not free in $u$. We have a \emph{substitution law}: if $t\eq u$ appears in a program, then later occurrences of $t$ may be replaced by $u$.
	\begin{equation}(t\eq u);v[t/x]\quad \equiv
	\quad (t\eq u);v[u/x]
	\label{eqn:substlong}
	\end{equation}
 As a special base case, if we condition a normal variable on a constant $\underline c$, then the variable is simply \emph{initialized} to this value
	\begin{equation}\letin x {\normal()} {(x\eq \underline c);t} \quad \equiv
	\quad t[\underline c/x]
	\label{eqn:initlong}
	\end{equation}
We substantiate these claims further and give examples of applications of these laws in our extended introduction (\Cref{sec:extendedintro}). \\

\noindent In order to formally study exact conditioning, we focus on two concrete fragments of probabilistic computation
\begin{enumerate}
\item finite probability, which deals with finite sets and discrete distributions
\item Gaussian probability, which deals with multivariate Gaussian (normal) distributions and affine-linear maps
\end{enumerate}
We give probabilistic languages for each fragment and extend them with an exact conditioning construct. The language for finite probability is well-known, while the Gaussian language is novel. We give its formal description and operational semantics in \Cref{sec:gaussian_language}. \\

\textsc{Methods:} Our goal is to give denotational semantics to these languages, and prove that the desired properties from \cref{eqn:initcommutativity,eqn:substlong,eqn:initlong} hold. We wish to do this in a abstract way, that relies as little as possible on the particular details of finite or Gaussian probability, but instead treats them uniformly and is open to generalization. \emph{Categorical probability theory} is such an abstract language of mathematical models of probability, which allows us to discuss the relevant notions such as determinism, independence and conditioning in a uniform way. We connect categorical probability theory to probabilistic programming using the following Curry-Howard style correspondence
\begin{enumerate}
\item probabilistic programs are the \emph{internal languages} of categorical probability theories; program terms $t$ can be interpreted as morphisms $\sem{t}$ in these categories
\item for every probabilistic language, its \emph{syntactic category} is a categorical model of probability theory. objects are types and morphisms are terms of the language
\end{enumerate}
We prove a particular version of the correspondence in \Cref{sec:cd-calc}: The CD-calculus is the internal language of CD categories, a widespread model of categorical probability theory \Cref{sec:cd_cats}. This language, which resembles a first-order OCaml, serves as a meta-language of which all other languages discussed in this article (finite or Gaussian, with or without conditioning) will be particular instances. 

The Curry-Howard correspondence gives us \emph{three equivalent formalisms} for describing probabilistic models (see \Cref{fig:composition})
\begin{enumerate}
\item terms in a probabilistic language
\item morphisms in a categorical model
\item \emph{string diagrams}, which are a well-known graphical notation to describe compositions of morphisms in an intuitive way \cite{selinger:graphical}
\end{enumerate}  
We will frequently convert back and forth between the different formalisms for convenience, conciseness and to emphasize different mathematical or programming intuitions. We expect some familiarity of the reader for translating string-diagrammatic and algebraic categorical notation, though we will give a brief reminder in the introduction of \Cref{sec:synth_conditioning}. The correspondence between string diagrams and program terms is a novel technical contribution of this article. It is formally proved in \Cref{sec:cd-calc}, and we will aid the reader by spelling out examples in both programming and categorical terms. A reader not primarily interested in category theory may still view our string diagram manipulations as a concise graphical way of encoding program transformations. 

\begin{figure}[h]
	\centering
	\begin{subfigure}[b]{0.3\textwidth}
		\[ \input{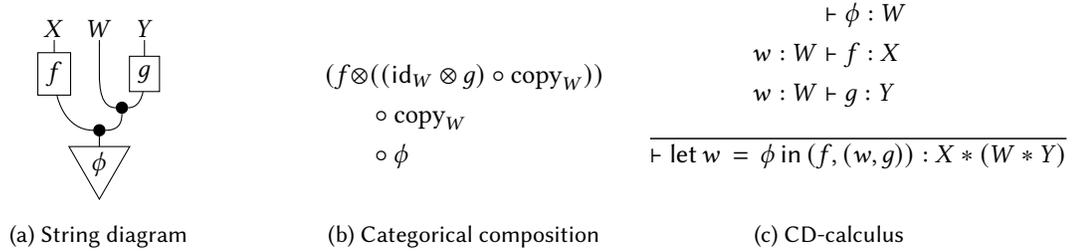} \]
		\caption{String diagram}
		\label{fig:stringdiag}
	\end{subfigure}
	\begin{subfigure}[b]{0.3\textwidth}
		\begin{align*}
			(f \otimes &((\id_W \otimes g) \circ \cpy_W)) \\ & \circ \cpy_W \\ &\circ \phi 
		\end{align*}
		\vspace{0.0cm}
		\caption{Categorical composition}
		\label{fig:categorical}
	\end{subfigure} 
	\begin{subfigure}[b]{0.3\textwidth}
		\begin{align*}
			&\vdash \phi : W \\
			w : W &\vdash f : X \\
			w : W &\vdash g : Y 
		\end{align*}
		\begin{align*}
			\overline{\vdash \letin w \phi (f,(w,g)) : X \ast (W \ast Y)}
		\end{align*} 
		\vspace{0.0cm}
		\caption{CD-calculus}
		\label{fig:cdcalc}
	\end{subfigure}
	\caption{Different formalisms for composition}\label{fig:composition}
\end{figure}

\noindent To connect our development to more traditional probabilistic notions, we consider the formalism of directed graphical models (Bayesian networks). For example, the three expressions of \Cref{fig:composition} can all be understood as encoding the generative structure of the following Bayesian network
\begin{equation} \begin{tikzpicture}[scale=0.6]
	\begin{pgfonlayer}{nodelayer}
		\node [draw=black, shape=circle] (0) at (-2, 0) {$X$};
		\node [draw=black, shape=circle] (1) at (2, 0) {$Y$};
		\node [draw=black, shape=circle] (2) at (0, -1.75) {$W$};
	\end{pgfonlayer}
	\begin{pgfonlayer}{edgelayer}
		\draw [style=arrow] (2) to (0);
		\draw [style=arrow] (2) to (1);
	\end{pgfonlayer}
\end{tikzpicture}
 \label{eq:graphicalmodel} \end{equation}
This Bayesian network indicates that there are random variables valued in spaces $W$, $X$ and $Y$, with the given conditional independence structure, but this conditional independence means that we can describe the situation by giving the distribution $\phi$ of the random variable valued in $W$, together with the distributions of the random variables valued in $X$ and in $Y$, which both depend on the random choice for $W$, via $f$ and~$g$.

A systematic comparison of graphical models and string diagrams is given in \cite{fong}. We note that Bayesian networks are a strictly weaker formalism than the other three, i.e. not every string diagram or probabilistic program can be obtained from a Bayesian network.

\paragraph{Contributions: }
We build up this categorical machinery to first understand the semantics of probabilistic languages \emph{without conditioning}. We then use the same framework to extend the language with an exact conditioning operator. On the side of categorical semantics, this amounts to extending a suitable category $\C$ (for conditioning-free computation) with effects $X \to I$ for conditioning on observations. We call the extended category $\cond(\C)$. Constructing this category (\Cref{sec:cond}) and proving its desirable properties is the central contribution of this article. 

Importantly, the $\cond$ construction makes no mention of measure theory, densities or limits, which usually feature in discussions of conditional probability, but is defined purely in terms of the categorical structure of $\C$. It is closely tied to reasoning in terms of program transformations: The Cond construction can be understood as giving normal forms for straight-line programs with exact observations, modulo contextual equivalence
\[ x : X \vdash \letin {(y,k) : Y \otimes K} {h} {(k \eqo o); \return y \quad : Y} \]
Our definition of \emph{abstract inference problems} in \Cref{sec:abstract_cond} follows that intuition. The desired properties of exact conditioning can be proved purely abstractly for the $\cond$ construction. Our type-theoretic approach to conditioning will also help addressing counterintuitive behavior such as Borel's paradox (\Cref{sec:outlook_paradoxes}). \\

We give concrete descriptions of the $\cond$ construction applied to our main examples of discrete and Gaussian probability. This means analyzing contextual equivalence for our example languages in detail: For finite probability, we obtain substochastic kernels modulo \emph{automatic renormalization} (\Cref{sec:finprojstoch}). This fully characterizes contextual equivalence for straight-line inference with discrete probability, and refines the semantics using the subdistribution monad. Our discussion reveals interesting connections between the admissibility of automatic normalization and the expressibility of branching in probabilistic programs (\Cref{sec:straightline}). For the Gaussian language we give a concrete analysis by proving that the denotational semantics is fully abstract, in \Cref{sec:denotational}. 

\paragraph{Outline: }

We give an overview over the structure of the article: 

\begin{itemize}
\item To demonstrate the strengths and intricacies of the exact conditioning approach, we will follow up this introduction with an extended example (\Cref{sec:extendedintro}) elaborating the noisy measurement example, and demonstrate the power of program transformations and compositional reasoning for a Gaussian random walk example. 
	
\item In \Cref{sec:gaussian_language}, we formally introduce the Gaussian language and its operational semantics.

\item In \Cref{sec:synth_conditioning}, we review string diagrams and categorical probability theory using CD- and Markov categories. We then introduce the CD-calculus (\Cref{sec:cd_cats}) and prove it to be internal language of CD categories, which gives us the central correspondence of probabilistic programs and string diagrams. In \Cref{sec:abstract_cond}, we use the categorical notions to develop an abstract theory of inference problems in Markov categories.

\item In \Cref{sec:cond}, we present the $\cond$ construction, which is the centerpiece of this article. We prove the well-definedness of its construction (\Cref{thm:functoriality}) and verify the desired laws for conditioning in full generality (\Cref{sec:laws_cond}).

\item In \Cref{sec:finstoch}, we return to analyze the $\cond$ construction in detail for our two example settings. This is tantamount to studying contextual equivalence for our exact conditioning languages: In \Cref{sec:denotational}, we show that the denotational semantics for the Gaussian language is fully abstract. In \Cref{sec:finprojstoch}, we conduct a similar analysis for finite probability, arriving at an explicit characterization of $\cond(\mathsf{FinStoch})$. Our discussion reveals interesting connections between the admissibility of automatic normalization and the availability of branching in probabilistic programs (\Cref{sec:straightline}).
\end{itemize}

\noindent \emph{Note.} The starting point for this article is our paper in the Proceedings of LICS 2021~\cite{2021compositional}, where we introduced the Gaussian language, and used the $\cond$ construction to prove a full abstraction result. For this invited journal submission, we expand upon \cite{2021compositional} with greater detail and background throughout Sections~\ref{sec:intro}-\ref{sec:abstract_cond}, and expose the $\cond$ construction in a self-contained manner with an emphasis on program equations and graphical reasoning in \Cref{sec:cond}. This submission also incorporates otherwise unpublished material that is part of the first author's DPhil thesis~\cite{dariothesis}, such as the formal presentation of the CD calculus. The treatment of finite probability in \Cref{sec:finstoch} and the recognition of the special role of branching in \Cref{sec:straightline} are entirely novel contributions of this article. A Python implementation of the Gaussian Language is available under \cite{gaussianinfer}. 

\subsection{Extended Discussion about Exact Conditioning}\label{sec:extendedintro}

We proceed with an extended discussion on the differences between the scoring and exact conditioning paradigms, and the strengths and difficulties related to exact conditioning. This discussion uses an informal Python-like language, and is not technically essential for the rest of the paper. The later sections of the article are fully formal. \pagebreak

\textbf{Exact Conditioning versus Scoring: } In the introduction, we considered an noisy measurement example: Our \emph{prior} assumption about the distribution of some quantity $X$ is that it is normally distributed with mean $\mu=50$ and standard deviation $\sigma=10$. We only have access to a noisy measurement $Y$, which itself has standard deviation $5$, and observe a value of $Y=40$. Conditioned on that observation, the \emph{posterior} distribution over $X$ is now $\N(42,\sigma)$ with $\sigma = \sqrt{20} \approx 4.47$. 

In probabilistic programming with scoring, the primitive \lstinline|score(r)| re-weights the current execution trace of the probabilistic program with a \emph{score} or \emph{likelihood} $\mlstinline{r} \in \mathbb R_{> 0}$. A derived operation \lstinline|observe(x,D)| expresses an observation of a value $x$ from some distribution $D$ by scoring with the density $r = \mathsf{pdf}_D(x)$. The scoring implementation of the noisy measurement example therefore looks like this:
\begin{lstlisting}
x = normal(50, 10)  # prior
observe(40, normal(x,5))  # observation
\end{lstlisting}
The idea of Monte Carlo simulation is to run the program many times, picking different values for \lstinline|x| from the normal distribution, but preferring runs with a high likelihood. This makes execution traces more likely whose value of \mlstinline{x} lies closer to $40$.
Scoring constructs are widely available in popular probabilistic languages such as \textsc{Stan} \citep{stan} or \textsc{WebPPL} \citep{dippl}. Scoring with likelihoods from $\{0,1\}$ is sometimes called a \emph{hard constraint}, as opposed to more general \emph{soft constraints}. The prototypical way of performing inference on scoring programs is by likelihood-weighted importance sampling. Hard constraints turn this into mere rejection sampling, because likelihood-zero traces are discarded entirely. Replacing hard constraints by equivalent soft ones can thus be beneficial for inference efficiency. 

Exact conditions are strictly more powerful than scoring, because we can express \lstinline|observe(x,D)| in terms of conditioning on a freshly generated sample as \lstinline|let y = sample(D) in y =:= x|. On the other hand, not every exact conditioning program can be expressed in terms of scoring:

In the special case of discrete probability, we can express an exact condition $E_1 \eq E_2$ by the hard constraint \lstinline|score(if $E_1$ == $E_2$ then 1 else 0)| without issue. This causes an execution trace to be discarded whenever the condition is not met. This encoding is no longer viable for continuous distributions such as Gaussians: For example, the program
\begin{lstlisting}
x = normal(0,1); x =:= 40
\end{lstlisting}
should return \lstinline|x=40| deterministically, because $x$ is conditioned to have that value. On the other hand, the following hard constraint
\begin{lstlisting}
x = normal(0,1); score(if x == 40 then 1 else 0)
\end{lstlisting}
will \emph{reject every execution trace}, because the probability that \lstinline|x==40| is true equals zero. It is important to distinguish exact conditioning $(\eq)$ from the boolean equality test $(==)$. This distinction is crucial to making sense of apparent paradoxes such as Borel's paradox (\Cref{sec:outlook_paradoxes}). \\

\textbf{Compositional Reasoning about Conditions: } To elaborate the power of reasoning compositionally about conditioning programs, we consider the example of a simple Gaussian random walk with 100 steps, together with a table \lstinline|obs| of exact observations (\Cref{fig:rw}). A straightforward implementation would be to first generate the entire random walk, and then condition on the observations
\begin{lstlisting}
# generative model
for i in range(1,100):
  y[i] = y[i-1] + normal(0,1)
# observations
for j in obs:
  y[j] =:= obs[j]
\end{lstlisting}

The same program is more complicated to express without exact conditioning: Using soft conditions, the observations would need to be known at the time of generation and \lstinline|observe| commands need to be issued in-place, breaking the decoupling between the model and the data. 

On the other hand, rewriting the original model in such a way may improve the efficiency of inference. We can verify such a transformation using compositional reasoning: As we will show, it is consistent to reorder program lines as long as the dataflow is respected \eqref{eqn:initcommutativity}, so the random walk program is equivalent to the following version with interleaved observations
\begin{lstlisting}
for i in range(1,100):
  y[i] = y[i-1] + normal(0,1)
  if i in obs: y[i] =:= obs[i]
\end{lstlisting}
In the observation branch, we can now use initialization principle \eqref{eqn:initlong} to set \lstinline|y[i]| to its target value directly as \lstinline|y[i] = obs[i]|. The remaining condition becomes \lstinline|(y[i] - y[i-1]) =:= normal(0,1)| so we obtain 
\begin{lstlisting}
for i in range(1,100):
  if i in obs:
    y[i] = obs[i]
    (y[i] - y[i-1]) =:= normal(0,1) 
  else:
    y[i] = y[i-1] + normal(0,1)
\end{lstlisting}
In this version of the program, all exact conditions can now be replaced by \lstinline|observe| statements:
\begin{lstlisting}
for i in range(1,100):
  if i in obs:
    y[i] = obs[i]
    observe(y[i] - y[i-1] , normal(0,1))
  else:
    y[i] = y[i-1] + normal(0,1)
\end{lstlisting}
We can run this resulting program directly using a Monte Carlo simulation in Stan or WebPPL. \\

\begin{figure}[h]
	\centering
	\includegraphics[width=0.40\textwidth]{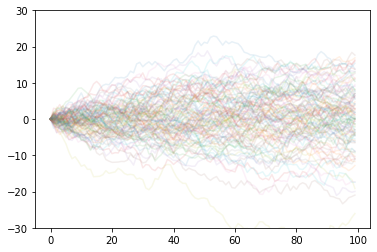}
	\includegraphics[width=0.40\textwidth]{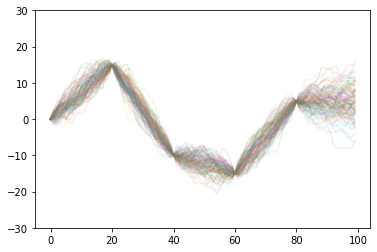}
	\caption{Gaussian random walk (left) and conditioned posterior (right) with four exact observations at $i=20,40,60,80$\label{fig:rw}}
\end{figure}

In \Cref{sec:gaussian_language}, we formalize the operational semantics of our Gaussian language, in which there are two key commands: drawing from a standard normal distribution ($\normal()$) and exact conditioning $(\eq)$. The operational semantics is defined in terms of configurations $(t,\psi)$ where $t$ is a program and $\psi$ is a state, which here is a Gaussian distribution. Each call to $\normal()$ introduces a new dimension into the state~$\psi$, and conditioning $(\eq)$ alters the state~$\psi$, using a canonical form of conditioning for Gaussian distributions (\Cref{sec:recap_gauss}).

In our first version of the random walk example, the operational semantics will first build up the prior distribution shown on the left in \Cref{fig:rw}, and then the second part of the program will condition to yield a distribution as shown on the right. But for the other programs above, the conditioning will be interleaved in the building of the model.

In stateful programming languages, composition of programs is often complicated and local transformations are difficult to reason about. This is what makes programs transformations like the ones we used powerful and nontrivial to verify.


\section{A Language for Gaussian Probability}
\label{sec:gaussian_language}
In this section, after an overview of the mathematics of Gaussian probability (\Cref{sec:recap_gauss}), we formally introduce a typed language (\Cref{sec:types}) for Gaussian probability and exact conditioning, and provide an operational semantics for it (\Cref{sec:opsem}). 
Our operational semantics is straightforward, in that it maintains a symbolic description of the distribution over all latent variables as the program runs, expressed as a covariance matrix. Thus the aim of this section is to formally describe language that we are studying in this paper. 

Looking beyond this section, the aim of the further sections of this paper is to address that issue that, as usual, the simple operational semantics here is intensional and non-compositional. It is intensional in that if two different programs actually behave in the same way, that might be very unclear from the operational semantics; it is non-compositional in that the role of running subprograms is hidden in the overall run of the operational semantics. The aim of the remainder of the paper, then, is to establish an equational and denotational framework for exact conditioning.  

The language in this section is focused on Gaussian probability, for concreteness, but to understand the equational framework it will be helpful in future sections to move to a general setting (\Cref{sec:cd-calc} and \Cref{sec:cond} for the general case without and with exact conditioning respectively). Once this general framework is established, we are able to offer a denotational explanation of exact conditioning, which specializes to this Gaussian language (\Cref{sec:denotational}). 

\subsection{Recap of Gaussian Probability}
\label{sec:recap_gauss}
We briefly recall \emph{Gaussian probability}, by which we mean the treatment of multivariate Gaussian distributions and affine-linear maps (e.g. \citep{lauritzen}). A \emph{Gaussian distribution} is the law of a random vector $X \in \R^n$ of the form $X=AZ + \mu$ where $A \in \R^{n \times m}$, $\mu \in \R^n$ are not random but the vector $Z$ is a \emph{multivariate standard normal} random vector. That is, its components $Z_1, \ldots, Z_m \sim \mathcal N(0,1)$ are independent and standard normally distributed, i.e.~with the following probability density function:
\[ \varphi(x) = \frac{1}{\sqrt{2\pi}} e^{-\frac 1 2 x^2} \]
(Formally, this is a density is regarded with respect to the Lebesgue measure, see Section~\ref{app:meas}, but a high-school knowledge of probability density is sufficient for this section.)
The distribution of $X$ is fully characterized by its \emph{mean} $\mu$ and the positive semidefinite \emph{covariance matrix} $\Sigma=AA^T$. Conversely, for any $\mu$ and positive semidefinite matrix $\Sigma$ there is a unique Gaussian distribution of that mean and covariance denoted $\mathcal N(\mu, \Sigma)$. The vector $X$ takes values precisely in the affine subspace $S=\mu + \col(\Sigma)$ where $\col(\Sigma)$ denotes the column space of $\Sigma$. We call $S$ the \emph{support} of the distribution. We note that while it is common to consider Gaussian distributions with nonsingular (positive definite) covariance matrix, it is convenient to allow the more general positive semidefinite case here, even including vanishing covariance. Natural programming constructs such as the copying of variables results in a singular covariance, and setting $\Sigma = 0$ lets us treat deterministic computation as a special case of probabilistic one. On the flipside, singular covariance requires us to carefully consider supports in our theory of conditioning. \\

Gaussian probability defines a small convenient fragment of probability theory, with the following properties:
\begin{itemize}
\item Affine transformations of Gaussians remain Gaussian. That is, if an affine map $f$ is written as $f(x)=Ax+b$ and $X$ is a random vector with mean $\mu$ and covariance $\Sigma$, then $f(X)$ has mean $A\mu + b$ and covariance $A\Sigma A^T$. This operation is called the \emph{pushforward} of distributions and is denoted $f_*$, i.e.
\begin{equation} f_*(\mathcal N(\mu,\Sigma)) = \mathcal N(A\mu + b, A\Sigma A^T) \label{def:pushforward} \end{equation}
\item Conditional distributions of Gaussians are themselves Gaussian. If we decompose an $(m+n)$-dimensional Gaussian vector $X \sim \mathcal N(\mu, \Sigma)$ into components $X_1, X_2$ with
\begin{equation*}
X = \begin{pmatrix} X_1 \\ X_2 \end{pmatrix}, \mu = \begin{pmatrix} \mu_1 \\ \mu_2 \end{pmatrix}, \Sigma = \begin{pmatrix} 
\Sigma_{11} & \Sigma_{12} \\ \Sigma_{21} & \Sigma_{22}
\end{pmatrix}\text{ where } \Sigma_{21} = \Sigma_{12}^T
\end{equation*}
there is a well-known explicit formula (e.g. \citep[3.13]{eaton}) for the conditional distribution $X_1|(X_2 = a)$ of $X_1$ conditional on $X_2=a$ for $a \in \supp(X_2)$. Namely $X_1|(X_2 = a) \sim \mathcal N(\mu',\Sigma')$ where
\begin{equation}
\mu' = \mu_1 + \Sigma_{12}\Sigma_{22}^{-}(a-\mu_2) \quad \Sigma' = \Sigma_{11} - \Sigma_{12}\Sigma_{22}^-\Sigma_{21}
\label{eq:conjugacy_formula}
\end{equation}
and $\Sigma_{22}^-$ is any \emph{generalized inverse} of $\Sigma_{22}$ .
\end{itemize}
We elaborate on formula \eqref{eq:conjugacy_formula} a bit: A generalized inverse of an $(m \times n)$-matrix $M$ is an $(n \times m)$-matrix $M^-$ such that $MM^-M = M$. Such inverses can be shown to always exist, but they need not be unique. If $M$ is invertible, its unique generalized inverse is $M^{-1}$. The posterior covariance matrix $\Sigma' =  \Sigma_{11} - \Sigma_{12}\Sigma_{22}^-\Sigma_{21}$ is also known as \emph{Schur complement} and is independent of the choice of generalized inverse. The matrix $\Sigma_{12}\Sigma_{22}^{-}$ appearing in the calculation of $\mu'$ \emph{does} depend on the choice of $\Sigma_{22}^-$. However, it takes uniquely defined values on the subspace $\col(\Sigma_{22})$. Therefore, formula \eqref{eq:conjugacy_formula} is only well-defined if the observation $a$ lies in the support of $X_2$. This caveat is mirrored in our categorical treatment of \Cref{sec:abstract_cond}, where conditionals are only unique on supports. A popular choice of generalized inverse is the Moore-Penrose pseudoinverse, which has connections to least squares optimization. For a detailed discussion of these concepts, we refer to \citep[Section~1.6]{zhang_schur}. 

The formula for conditional probability becomes particular simple if we condition on a single real-valued component of a vector: Let $X \sim \mathcal N(\mu, \Sigma)$ and let $Z = uX$ for some $u \in \R^{n \times 1}$, then the covariance of $(X,Z)$, regarded as a random $(n+1)$-vector, decomposes as \[\begin{pmatrix} 
    \Sigma & \Sigma u^T \\ u\Sigma^T & \sigma_{22}
  \end{pmatrix}
\qquad \text{where $\sigma_{22} = u\Sigma u^T$}\]
and the conditional distribution of $X|(Z=a)$ is $\N(\mu',\Sigma')$ with
\begin{equation}
\mu' = \mu + \frac{a-u\mu}{\sigma_{22}}\Sigma u^T, \quad \Sigma' = \Sigma - \frac 1 {\sigma_{22}} \Sigma u^Tu \Sigma \label{eq:conjugacy_simple}
\end{equation}
whenever $\sigma_{22} > 0$. If $\sigma_{22} = 0$ and $u\mu = a$, the condition is tautologously $0=0$ and we have $\mu' = \mu$, $\Sigma'=\Sigma$. Otherwise, $a \notin \supp(Z)$, and the conditioning problem has no well-defined solution.

\begin{example}
	\label{ex:og_gaussian_diagonal}
	Let $X,Y \sim \mathcal N(0,1)$ be independent and $Z=X-Y$. The joint distribution of $(X,Y,Z)$ is $\N(\vec 0,\Sigma)$ with covariance matrix
   \[ \Sigma = \begin{pmatrix} 1 & 0 & 1 \\ 0 & 1 & -1 \\ 1 & -1 & 2 \end{pmatrix} \]
	By \eqref{eq:conjugacy_simple}, the conditional distribution of $(X,Y)$ given $Z=0$ has the following covariance matrix
	\[ \Sigma' = \begin{pmatrix} 1 & 0  \\ 0 & 1 \end{pmatrix} - \frac 1 2 \begin{pmatrix} 1 \\ -1 \end{pmatrix} \cdot \begin{pmatrix} 1 & -1 \end{pmatrix} = \begin{pmatrix} 0.5 & 0.5 \\ 0.5 & 0.5 \end{pmatrix} \]
	The posterior distribution is thus equivalent to the model
	\[ X \sim \mathcal N(0,0.5), Y=X \]
        with one univariate Normal distribution having mean $0$ and variance $0.5$. 
      \end{example}

\paragraph{Borel's paradox}
Borel's paradox is an important subtlety that occurs when conditioning on the equality of random variables $X=Y$. The original formulation involves conditioning a uniform point on a sphere to lie on a great circle, but we will use Borel's paradox to refer to any situation where conditioning on equivalent equations leads to different outcomes (e.g. \citep{shan_ramsey}). For example, if instead of the condition $X-Y=0$ in  \Cref{ex:og_gaussian_diagonal} we had chosen the seemingly equivalent equations $X/Y=1$ or even $[X=Y]=1$ (using Iverson bracket notation), we would have obtained different posteriors:

\begin{example}\label{ex:borels_paradox}
If $X, Y \sim \mathcal N(0,1)$, then conditioned on $(X/Y=1)$, the variable $X$ can be shown to have density $|x|e^{-x^2}$ \citep{expect_the_unexpected}. Under the boolean condition $[X=Y]=1$, the inference problem has no solution because the model $X, Y \sim \mathcal N(0,1), Z = [X=Y]$ is measure-theoretically equal to $X, Y \sim \mathcal N(0,1), Z = 0$ (since independent Gaussian random variables are almost surely different), and conditioning on $0=1$ is inconsistent.
\end{example}

We will address Borel's paradox and posit that a careful type-theoretic phrasing (\Cref{sec:abstract_cond}) helps alleviate its seemingly paradoxical nature (\Cref{sec:outlook_paradoxes}). 

\subsection{Types and Terms of the Gaussian language}
\label{sec:types}
We now describe a language for Gaussian probability and conditioning. The core language resembles first-order OCaml with a construct $\normal()$ to sample from a standard Gaussian, and conditioning denoted as $(\eq)$. Types $\tau$ are generated from a basic type $\rv$ denoting \emph{real number} or \emph{random variable}, pair types and unit type $I$. 
\[ \tau ::= \rv \s \unit \s \tau \ast \tau \]
Terms of the language are
\begin{align*}
e ::= x &\s e + e \s \alpha \cdot e \s \underline{\beta} \s (e,e) \s () \\
&\s \letin x e e \s \letin {(x,y)} e e \\
&\s \normal() \s e \eq e 
\end{align*}
where $\alpha,\beta$ range over real numbers. Typing judgements are
\[ \infer{\Gamma, x : \tau, \Gamma' \vdash x : \tau}{}  
\qquad
\infer{\Gamma \vdash () : \unit}{}
\qquad
\infer{\Gamma \vdash (s,t) : \sigma \ast \tau}{\Gamma \vdash s : \sigma \quad \Gamma \vdash t : \tau}  
\]
\[ \infer{\Gamma \vdash s + t : \rv}{\Gamma \vdash s : \rv \quad \Gamma \vdash t : \rv} 
\qquad
\infer{\Gamma \vdash \alpha \cdot t : \rv}{\Gamma \vdash t : \rv}
\qquad
\infer{\Gamma \vdash \underline{\beta} : \rv}{}
\]
\[ \infer{\Gamma \vdash \normal() : \rv}{} 
\qquad
\infer{\Gamma \vdash (s \eq t) : \unit}{\Gamma \vdash s : \rv \quad \Gamma \vdash t : \rv}
\]
\[ \infer{\Gamma \vdash \letin x s t : \tau}{\Gamma \vdash s : \sigma \quad \Gamma, x : \sigma \vdash t : \tau}
\]
\[ 
\infer{\Gamma \vdash \letin {(x,y)} s t : \tau}{\Gamma \vdash s : \sigma \ast \sigma' \quad \Gamma, x : \sigma, y : \sigma' \vdash t : \tau}
\]

In Section~\ref{sec:cd-calc} we will introduce the general CD-calculus, and our Gaussian language is an instance of this\footnote{in the CD-calculus, we use projection maps rather than pattern-matching $\mathsf{let}$, but those constructs are interdefinable}, with base type $\rv$ and signature
\begin{align}
 (+) : \rv \ast \rv \to \rv, \quad \alpha \cdot (-) : \rv \to \rv, \quad \underline \beta : \unit \to \rv, \quad \normal : \unit \to \rv, \quad (\eq) : \rv \ast \rv \to \unit \label{eq:sig_gauss}
\end{align}
This will give us a clear path to denotational semantics: In \Cref{sec:denotational}, we will indeed identify our language as the internal language of an appropriate CD category with an exact conditioning morphism. 

We use standard syntactic sugar for sequencing $s;t$. We identify the type $\rv^n = \rv \ast (\rv \ast \ldots )$ with vectors $x=(x_1,\ldots,x_n$), and write matrix-vector multiplication $A \cdot x$ in an informal manner. For $\sigma \in \R$ and $e : \rv$, we define $\normal(x,\sigma^2) \equiv x + \sigma \cdot \normal()$. More generally, for a covariance matrix $\Sigma$ and $x : \rv^n$, we write $\normal(x, \Sigma) = x + A\cdot (\normal(), \ldots, \normal())$ where $A$ is any matrix such that $\Sigma = AA^T$. By the simple nature of the typing rules, we can identify any context and type with $\rv^n$ for suitable $n$.

For example, referring to \Cref{ex:og_gaussian_diagonal}, the tuple $(X,Y,Z)$ can be written in our language as
\[\letin {(x,y)} {(\normal(),\normal())} {(x,y,x-y)}
\]
The full example with conditioning can be written
\[\letin {(x,y,z)}{(\letin {(x,y)} {(\normal(),\normal())} {(x,y,x-y)})} {z\eq 0;(x,y)}
\]
This program is contextually equivalent (\Cref{def:ctxequiv}) to
\[\letin {x} {\sqrt{0.5} * \normal()} {\letin y x {(x,y)}}
\]

\subsection{Operational Semantics}
\label{sec:opsem}

Informally, our operational semantics works as follows: calling $\normal()$ allocates a latent random variable, and a prior distribution over all latent variables is maintained; calling $(\eq)$ updates this prior by symbolic inference according to the formula \eqref{eq:conjugacy_formula}.

Formally, we define a reduction relation over configurations.
A \emph{configuration} is either a dedicated failure symbol $\bot$ or a pair \[(e,\psi)\] where $\psi$ is a Gaussian distribution on $\R^r$ (i.e.~a mean vector and covariance matrix) and $z_1 : \rv, \ldots, z_r : \rv \vdash e$.
Thus a running term~$e$ may have free variables; these stand for dimensions in a given multivariate Gaussian distribution~$\psi$, reminiscent of a closure in a higher-order language. 

To define a reduction relation, we first introduce values, redexes and reduction contexts. 
\emph{Values} $v,w$ and \emph{redexes}~$\rho$ are defined as 
\begin{align*}
v,w &::= x \s (v,w) \s v + w \s \alpha \cdot v \s \underline{\beta} \s () \\
\rho &::= \normal() \s v \eq w \s \letin x v e \s \letin {(x,y)} v e
\end{align*}
A \emph{reduction context} $C$ with hole $[-]$ is of the form
\begin{align*}
C ::= [-] &\s (C,e) \s (v,C) \s C + e \s v + C \s \alpha \cdot C \s C \eq e \s v \eq C \\
&\s \letin x C e \s \letin {(x,y)} C e 
\end{align*}
Perhaps the only thing to note is that, in keeping with the call-by-value tradition of most probabilistic programming languages, we do reduce before a let assignment, i.e.~$\letin x C e$ is a reduction context.
It is easy to show by induction that every term is either a value or decomposes uniquely as $C[\rho]$. The latent variables $(z_1\dots z_r)$ are taken from a distinct supply of variable names $\{z_i : i \in \mathbb N \}$. We first define reduction on redexes (1--3), and then reduction contexts (4):
\begin{enumerate}
	\item Calling $\normal()$ allocates a fresh latent variable and adds an independent dimension to the prior
          \[ (\normal(),\psi) \red (z_{\mathrm{r+1}}, \psi \otimes \mathcal N(0,1))
          \]
          where $\psi$ is a Gaussian distribution on $\R^r$ with mean $\mu$ and covariance $\Sigma$;
          here $(\psi \otimes \mathcal N(0,1))$ is notation for the Gaussian distribution on $\R^{r+1}$
          with mean $(\mu,0)$ and covariance matrix $(\begin{smallmatrix}\Sigma & 0 \\ 0 & 1\end{smallmatrix})$. 
	\item To define conditioning, note that every value \mbox{$z_1 : \rv , \ldots, z_r : \rv \vdash v : R$} defines an affine function $\R^r \to \R$. In order to reduce $(v \eq w, \psi)$, we consider an independent random variable $X \sim \psi$ and define the auxiliary real random variable $Z = v(X)-w(X)$. If $0$ lies in the support of $Z$, we denote by $\psi|_{v=w}$ the outcome of conditioning $X$ on $Z=0$, and reduce
	\[ (v \eq w, \psi) \red ((), \psi|_{v=w}) \]
	Otherwise $(v \eq w, \psi) \red \bot$, indicating that the inference problem has no solution. To be completely precise, since $v$ and $w$ are affine, the function $(v-w)$ is affine too, so we can find $u \in \R^{1 \times r}$ and $b \in \R$ such that $(v(x)-w(x)) = ux + b$ and then we condition on $uX=-b$ using formula \eqref{eq:conjugacy_simple}.
	\item Let bindings are standard
	\begin{align*} 
	(\letin x v e, \psi) &\red (e[v/x], \psi) \\
	(\letin {(x,y)} {(v,w)} e, \psi) &\red (e[v/x,w/y], \psi)
	\end{align*}
	
	\item Lastly, under reduction contexts, if $(\rho,\psi) \red (e,\psi')$ we define $(C[\rho],\psi) \red (C[e], \psi')$. If $(\rho,\psi) \red \bot$ then $(C[\rho], \psi) \red \bot$.
\end{enumerate}

\begin{proposition}\label{prop:evaluation}
  For every closed typed program $\vdash e : \tau$ either there is a unique value configuration $(v,\psi)$ such that
  $(e,())\red^* (v,\psi)$ with $v$ a value, or $(e,())\red^*\bot$. (Here $()$ is the unique $0$-dimensional Gaussian distribution, and $\red^*$ is the reflexive transitive closure of $\red$.)
\end{proposition}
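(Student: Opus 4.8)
The plan is to prove \Cref{prop:evaluation} by the standard operational-semantics recipe for a first-order language: unique decomposition, progress, type preservation, determinism, and termination, after which the statement is immediate. The text already records that every term is either a value or uniquely of the form $C[\rho]$; to this I would add a \emph{canonical forms} lemma — in a context $z_1 : \rv, \dots, z_r : \rv$ of only $\rv$-typed variables, a value of a pair type $\sigma \ast \tau$ must literally be a pair $(v_1,v_2)$, since all other value formers produce type $\rv$ or $\unit$ — together with the observation (which makes the conditioning rule well defined) that every value $z_1 : \rv, \dots, z_r : \rv \vdash v : \rv$ denotes an affine map $\R^r \to \R$, so the decomposition $v(x) - w(x) = ux + b$ has a unique $u,b$. \textbf{Progress} then says: a well-typed non-value configuration $(e,\psi)$ always reduces, to a configuration or to $\bot$. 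The point to check is that the active redex $\rho$ in $e = C[\rho]$ is itself typed in the same flat context $z_1 : \rv, \dots, z_r : \rv$ — which holds because reduction contexts never place the hole under a binder (the holes in $\letin x C e$ and $\letin{(x,y)} C e$ lie in the bound-expression position, outside the scope of the binder) — so canonical forms applies, the $\letin{(x,y)} v {e'}$ redex genuinely has $v$ a pair, and the remaining redex shapes reduce unconditionally.

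Next I would establish \textbf{preservation}: if $z_1 : \rv, \dots, z_r : \rv \vdash e$, $\psi$ is Gaussian on $\R^r$, and $(e,\psi) \red (e',\psi')$ with $\psi'$ on $\R^{r'}$, then $z_1 : \rv, \dots, z_{r'} : \rv \vdash e'$, with $r' = r+1$ for the $\normal()$ rule and $r' = r$ otherwise. This is a routine case analysis on the reduction, using a substitution lemma for the two $\mathsf{let}$ rules and congruence of typing under reduction contexts; it shows that the invariant that $\dim \psi$ matches the free variables of the term is preserved along any run from $(e,())$, which begins with $r=0$ since $e$ is closed. \textbf{Determinism} is then immediate from the uniqueness of the decomposition $C[\rho]$ and the fact that each redex reduces deterministically: $\normal()$ via the fixed latent-variable supply $z_{r+1}$; $v \eq w$ via \eqref{eq:conjugacy_simple}, which is single-valued because the Schur complement is independent of the chosen generalized inverse (\Cref{sec:recap_gauss}) and because whether $0 \in \supp(Z)$ is a definite alternative; and the $\mathsf{let}$ rules patently. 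Hence $\red$ is a partial function on configurations, total on non-value ones by progress.

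The only step needing an idea is \textbf{termination}. I would take the measure $\mu(e)$ counting the occurrences in $e$ of the three constructs $\normal()$, $(\eq)$ and $\mathsf{let}$. The crucial point is that \emph{values contain none of the three} — inspect the value grammar — so substituting a value for a variable leaves $\mu$ unchanged, even though it may duplicate subterms; consequently each base rule strictly decreases $\mu$ by one ($\normal() \mapsto z_{r+1}$, $v \eq w \mapsto ()$, $\letin x v {e'} \mapsto e'[v/x]$, and $\letin{(x,y)}{(v,w)}{e'} \mapsto e'[v/x, w/y]$ each delete one occurrence and introduce none), and the reduction-context rule inherits this since $\mu(C[\rho]) - \mu(C[e']) = \mu(\rho) - \mu(e') > 0$. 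Thus $\mu$ is a strictly decreasing $\mathbb N$-valued measure, every run from $(e,())$ is finite, and by progress its terminal configuration is either a value configuration $(v,\psi)$ or $\bot$; by determinism the whole run, hence this endpoint, is unique — which is exactly \Cref{prop:evaluation}. I expect this termination measure, and in particular the remark that values are inert for $\mu$ so that $\mathsf{let}$-substitution cannot inflate it, to be the only non-bookkeeping ingredient of the proof.
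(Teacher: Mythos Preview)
Your proposal is correct and follows exactly the approach sketched in the paper's proof notes: determinism, progress, and preservation by induction on typing derivations, and termination by the measure counting occurrences of $\normal()$, $(\eq)$, and $\mathsf{let}$. Your additional observations—the canonical forms lemma, that the hole of a reduction context is never under a binder, and that values contain none of the three counted constructs so substitution does not inflate the measure—are precisely the details the paper leaves implicit, so you have simply unpacked the same argument.
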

\begin{proof}[Proof notes]
  First, the $\red$ relation is deterministic, and satisfies progress and type preservation lemmas.
  These are all shown by induction on typing derivations.
  Next, all reduction sequences terminate, because the number of steps is bounded by the number of symbols from $\{\normal,  \eq, \letin{}{}\}$ in an expression.
  \end{proof}
We consider the observable result of this execution either failure, or the pushforward distribution $v_*\psi$ on $\R^n$, as this distribution could be sampled from empirically.

\begin{example}
	The program
	\[ \letin {(x,y)} {(\normal(),\normal())} x \eq y; x+y \]
	reduces to $(z_1 + z_2, \psi)$ where 
	\[ \psi = \mathcal N\left(\begin{pmatrix} 0 \\ 0 \end{pmatrix}, \begin{pmatrix} 0.5 & 0.5 \\ 0.5 & 0.5 \end{pmatrix} \right) \]
	The observable outcome of the run is the pushforward distribution $(1\, 1)_*\psi = \mathcal N(0,2)$ on $\R$.
\end{example}

One goal of this paper is to study properties of this language compositionally, and abstractly, without relying on any specific properties of Gaussians. From the operational semantics, we can define an extensional and compositional contextual equivalence.

\begin{definition}\label{def:ctxequiv}
	We say $\Gamma \vdash e_1, e_2 : \tau$ are \emph{contextually equivalent}, written $e_1 \approx e_2$, if for all closed contexts $K[-]$ and $i,j \in \{1,2\}$
	\begin{enumerate}
		\item when $(K[e_i], !) \red^* (v_i, \psi_i)$ then $(K[e_j], !) \red^* (v_j, \psi_j)$ and $(v_i)_*\psi_i = (v_j)_*\psi_j$
		\item when $(K[e_i], !) \red^* \bot$ then $(K[e_j],!) \red^* \bot$
	\end{enumerate} 
	Here $(v_i)_*\psi_i$ denotes the pushforward distribution as defined in \eqref{def:pushforward}.
\end{definition}

We later study contextual equivalence by developing a denotational semantics for the Gaussian language (\Cref{sec:denotational}), and proving it fully abstract (\Cref{prop:full_abstraction}). We also note nothing conceptually limits the language in this section to only Gaussians. We are running with this example for concreteness, but any family of distributions which can be sampled and conditioned can be used. So we will take care to establish properties of the semantics in a general setting.

\section{Categorical Semantics for Probabilistic Programming}\label{sec:synth_conditioning}

This section introduces denotational semantics for the probabilistic language in \Cref{sec:gaussian_language}, at first without conditioning. We will construct this semantics using a general and reusable strategy, namely
\begin{enumerate}
\item understanding the structure of mathematical models of probability theory, and
\item introducing a metalanguage (the CD calculus) which acts as the internal language of these categorical models
\end{enumerate} 
The Gaussian language will be a particular instance of the metalanguage, and its semantics will take place in the Markov category $\gauss$ (defined in \Cref{def:gauss}). The language for finite probability will have semantics in the Markov category $\finstoch$ (\Cref{def:finstoch}).

This is a general section on the relationship between probabilistic languages and categorical models. We will recall symmetric monoidal categories and string diagrams, and define CD and Markov categories, which are the relevant categorical models of probability theory. We then introduce the CD calculus as a metalanguage for first-order probabilistic programs, and prove the desired correspondence
\begin{enumerate}
\item every program term can be interpreted as a morphisms in a CD category
\item every string diagram can be encoded as a program term
\item every valid manipulation of string diagrams translates to a provable equality of programs, and vice versa
\end{enumerate}
This is formally stated and proven in terms of internal language and syntactic category (\Cref{prop:cd_soundness,def:syncat}). This unlocks the equivalent formalisms discussed in the introduction (\Cref{fig:composition}). These formalisms form the basis of our study of conditioning in the later \cref{sec:abstract_cond,sec:cond}.

\paragraph{Monoidal Categories and String Diagrams}
Recall that a category comprises objects and morphisms between the objects. In this context, the objects are to be thought of as generalized spaces, and the morphisms as stochastic functions. That is, a morphism $X\to Y$ is thought of roughly as something that takes an argument from~$X$ and makes some random choices before returning an element of~$Y$. (The reader familiar with probability theory can regard them as probability kernels, or parameterized measures). In particular, our categories will be monoidal, which means we have the following constructions (see e.g.~\cite{maclane} for full definitions):
\begin{itemize}
\item Monoidal structure: There is a distinguished object $I$ (thought of as the one-point space),
  and for any objects $X$ and $Y$ there is an object $X\otimes Y$ (thought of as the product space). The morphisms $I\to X$ are thought of as probability distributions on $X$, and so the morphisms $Y\to X$ can be thought of as distributions on $X$ with parameters from $Y$. 
\item Categorical composition: for any morphisms $f\colon X\to Y$ and $g\colon Y\to Z$, there is a composite morphism $g \circ f:X\to Z$. This represents running stochastic computation \emph{in sequence}. Mathematically, in several of the examples (Section~\ref{sec:examples-cd}), composition is calculated by a form of integration or summation (integrating over $Y$), and so we can regard this composition as an abstract account of integration. We will often abbreviate the composition $g \circ f$ as $gf$. In the category of sets and functions, morphisms $x : 1 \to X$ can be identified with elements $x \in X$. If $f : X \to Y$ is a function, the composite $f \circ x = fx$ agrees with function application $f(x)$. This notation will be convenient in \Cref{sec:abstract_cond} when applied to deterministic states. 
\item Monoidal composition: for any morphisms $f\colon A\to B$ and $g\colon X\to Y$, there is a composite morphism $(f\otimes g): A\otimes X\to B\otimes Y$. This can be understood as running stochastic computations \emph{in parallel}: informally, given a pair $(a,x)$ we randomly produce $b$ from $a$ and $y$ from $x$, returning the pair $(b,y)$. Mathematically, in the examples, (Section~\ref{sec:examples-cd}), this monoidal composition amounts to product probabilities. In the measure theoretic example (Def.~\ref{def:borelstoch}), the interchange law of the tensor $\otimes$ encodes Fubini's theorem (see \Cref{app:meas}).
\end{itemize}

The notation for composing morphisms (with $\circ$ and $\otimes$) can quickly become cryptic, and it is important to find good notations. \emph{String diagrams} are one such widely used and intuitive notation (e.g.~\cite{joyal_street_1,selinger:graphical}). In the string diagram, the objects of the category become wires and morphisms are boxes; sequential composition ($\circ$) is simply joining wires together, and monoidal composition ($\otimes$) is juxtaposition. States ($I \to X$) and effects ($X \to I$) have special notations, because the unit object $I$ need not be drawn in string diagrams (see \Cref{fig:stringdiagrams} for an overview).

\begin{figure}[h]
\[ \input{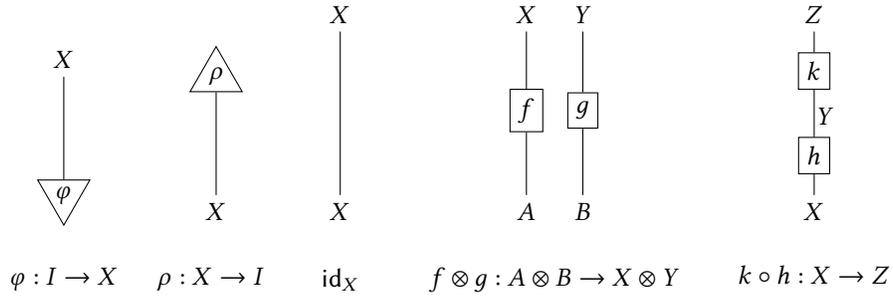} \]
\caption{Overview of string diagram notation}
\label{fig:stringdiagrams}
\end{figure}

We read such diagrams from bottom to top. Even without categorical machinery, string diagrams carry an intuitive meaning as dataflow diagrams. We can manipulate string diagrams using intuitive rules (sliding around of boxes, formally: ``planar isotopy''), and the axioms of monoidal categories ensure that all such manipulations result in the same overall composite. Even more, if there are different ways of parsing a string diagram into a sequence of composites, then all these ways have provably equal meanings. This result is known as \emph{coherence} for monoidal categories. We demonstrate this for the so-called \emph{interchange law} $(g \circ f) \otimes (g' \circ f') = (g \otimes g') \circ (f \otimes f')$ which is derivable from the axioms of monoidal categories. It corresponds to the unambiguous reading of the string diagram

\[ \input{categorical_probability/interchange.tikz}  \]

\paragraph{Symmetry, copying and discarding}

Monoidal categories are an important general concept, but to discuss probabilities in this axiomatic way, it is appropriate to require further structure, resulting in copy-delete categories and Markov categories as we discuss in Section~\ref{sec:cd_cats}. The crucial operations are swapping, copying and discarding of data, which we depict as follows 

\[ \input{categorical_probability/comon_structures_simple.tikz} \]

The same building blocks are necessary to interpret terms of a programming language in which variables can be used without linearity restriction. For example, the term-in-context $x,y,z \vdash (y,x,x)$ requires us to swap and copy $x$, as well as discard $z$. The same could be achieved by first copying $x$ and then swapping both copies with $y$. These two diagrams are provably equal from the axioms of CD categories,

\[ \input{categorical_probability/copyswap.tikz} \]

In defining the CD-calculus (\Cref{sec:cd-calc}), we extend the syntax with let-binding, tuples, projections and function calls. The calculus has equivalent expressive power to string diagrams, as showcased in the introduction (\Cref{fig:composition}). In \Cref{sec:cd-semantics}, we give the systematic method to associate to every term $t$ a string diagam $\sem{t}$. In particular, manipulations of the string diagrams correspond to valid program transformations. For example, using the definition of the semantics of let-bindings in \Cref{fig:cd_semantics}, the validity of the commutativity equation \eqref{eqn:initcommutativity} corresponds to the following string diagram manipulation (where $x \notin \fv(t), y \notin \fv(s)$)

\[ \input{categorical_probability/letexample.tikz} \]

In these opening remarks so far, we have started from categorical notions and moved to notations. It is also helpful to follow the opposite route: we can regard notation as primal -- be it string diagrams, graphical models, or probabilistic programs. Now to decide whether two composites are equal (two diagrams, two programs) we regard them as morphisms in a category (called the \emph{syntactic category}), and ask whether they are equal there (\Cref{sec:cd-semantics}). In this way, category theory is merely a formalism for compositional theories of equality, which are useful from a foundational perspective as well as for understanding valid program manipulations. We axiomatize this equality from the programming perspective in Section~\ref{cd:equations}. \\

We will now formally introduce CD categories and Markov categories and define the relevant mathematical models for finite and Gaussian probability, before returning to the CD calculus in \Cref{sec:cd-calc}.

\subsection{Copy-Delete Categories and Markov Categories}\label{sec:cd_cats}

We will recall two closely related notions, namely:

\begin{description}
\item[Markov categories] to model purely stochastic computation \citep{fritz} and
\item[CD categories] which model potentially \emph{unnormalized} stochastic computation \citep{cho_jacobs}.
\end{description}
Markov categories have been used to formalize various theorems of probability and statistics, such as sufficient statistics (Fisher-Neyman, Basu, Bahadur) \citep{fritz}, stochastic dominance (Blackwell-Sherman-Stein) \citep{representable_markov} and zero-one laws \citep{fritz:zero_one}. Both types of category admit a convenient graphical language in terms of string diagrams. We will also define the CD-calculus, which is the internal language of CD categories and reminiscent of first-order OCaml  (\Cref{sec:cd-calc}). This makes CD categories a natural foundation for probabilistic programming. Denotational semantics will be given to our Gaussian language by recognizing it as the internal language of an appropriate CD category. 

\begin{definition}[CD category, {\cite{cho_jacobs}}]
  \label{def:cd}
	A \emph{copy-delete category} (CD category) is a symmetric monoidal category $(\C, \otimes, I)$ where every object $X$ is equipped with the structure of a commutative comonoid
	\[ \cpy_X : X \to X \otimes X \quad \del_X : X \to I \]
	graphically depicted as
	\[ \input{categorical_probability/comon_structures.tikz} \]
	satisfying the axioms
	\[ \input{categorical_probability/comonoid1.tikz} 
	\qquad\quad\input{categorical_probability/comonoid2.tikz} \]
	We require that the comonoid structure be compatible with the monoidal structure as follows
	\[ \input{categorical_probability/multiplicativity.tikz} \]
\end{definition}

It is important that $\cpy$ and $\del$ are \emph{not} assumed to be natural; explicitly the equations 
\begin{equation} \input{categorical_probability/non_copyable.tikz} \label{cd:def_copy_disc} \end{equation}
need \emph{not} hold in general. We give special names to situations where they do hold.

\begin{definition}[Copyable, discardable, deterministic]\label{def:copydisc}
	A morphism $f : X \to Y$ is called \emph{copyable} if the first equation of \eqref{cd:def_copy_disc} holds:
	$ \cpy_Y \circ f = (f \otimes f) \circ \cpy_X$.
	A morphism $f : X \to Y$ is called \emph{discardable} if the second equation of \eqref{cd:def_copy_disc} holds: $\del_Y \circ f = \del_X$. A morphism is called \emph{deterministic} if it is copyable and discardable.
\end{definition}

\begin{definition}[Markov category, {\cite{fritz}}]\label{def:markovcat}
	A Markov category is a CD category $\C$ in which the following equivalent properties hold
	\begin{enumerate}
		\item $\C$ is semicartesian, i.e. the unit $I$ is terminal
		\item every morphism is discardable
		\item $\del$ is natural
	\end{enumerate}
\end{definition}

The definitions of CD- and Markov categories encode a significant amount of properties of stochastic computation, as discussed informally at the beginning of Section~\ref{sec:synth_conditioning}. The discardability condition in Markov categories informally means that probabilities sum to $1$, as is the case in the examples (Section~\ref{sec:examples-cd}). 

\emph{Notation:} The presence of explicit copying and discarding maps lets us apply a product-like syntax for CD categories: If $f : A \to X$, $g : A \to Y$, we write a tupling
\[ \langle f,g \rangle \defeq (f \otimes g) \circ \cpy_A \]
and define projection maps
\[ \pi_X : X \otimes Y \to X \quad \pi_Y : X \otimes Y \to Y \]
via discarding. Recall that a \emph{state} in a symmetric monoidal category is a morphism $\psi : I \to X$. An \emph{effect} is a morphism $\rho : X \to I$. Note that by terminality of the unit $I$ in a Markov category, all effects $X \to I$ must be trivial; in CD categories, effects will be of interest. 

In Markov categories, we will furthermore employ the following probabilistic terminology: We call states $\psi : I \to X$ \emph{distributions}, and if $f : A \to X \otimes Y$, we define its \emph{marginal} (of $X$) $f_X : A \to X$ to be $\pi_Xf$. Of course, we generally have $f \neq \langle f_X, f_Y \rangle$ unless $\C$ is cartesian. For every CD category $\C$, the wide subcategory $\C_\det$, which consists of only the deterministic morphisms, is cartesian \cite[Remark 10.13]{fritz}.

\subsection{Examples of CD categories}\label{sec:examples-cd}
In this subsection, we will briefly introduce the relevant examples of Markov and CD categories we will be working with, in particular finite and Gaussian probability. All examples in this subsection are standard material, and for example covered in \cite{fritz}. More mathematical detail and the theory of conditioning is given in \Cref{sec:abstract_cond}. Readers primarily interested in syntax can proceed with \Cref{sec:cd-calc}.

\begin{definition}[{\cite[2.5]{fritz}}]\label{def:finstoch}
The Markov category $\finstoch$ has as objects finite sets $X$, and morphisms $X \to Y$ are probability channels, that is stochastic matrices $p \in [0,1]^{Y \times X}$, which are sometimes written in the notation $p(y|x)$. Composition in $\finstoch$ takes the form of the \emph{Kolmogorov-Chapman equation}
\begin{equation} (pq)(z|x) = \sum_y p(z|y)q(y|x) \label{eq:chapman} \end{equation}
\end{definition}
We modify $\finstoch$ as follows to allow for unnormalized (`sub-stochastic') computation:
\begin{definition}\label{def:substoch}
The CD category $\finsubstoch$ has as objects finite sets $X$, and morphisms $X \to Y$ are \emph{subprobability} channels $p(y|x)$, that is $p(y|x) \in [0,1]$ and for all $x \in X$,
\[ \sum_y p(y|x) \leq 1 \]
Composition is again given by \eqref{eq:chapman}.
\end{definition}

A convenient way to understand composition in these categories is using the theory of monads.
\begin{definition}
The distribution monad $D$ and subdistribution monad $D_{\leq 1}$ on the category of sets are defined as the sets of all (sub)probability distributions on a given set $X$;
\begin{align*}
D(X) &= \{ p : X \to [0,1] \text{ finitely supported } : \sum_x p(x) = 1 \} \\
D_{\leq 1}(X) &= \{ p : X \to [0,1] \text{ finitely supported } : \sum_x p(x) \leq 1 \} 
\end{align*}
The unit of both monads is given by taking Dirac distribution $x \mapsto \delta_x$ with
\[ \delta_x(y) = \begin{cases} 1, &\text{if } x = y \\
0 &\text{if } x \neq y
\end{cases} \]
and the monad multiplication $\mu$ is defined as
\[ \mu(\rho)(x) = \sum_{p} \rho(p) \cdot p(x) \]
\end{definition}
Kleisli composition for these monads recovers the Kolmogorov-Chapman equation \eqref{eq:chapman}. That is, morphisms in $\finstoch(X,Y)$ are Kleisli arrow $X \to D(Y)$ for $D$, and morphisms in $\finsubstoch(X,Y)$ are Kleisli arrows for $X \to D_{\leq 1}(Y)$. The following result shows that Kleisli categories are a general source of CD categories.

\begin{proposition}[{\cite[3.2]{fritz}}]
Let $\C$ be a category with finite products and $T : \C \to \C$ be a strong, commutative monad. Then the Kleisli category $\kl(T)$ is a CD category, which is furthermore Markov if and only if $T$ is \emph{affine}, i.e. $T1 \cong 1$.
\end{proposition}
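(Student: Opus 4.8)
The plan is to obtain the CD structure on $\kl(T)$ by transporting the canonical cartesian structure of $\C$ along the free functor, and then to read off when the unit object becomes terminal. First I would install the monoidal data: on objects put $X\otimes Y\defeq X\times Y$ and $I\defeq 1$, where $1$ is the terminal object of $\C$; on Kleisli morphisms $f\colon X\to TX'$ and $g\colon Y\to TY'$ put $f\otimes g\defeq \mathsf{dst}_{X',Y'}\circ(f\times g)$, where $\mathsf{dst}\colon TX'\times TY'\to T(X'\times Y')$ is the \emph{double strength}, which exists exactly because $T$ is commutative. That $(\kl(T),\otimes,I)$ is a symmetric monoidal category is a classical fact (Kock); I would either cite it or unwind the coherence, interchange and hexagon identities from the strength and commutativity axioms. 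The same axioms also show that the identity-on-objects functor $J\colon\C\to\kl(T)$, $h\mapsto\eta\circ h$, is strict symmetric monoidal, the load-bearing identity being $\mathsf{dst}\circ(\eta\times\eta)=\eta$.

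Next I would build the comonoids from the product structure of $\C$. Since $\C$ has finite products, each object $X$ carries its canonical commutative comonoid structure: the diagonal $\Delta_X\colon X\to X\times X$ with counit the terminal map $\mathord{!}_X\colon X\to 1$, and in a cartesian category these automatically satisfy the compatibility with $\times$ demanded in \Cref{def:cd}. I would then define $\cpy_X\defeq J(\Delta_X)$ and $\del_X\defeq J(\mathord{!}_X)$; since $J$ is strict symmetric monoidal it sends commutative comonoids to commutative comonoids and preserves every equation involved, so $\kl(T)$ is a CD category. This step is essentially bookkeeping, and it is the one place where I would slow down: I want to verify that the string-diagram axioms of \Cref{def:cd} are precisely the cartesian identities for $\Delta$ and $\mathord{!}$ in $\C$ pushed forward through $J$.

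For the Markov clause I would use characterization (1) of \Cref{def:markovcat}, namely that $I$ is terminal. The key observation is that for every object $X$ we have $\kl(T)(X,I)=\C(X,T1)$; hence $I$ is terminal in $\kl(T)$ iff $\C(X,T1)$ is a singleton for all $X$, i.e.\ iff $T1$ is a terminal object of $\C$, and since $\C$ already has terminal object $1$ this is equivalent to $T1\cong 1$ — which is the definition of $T$ being affine. I expect no real difficulty here. The only genuine subtlety across the whole argument is the compatibility bookkeeping in the comonoid step, and even that becomes forced once one knows that $J$ is a strict monoidal functor out of a cartesian category.
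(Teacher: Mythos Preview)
Your argument is correct and is the standard one. Note, however, that the paper does not give its own proof of this proposition: it is stated with a citation to \cite[3.2]{fritz} and used as background. So there is nothing in the paper to compare against beyond the statement itself; your write-up would serve as a complete proof where the paper simply defers to the literature.
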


Again, a morphism in $\finsubstoch(X,Y)$ is a Kleisli arrow $X \to D_{\leq 1}(Y)$ for the subprobability monad on $\set$. \\

We now define the Markov category which captures the Gaussian probability of Section~\ref{sec:recap_gauss}.

\begin{definition}[{{\cite[\S 6]{fritz}}}]\label{def:gauss}
The Markov category $\gauss$ has objects $n \in \mathbb N$, which represent the affine space $\R^n$, and $m \otimes n = m+n$. Morphisms $m \to n$ are tuples $(A,b,\Sigma)$ where $A \in \R^{n \times m}, b \in \R^n$ and $\Sigma \in \R^{n \times n}$ is a positive semidefinite matrix. The tuple represents a stochastic map $f : \R^m \to \R^n$ that is affine-linear, perturbed with multivariate Gaussian noise of covariance $\Sigma$, informally written
	\[ f(x) = Ax + b + \mathcal N(\Sigma) \text{ or } Ax + \mathcal N(b,\Sigma) \]
	Such morphisms compose sequentially and in parallel in the expected way, with noise accumulating independently
	\begin{align*} (A,b,\Sigma) \circ (C,d,\Xi) &= (AC, Ad + b, A\Xi A^T + \Sigma) \\
	(A,b,\Sigma) \otimes (C,d,\Xi) &= \left( 
	\begin{pmatrix} A & 0 \\ 0 & C \end{pmatrix},
	\begin{pmatrix} b \\ d \end{pmatrix},
	\begin{pmatrix} \Sigma & 0 \\ 0 & \Xi \end{pmatrix} \right)
	\end{align*}
Copy- and discard structure are given using the affine maps
	\[ \cpy_n : \R^n \to \R^{n+n}, x \mapsto (x,x) \quad \del_n : \R^n \to \R^0, x \mapsto () \]
\end{definition}

Note that we explicitly allow zero covariance in the definition of $\gauss$. This way, the category is able to encode deterministic computation as a special case. 

\begin{proposition}
	A morphism $(A,b,\Sigma)$ in $\gauss$ is deterministic (\Cref{def:copydisc}) iff $\Sigma = 0$, i.e. there is no randomness involved.
\end{proposition}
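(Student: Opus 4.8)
The plan is to reduce determinism to copyability and then dispatch copyability by a short block-matrix computation. Since $\gauss$ is a Markov category (\Cref{def:gauss}), every morphism is discardable by \Cref{def:markovcat}, so by \Cref{def:copydisc} a morphism $f = (A,b,\Sigma)\colon m \to n$ is deterministic if and only if it is copyable, i.e. $\cpy_n \circ (A,b,\Sigma) = \bigl((A,b,\Sigma)\otimes(A,b,\Sigma)\bigr) \circ \cpy_m$. I would also note at the outset that the assignment $(A,b,\Sigma)\mapsto$ (its stochastic map) is injective: the mean function $x\mapsto Ax+b$ determines $A$ and $b$, and $\Sigma$ is the covariance, so equality of morphisms in $\gauss$ is exactly equality of the representing triples. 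Hence copyability reduces to an equality of triples.

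Next I would compute both sides using the composition and tensor formulas of \Cref{def:gauss}, writing $\cpy_n$ as the noise-free affine map with matrix $\binom{I_n}{I_n}$, zero translation, and zero covariance. The left-hand side $\cpy_n\circ(A,b,\Sigma)$ evaluates to the triple with matrix $\binom{A}{A}$, translation $\binom{b}{b}$, and covariance the $2n\times 2n$ matrix all four of whose $n\times n$ blocks equal $\Sigma$. The right-hand side $(f\otimes f)\circ\cpy_m$ evaluates to the triple with the same matrix $\binom{A}{A}$ and the same translation $\binom{b}{b}$, but with covariance the block-diagonal matrix whose two diagonal blocks equal $\Sigma$ and whose off-diagonal blocks vanish. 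The matrix and translation parts therefore always agree, and the covariance parts agree precisely when the off-diagonal block $\Sigma$ is zero, i.e. precisely when $\Sigma = 0$. Combining with the first paragraph, $(A,b,\Sigma)$ is deterministic iff $\Sigma = 0$.

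There is no real obstacle here beyond keeping the block-matrix bookkeeping straight; the only conceptual point worth stating explicitly is the injectivity of the triple representation, which is what lets copyability collapse to the single scalar-free equation $\Sigma = 0$. As a sanity check one can observe that the $\Leftarrow$ direction also follows from the general fact cited in \Cref{sec:cd_cats} that $\C_\det$ is cartesian together with the evident identification of noise-free morphisms with genuine affine maps $\R^m\to\R^n$, but the direct computation above proves both implications simultaneously and keeps the argument self-contained.
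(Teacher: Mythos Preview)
Your proposal is correct and follows essentially the same route as the paper: both reduce determinism to copyability (using that $\gauss$ is Markov, so discardability is automatic) and then compare the $2n\times 2n$ covariance matrices of $\cpy_n\circ f$ and $(f\otimes f)\circ\cpy_m$, finding the full block matrix $\begin{psmallmatrix}\Sigma & \Sigma\\ \Sigma & \Sigma\end{psmallmatrix}$ versus the block-diagonal $\begin{psmallmatrix}\Sigma & 0\\ 0 & \Sigma\end{psmallmatrix}$. Your version is more explicit (spelling out the matrix and translation parts and the injectivity of the triple representation), but the argument is the same.
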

\begin{proof}
	Write $f = (A,b,\Sigma)$, then the covariance matrices of $f \circ \cpy$ and $\cpy \circ f$ are
	\[ \begin{pmatrix} \Sigma & 0 \\ 0 & \Sigma \end{pmatrix} \text{ and } \begin{pmatrix} \Sigma & \Sigma \\ \Sigma & \Sigma \end{pmatrix} \]
	respectively. Thus $f$ is copyable iff $\Sigma = 0$. 
\end{proof}

It follows that the deterministic subcategory $\gauss_\det$ is the category $\catname{Aff}$ consisting of the spaces $\R^n$ and affine maps between them. \\

Note that this definition of $\gauss$ involves no measure theory at all; a Gaussian is fully described by its mean and covariance matrix. Measure theory can however be used to build a Markov category that is rather comprehensive in that it includes the previous two examples. We briefly state the definition here and refer to the appendix (\Cref{app:meas}) for details. (In this paper, we will not use this measure-theoretic Markov category in a crucial way, we will only use it in illustrative examples.)

\begin{definition}[{{\cite[\S4]{fritz}}}]\label{def:borelstoch}
The Markov category $\borelstoch$ has as objects standard Borel spaces $X$, and morphisms $X \to Y$ are probability kernels $\Sigma_X \times Y \to [0,1]$. 
\end{definition}	

$\borelstoch$ arises as the Kleisli category of the Giry monad $\G$ on standard Borel spaces \cite{giry}. Both $\finstoch$ and $\gauss$ are subcategories of $\borelstoch$, i.e. there are faithful inclusion functors which preserve all CD structure. 

Lastly, we give an example to show that the formalism of CD categories can not only encompass probabilistic situations but also nondeterminism.
\begin{definition}[{{e.g.~\cite[Ex.~2.6 \& \S8.1]{fritz}}}]
We denote by $\rel$ the CD category of sets $X,Y$ and relations $R \subseteq X \times Y$ between them. We denote by $\rel^+$ the Markov subcategory of sets and left-total relations between them, i.e. $\forall x \in X \exists y \in Y, (x,y) \in R$.
\end{definition}
The two categories are obtained as the Kleisli categories of the powerset monad $\mathcal P : \set \to \set$ and the nonempty powerset monad $\mathcal P^+: \set \to \set$ respectively. The category $\rel^+$ is referred to as $\catname{SetMulti}$ in \cite{fritz}.

\subsection{Internal Languages and Denotational Semantics}
\label{sec:cd-calc}
We present the CD calculus, which is the internal language CD categories. It is reminiscent of the first-order fragment of fine-grained call-by-value or the computational $\lambda$-calculus (e.g.~\cite{moggi:computational_lambdacalculus,moggi:computation_and_monads,levy:fcbv}), but the commutativity of the tensor allow for some convenient simplifications and a concise equational presentation. To this extent it is a novel calculus. 

\begin{definition} A CD signature $\mathfrak S = (\tau,\omega)$ consists of sets $\tau$ of base types and function symbols $\omega$. A \emph{type} is recursively defined by closing the base types under tuple formation
\[ A ::= \tau \s \tunit \s A \ast A \]
Each function symbol $f \in \omega$ is equipped with a unary \emph{arity} of types, written $f : A \to B$. The terms of the CD-calculus are given by
\[ t ::= x \s () \s (t,t) \s \pi_i\,t \s f\,t \s \letin x t t \quad\quad (i=1,2) \]
subject to the typing rules $x_1 : A_1, \ldots, x_n : A_n \vdash t : B$ given in \Cref{fig:cd_types}.
\end{definition}
\begin{figure}[h]
	\centering
	\begin{align*}&
	\infer{\Gamma, x : A, \Gamma' \vdash x : A}{} \qquad
	\infer{\Gamma \vdash () : \tunit}{}
	\qquad \infer{\Gamma \vdash (s,t) : A \ast B}{\Gamma \vdash s : A \quad \Gamma \vdash t : B} \qquad
	\infer[(f : A \to B)]{\Gamma \vdash f\,t : B}{\Gamma \vdash t : A}
        \\[6pt]
	&\infer{\Gamma \vdash \pi_1\,t : A_1}{\Gamma \vdash t : A_1 \ast A_2} \qquad \infer{\Gamma \vdash \pi_2\,t : A_2}{\Gamma \vdash t : A_1 \ast A_2} \qquad
	\infer{\Gamma \vdash \letin x e t : B}{\Gamma, x : A \vdash t : B \quad \Gamma \vdash e : A}
	\end{align*}
	\caption{Typing rules for the CD calculus}
	\label{fig:cd_types}
\end{figure}

We employ some standard syntactic sugar, for example sequencing 
\[ s;t \defeq \letin {x} s t \quad\quad (x \notin \fv(t)) \]
We also define a pattern-matching let as syntactic sugar
\[ (\letin {(x,y)} s t) \defeq (\letin p s \letin x {\pi_1\,p} \letin y {\pi_2\,p} t) \text.\]
Conversely, we can provably recover the projection constructs from this sugar (in a sense made precise by the equational theory in \Cref{cd:equations}):
\[
(\pi_1\,s) = (\letin {(x,y)} s x)\qquad 
(\pi_2\,s) = (\letin {(x,y)} s y) 
\]
We prefer the projections over pattern-matching when presenting the equational theory, because this means one less binding construct.

\subsection{Semantics}
\label{sec:cd-semantics}
We now explain how the CD calculus can be interpreted in CD categories. Types will be interpreted as objects, and terms interpreted as morphisms. Formally, a \emph{model} of signature $(\tau,\omega)$ is a CD category $\C$ together with an assignment of objects $\sem{A} \in \C$ for each basic type and morphisms $\sem{f} : \sem{A} \to \sem{B}$ for each function symbol $f : A \to B$. Here we extend $\sem{-}$ to arbitrary types and contexts by
\[ \sem{\tunit} = I \quad \sem{A_1 \ast A_2} = \sem{A_1} \otimes \sem{A_2} \quad \sem{A_1, \ldots, A_n} = \sem{A_1} \otimes (\cdots \otimes \sem{A_n}) \]

For any model, the interpretation of a term $\Gamma \vdash t : A$ is defined recursively as 
\begin{itemize}
	\item $\sem{x}$ is the discarding map $\sem{\Gamma,A,\Gamma'} \cong \sem{\Gamma} \otimes \sem{A} \otimes \sem{\Gamma'} \to I \otimes \sem{A} \otimes I \cong \sem{A}$
	\item $\sem{()}$ is the discarding map $\del_{\sem{\Gamma}} : \sem{\Gamma} \to I$
	\item $\sem{(s,t)}$ is the map $\sem{\Gamma} \xrightarrow{\cpy_{\sem{\Gamma}}} \sem{\Gamma} \otimes \sem{\Gamma} \xrightarrow{\sem{s} \otimes \sem{t}} \sem{A} \otimes \sem{B} = \sem{A \ast B}$
	\item $\sem{\pi_i t}$ is marginalization $\sem{\Gamma} \xrightarrow{\sem{t}} \sem{A_1} \otimes \sem{A_2} \to \sem{A_i}$
	\item $\sem{f\,t}$ is the composite $\sem{\Gamma} \xrightarrow{\sem{t}} \sem{A} \xrightarrow{\sem{f}} \sem{B}$
	\item $\sem{\letin x e t}$ is given by $\sem{\Gamma} \xrightarrow{\cpy_{\sem{\Gamma}}} \sem{\Gamma} \otimes \sem{\Gamma} \xrightarrow{\id_{\sem{\Gamma}} \otimes \sem{e}} \sem{\Gamma} \otimes \sem{A} \xrightarrow{\sem{t}} \sem{B}$
\end{itemize}

The semantics can be seen as a procedure for translating every term of the CD calculus into a string diagram, as shown in \Cref{fig:cd_semantics}.

\begin{figure}[h]
\[ \input{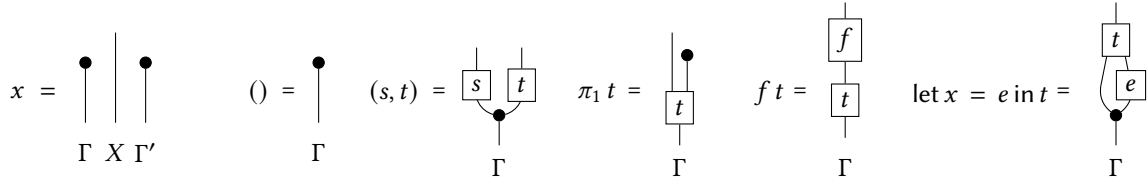} \]
\caption{Translating terms into string diagrams (brackets $\sem{-}$ omitted for readability)} \label{fig:cd_semantics}
\end{figure}

\begin{proposition}[Structural rules for contexts]
	\label{cd:structural}
	The weakening and exchange rules 
	\[ \infer{\Gamma, x : A\vdash t : B}{\Gamma \vdash t : B} \qquad \infer{\Delta,\Gamma \vdash t : B}{\Gamma, \Delta \vdash t : B} \]
	are derivable, and their semantics corresponds to precomposition with the discard and swap morphisms.
\end{proposition}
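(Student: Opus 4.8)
\emph{Plan of attack.} I would prove both rules by induction: derivability by induction on typing derivations, and the semantic identities by a parallel induction on terms. It is cleanest to prove slightly more than stated, namely weakening that inserts $x:A$ at an \emph{arbitrary} position of the context (semantically, precomposition with $\id\otimes\del_{\sem A}\otimes\id$), and exchange that transposes two adjacent context blocks (semantically, precomposition with a symmetry $\swap_{\sem\Delta,\sem\Gamma}$); the stated forms are the special cases where the modification happens at the outermost position, and the generalized forms are exactly what the induction needs in the binding clauses.

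\emph{Derivability.} First I would observe that no typing rule in \Cref{fig:cd_types} constrains the ambient context except the variable rule, which already admits $x:A$ at any position of $\Gamma$. Hence a routine induction on the derivation of $\Gamma\vdash t:B$ yields a derivation of the weakened (resp. exchanged) judgement: at a variable leaf the conclusion is again an instance of the variable rule, and every other rule simply transmits the context modification unchanged from its premises to its conclusion, with binders appended afterwards as before.

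\emph{Semantics.} For the semantic claims I would induct on $t$, using the interpretation clauses of \Cref{sec:cd-semantics} (\Cref{fig:cd_semantics}). The base case (a variable) uses only that $\sem{x}$ is a composite of $\del$'s and coherence isomorphisms, together with the CD-compatibility equations of \Cref{def:cd}: discarding a factor already scheduled to be discarded, or swapping two factors both about to be discarded, is absorbed (instances of $\del_{X\otimes Y}=\del_X\otimes\del_Y$ and of symmetry naturality). For the tuple, projection and function-symbol clauses the extra $\del$ (resp. the extra swap) slides freely past $\otimes$-components and past $\sem f$ by bifunctoriality of $\otimes$ and naturality of the symmetry. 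The let-clause is the only one where $\cpy$ intervenes: there I would expand $\cpy_{\sem\Gamma\otimes\sem A}$ via the compatibility axiom
\[ \cpy_{X\otimes Y}=(\id_X\otimes\swap_{Y,X}\otimes\id_Y)\circ(\cpy_X\otimes\cpy_Y) \]
and then collapse the copied-then-discarded occurrence of $\sem A$ using the counit law $(\del\otimes\id)\circ\cpy=\id$, reducing to $\cpy_{\sem\Gamma}$; after that the inductive hypotheses for the subterms $e$ and $t$ apply, the one for $t$ being needed in the "insert in the middle" form because the binder of the $\mathsf{let}$ is appended after the weakened block.

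\emph{Main obstacle.} The conceptual content is small; the real work is the bookkeeping of where the inserted variable or the transposed block sits inside a nested context, and keeping the coherence and compatibility isomorphisms straight in the let-clause. A convenient way to suppress this, which I would adopt, is to note that all the context-reshuffling morphisms involved ($\del$'s, symmetries, and the $\cpy$'s used to duplicate contexts) are deterministic, hence lie in the cartesian subcategory $\C_\det$ (\cite[Remark 10.13]{fritz}, noted above); in a cartesian category two such morphisms with the same underlying action on coordinates are equal. Thus, after factoring the non-deterministic sub-interpretations $\sem e$ and $\sem f$ out as "middle" tensor factors using bifunctoriality, each inductive step reduces to checking that two deterministic reshuffling maps induce the same function on variable indices, which is immediate. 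This turns the argument into (a) the short derivability induction and (b) a short semantic induction whose only substantive step is the let-clause identity above.
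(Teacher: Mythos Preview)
Your proposal is correct and follows essentially the same approach as the paper, which simply says ``Straightforward induction using the comonoid axioms.'' You have spelled out in detail what that induction looks like, including the necessary generalization to arbitrary context positions and the $\C_\det$ shortcut for the bookkeeping; these are exactly the ingredients the paper's one-line proof is gesturing at.
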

\begin{proof}
	Straightforward induction using the comonoid axioms.
\end{proof}

As an example, we use the CD calculus to give straightforward denotational semantics to the conditioning-free fragment of the Gaussian language in $\gauss$. We notice that this fragment is precisely the CD calculus for the signature $\mathfrak S$ with base type $\rv$ and function symbols
\[(+) : \rv \ast \rv \to \rv \qquad
\beta \cdot (-) : \rv \to \rv \qquad
\underline \beta : \tunit \to \rv \qquad
\normal : \tunit \to \rv \]

The Markov category $\gauss$ models this signature using $\sem{\rv} = 1$ and the obvious interpretations of the function symbols. \\

The goal of \Cref{sec:cond} will be to interpret the full Gaussian language in a CD category $\cond(\gauss)$. That category will need to interpret the additional function symbol $(\eq) : \rv \ast \rv \to \rv$.

\subsection{Equational Theory}\label{cd:equations}
We now give a sound and complete equational theory with respect to CD models. \\

In call-by-value languages, the substitution 
\[ (\letin x e u) \equiv u[e/x] \]
is generally only admissible if $e$ is a \emph{value expression}, that is it does not produce effects. In the CD calculus, another powerful substitution scheme is valid: We can replace $(\letin x e u) \equiv u[e/x]$ whenever $u$ uses $x$ \emph{linearly}, i.e. exactly once, \emph{even if} $e$ is an effectful computation. Using the linear- and value substitution schemes, the theory of the CD calculus can be presented concisely as in \Cref{fig:cd_axioms}. Note that we omit the context of equations when unambiguous and identify bound variables up to $\alpha$-equivalence. Whenever we say ``use'' or ``occurrence'', we mean free use and occurrence, and substitution is always capture-avoiding.  
\begin{figure}[h]\framebox{\begin{minipage}{0.98\linewidth}
  \raggedright
  Congruence laws:
	\begin{gather}
	\text{$\equiv$ is reflexive, symmetric and transitive} \label{cd:equiv} \tag{equiv} \\
	\infer{(\letin x {e_1} e_2) \equiv (\letin x {e_1'} e_2')}{e_1 \equiv e_1' \quad e_2 \equiv e_2'} \label{cd:let_cong} \tag{let.$\xi$}
        \end{gather}
A \emph{value expression} is a term of the form 
	\[ V ::= x \s () \s (V,V) \s \pi_i V \s \letin x V V \]
	The axioms of the CD calculus are:
        \begin{align}
      	(\letin x e t) &\equiv t[e!x] \label{cd:let_lin} \tag{let.lin} \\
	(\letin x V t) &\equiv t[V/x] \label{cd:let_val} \tag{let.val} \\
	\pi_i\,(x_1,x_2) &\equiv x_i \label{cd:pair_beta} \tag{$\ast$.$\beta$} \\
	(\pi_1\,x, \pi_2\,x) &\equiv x \label{cd:pair_eta} \tag{$\ast$.$\eta$} \\
	x &\equiv () \label{cd:unit_eta} \tag{$\tunit$.$\eta$} 
	\end{align}
	where we write $t[e ! x]$ for substituting a unique free occurrence of $x$. For the internal language of \emph{Markov categories}, extend \eqref{cd:let_lin} to all substitutions targeting \emph{at most one} free occurrence of $x$.\end{minipage}}
	\caption{Axioms of the CD-calculus}
	\label{fig:cd_axioms}
\end{figure}

\begin{proposition}[Soundness]\label{prop:cd_soundness}
	Every CD model validates the axioms of the CD calculus. That is if $\Gamma \vdash e_1 \equiv e_2 : A$ then $\sem{e_1} = \sem{e_2} : \sem{\Gamma} \to \sem{A}$.
\end{proposition}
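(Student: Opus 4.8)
The plan is to argue by induction on the derivation of $\Gamma \vdash e_1 \equiv e_2 : A$. The congruence rules are the easy cases: \eqref{cd:equiv} holds because equality of morphisms is reflexive, symmetric and transitive, and \eqref{cd:let_cong} follows directly from the defining clause $\sem{\letin x {e_1} e_2} = \sem{e_2} \circ (\id_{\sem\Gamma} \otimes \sem{e_1}) \circ \cpy_{\sem\Gamma}$: if the two premises are already validated, then so is the conclusion, by functoriality of $\circ$ and bifunctoriality of $\otimes$. The three non-substitution axioms \eqref{cd:pair_beta}, \eqref{cd:pair_eta} and \eqref{cd:unit_eta} I would discharge by unwinding the semantics of $(-,-)$, $\pi_i(-)$ and $()$ into string diagrams: they become, respectively, the counitality law for $\cpy$ (so that $\pi_i$ of a pair is the $i$-th projection), the identity $\langle \pi_1, \pi_2 \rangle = \id$ on a tensor product (a consequence of coassociativity, counitality and symmetry), and triviality of the comonoid on $I$; \Cref{cd:structural} supplies the routing of the ambient context wires in each case.

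The real content is in the two substitution axioms. Both \eqref{cd:let_lin} and \eqref{cd:let_val} reduce, via the defining clause for $\sem{\letin x e t}$, to a single \emph{substitution lemma}: for $\Gamma, x : A \vdash t : B$ and $\Gamma \vdash e : A$,
\[ \sem{t[e/x]} \;=\; \sem{t}\circ\big(\id_{\sem\Gamma}\otimes\sem{e}\big)\circ\cpy_{\sem\Gamma}, \]
provided \emph{either} $x$ occurs exactly once in $t$ (for \eqref{cd:let_lin}) \emph{or} $e$ is a value expression (for \eqref{cd:let_val}); in the Markov-category variant one only needs $x$ to occur at most once. I would prove this by induction on the structure of $t$. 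The value case additionally requires the auxiliary observation that $\sem{V}$ is deterministic in the sense of \Cref{def:copydisc} for every value expression $V$ --- a short separate induction on the grammar of $V$, using that $\cpy$, $\del$, identities and projections are deterministic and that the deterministic morphisms are closed under tupling since $\C_\det$ is a cartesian monoidal subcategory. Determinism of $\sem{V}$ is exactly the property that lets $\sem{V}$ be slid through the $\cpy$ and $\del$ maps that show up when $x$ is used more than once, or not at all, inside $t$; when $x$ occurs linearly no commutation of the substituted morphism with $\cpy$/$\del$ is needed, so an arbitrary effectful $e$ is allowed, and the at-most-once Markov refinement uses in addition naturality of $\del$, i.e. discardability of every morphism.

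I expect the main obstacle to be the bookkeeping in the linear case of the substitution lemma: one must follow the unique free occurrence of $x$ down through the term formers $(s,t)$, $\pi_i\,s$, $f\,s$ and nested $\mathsf{let}$-bindings, and in each case re-route the context object $\sem\Gamma$ through the appropriate copy maps while steering the precomposed factor $(\id_{\sem\Gamma}\otimes\sem{e})\circ\cpy_{\sem\Gamma}$ into precisely the subterm containing $x$ (and discarding the spurious $\sem{A}$-input on the other branches). Diagrammatically this is transparent --- the wire carrying the output of $\sem{e}$ fans straight into the single box that consumes it --- but the algebraic write-up requires some patience with the associativity and symmetry coherence isomorphisms and with the interplay of $\cpy$ and $\del$. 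Once the substitution lemma is established, \eqref{cd:let_lin} and \eqref{cd:let_val} fall out immediately from the definition of $\sem{\letin x e t}$, which completes the induction.
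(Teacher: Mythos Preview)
Your proposal is correct and follows essentially the same approach as the paper: handle the congruence and non-substitution axioms directly, then prove \eqref{cd:let_lin} by structural induction on~$t$ (with cases for variables, pairing, function application/projection, and the two nested-$\mathsf{let}$ situations), and handle \eqref{cd:let_val} via the auxiliary fact that value expressions denote deterministic morphisms. The only cosmetic difference is that you package \eqref{cd:let_lin} and \eqref{cd:let_val} as two instances of a single substitution lemma, whereas the paper treats them in sequence; the underlying string-diagram manipulations are the same.
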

\begin{proof}
	The proofs are straightforward if tedious string diagram manipulations. We showcase the validation of one interesting equation, \eqref{cd:assoc}, here and move the remaining derivations to the appendix (\Cref{app:cdcalc}). Let $\Gamma \vdash e_1 : X_1$, $\Gamma, x_1 : X_1 \vdash e_2 : X_2$ and $\Gamma, x_2 : X_2 \vdash e : Y$. Then showing
	\[ \sem{(\letin {x_1} {e_1} \letin {x_2} {e_2} e^{\mathsf w})} \equiv \sem{(\letin {x_2} {(\letin {x_1} {e_1} e_2)} e)} \]
	translates to the following manipulation of string diagrams
	\begin{equation}
	\input{categorical_probability/eq_assoc.tikz} \label{cd:eq_assoc}
	\end{equation}
	Note that we formally write $e^{\mathsf w}$ to be fully explicit about weakening $e$; its denotation discards the unused $X_1$-wire as per \Cref{cd:structural}.
\end{proof}

The equational theory lets us derive many useful program equations, including commutativity. 

\begin{proposition}\label{prop:all_lambdac_axioms}
	All axioms of the ground $\lambda_c$-calculus \citep[Tables~6,7]{moggi:computation_and_monads} and commutativity are derivable. 
	\begin{align}
	(\letin {x_2} {(\letin {x_1} {e_1} e_2)} e) &\equiv (\letin {x_1} {e_1} \letin {x_2} {e_2} e) \quad x_1 \notin \fv(e) \label{cd:assoc} \tag{assoc} \\
	(\letin {x_1} {e_1} \letin {x_2} {e_2} e) &\equiv (\letin {x_2} {e_2} \letin {x_1} {e_1} e) \quad x_1 \notin \fv(e_2), x_2 \notin \fv(e_1) \label{cd:comm}~\tag{comm} \\
	(\letin x e x) &\equiv e \label{cd:id} \tag{id} \\
	(\letin {x_1} {x_2} e) &\equiv e[x_2/x_1] \label{cd:let_beta} \tag{let.$\beta$} \\
	f e &\equiv (\letin x e f x) \label{cd:let_f} \tag{let.f}\\
	(s,t) &\equiv (\letin x s \letin y t (x,y)) \label{cd:let_ast} \tag{let.$\ast$} 
	\end{align}
	Note that by commutativity \eqref{cd:comm}, the order of evaluation in \eqref{cd:let_ast} does not matter. 
\end{proposition}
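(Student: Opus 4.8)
The plan is to derive the six equations from the axioms of \Cref{fig:cd_axioms} in an order where each derivation may use the previous ones, relying chiefly on \eqref{cd:let_lin}, \eqref{cd:let_val}, the congruence rule \eqref{cd:let_cong}, and --- for anything involving tuples --- \eqref{cd:pair_beta} and \eqref{cd:pair_eta}.

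The linear equations come first and are essentially immediate. Equation \eqref{cd:id} is a single instance of \eqref{cd:let_lin}, since the body $x$ contains $x$ exactly once, so $\letin x e x \equiv x[e!x] = e$; likewise \eqref{cd:let_f} read right-to-left is exactly \eqref{cd:let_lin} with body $f\,x$. Equation \eqref{cd:let_beta} is \eqref{cd:let_val} applied to the value expression $x_2$. For \eqref{cd:let_ast} I would start from the right-hand side $\letin x s \letin y t (x,y)$: the inner let rewrites by \eqref{cd:let_lin} (using that $y$ occurs once in $(x,y)$) to $(x,t)$, and then by \eqref{cd:let_cong} and a second use of \eqref{cd:let_lin} the whole term rewrites to $(s,t)$. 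The same computation with the two bindings in the other order shows $(e_1,e_2) \equiv \letin{y_2}{e_2}\letin{y_1}{e_1}(y_1,y_2)$; I record this now because it is needed for commutativity and, crucially, it follows from \eqref{cd:let_lin} alone rather than from commutativity.

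The substantial case is associativity \eqref{cd:assoc}. Its semantic content is precisely the string-diagram identity \eqref{cd:eq_assoc} already verified inside the proof of \Cref{prop:cd_soundness} --- coassociativity of $\cpy$ together with the interchange law for $\otimes$ --- and I would take that diagram as the template for a syntactic induction. The clean situation is that the bound variable $x_2$ occurs exactly once in $e$: then both sides collapse by \eqref{cd:let_lin} (with \eqref{cd:let_cong} used to rewrite the inner $\letin{x_1}{e_1}e_2$, and a subsidiary induction on the occurrences of $x_1$ in $e_2$) to the common term $e[(e_2[e_1!x_1])!x_2]$. The obstacle is the general case, where $x_2$ is absent from $e$ or occurs several times: the ``exactly one occurrence'' side-condition on \eqref{cd:let_lin} (which in the \emph{CD} setting, unlike the Markov one, cannot be relaxed) forces one to route a multiply-used variable through a tuple/projection pair --- using \eqref{cd:pair_beta}, \eqref{cd:pair_eta} and the just-derived \eqref{cd:let_ast} --- before \eqref{cd:let_lin} applies, and a non-occurring variable has to be re-expressed via a projection. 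Setting up this induction so that it is genuinely well-founded (the naive structural measures on $e$ do not strictly decrease under these rewrites) and tracking the explicit weakening $e^{\mathsf w}$ throughout (cf.\ \Cref{cd:structural}) is the main difficulty, and is why the full derivation belongs in the appendix.

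Commutativity \eqref{cd:comm} then follows from \eqref{cd:assoc} and \eqref{cd:let_ast} with no further serious work. Desugaring the pattern-matching let gives $\letin{(x_1,x_2)}{(e_1,e_2)}e = \letin p {(e_1,e_2)}\letin{x_1}{\pi_1\,p}\letin{x_2}{\pi_2\,p}e$. Replacing $(e_1,e_2)$ by its \eqref{cd:let_ast} unfolding $\letin{y_1}{e_1}\letin{y_2}{e_2}(y_1,y_2)$, pulling the two lets out past $\letin p{(\cdot)}$ by \eqref{cd:assoc}, and then collapsing $p$ --- now bound to the value $(y_1,y_2)$ --- via \eqref{cd:let_val}, \eqref{cd:pair_beta} and \eqref{cd:let_beta} shows $\letin{(x_1,x_2)}{(e_1,e_2)}e \equiv \letin{x_1}{e_1}\letin{x_2}{e_2}e$. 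Running the identical argument with the reversed-order unfolding $\letin{y_2}{e_2}\letin{y_1}{e_1}(y_1,y_2)$ instead shows $\letin{(x_1,x_2)}{(e_1,e_2)}e \equiv \letin{x_2}{e_2}\letin{x_1}{e_1}e$, and transitivity of $\equiv$ yields \eqref{cd:comm}.
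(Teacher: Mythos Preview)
Your treatment of \eqref{cd:id}, \eqref{cd:let_beta}, \eqref{cd:let_f} and \eqref{cd:let_ast} matches the paper's and is correct.

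For \eqref{cd:assoc}, however, you have missed the key idea and are heading into unnecessary difficulty. The paper does not induct on occurrences at all; it uses a one-line trick. Start from the right-hand side and insert a trivial let using \eqref{cd:let_beta} in reverse:
\[
\letin{x_1}{e_1}\letin{x_2}{e_2}e \;\equiv\; \letin{x_1}{e_1}\letin{x_2}{(\letin{x_1}{x_1}e_2)}e.
\]
In the body $\letin{x_2}{(\letin{x_1}{x_1}e_2)}e$, the inner binder shadows all occurrences of $x_1$ inside $e_2$, and $x_1\notin\fv(e)$ by hypothesis, so $x_1$ occurs \emph{exactly once} freely. A single application of \eqref{cd:let_lin} then substitutes $e_1$ for that one occurrence, yielding $\letin{x_2}{(\letin{x_1}{e_1}e_2)}e$. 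No case split on the multiplicity of $x_2$ in $e$, no routing through tuples, no well-foundedness concern. Your attempt was applying \eqref{cd:let_lin} to the wrong variable: the substitution to perform is of $e_1$ for $x_1$, not of anything for $x_2$.

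The same trick gives \eqref{cd:comm} directly: rewrite $e$ as $\letin{x_2}{x_2}e$ so that in $\letin{x_1}{e_1}\letin{x_2}{x_2}e$ the variable $x_2$ occurs exactly once (using $x_2\notin\fv(e_1)$), then apply \eqref{cd:let_lin} to pull $e_2$ in front. Your derivation of \eqref{cd:comm} via \eqref{cd:assoc} and \eqref{cd:let_ast} is correct in outline once \eqref{cd:assoc} is available, but it is considerably longer than the paper's direct argument, and at present it rests on an unfinished proof of \eqref{cd:assoc}.
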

\begin{proof}In the appendix (\Cref{app:cdcalc}). \end{proof}

We proceed with some syntactic remarks about the CD calculus on the relationship between linear substitution to general nonlinear substitutions: If $t$ is a term with $n$ free occurrences of the variable $x$, let $\hat t$ denote the term $t$ with those occurrences replaced with distinct fresh variables $x_1, \ldots, x_n$ (the order does not matter). By repeated application of \eqref{cd:let_val}, we can derive
\begin{equation}
t \equiv  \letin {x_1} x \cdots \letin {x_n} x \hat t 
\end{equation}
We can now substitute some or all occurrences of $x$ using \eqref{cd:let_lin} as follows
\begin{equation}
t[e/x] \equiv \hat t[e!x_1] \cdots [e!x_n] \equiv \letin {x_1} e \cdots \letin {x_n} e \hat t \label{cd:general_subs}
\end{equation}
This means we can reduce questions about substitution to the copying behavior of the term $e$. We adapt the definitions from \citep{fuhrmann,kammarplotkin}.
\begin{definition}
	A term $e$ is called \emph{copyable} if
	\begin{equation} (\letin x e (x,x)) \equiv (e,e) \label{cd:copyable} \end{equation}
	is derivable. A term $e$ is called \emph{discardable} if
	\begin{equation} (\letin x e ()) \equiv () \label{cd:discardable} \end{equation}
	is derivable. We call $e$ \emph{deterministic} if it is both copyable and discardable.
\end{definition}

\begin{proposition} \label{prop:cd_subst}
	The substitution equation
	\[ (\letin x e t) \equiv t[e/x] \]
	is derivable in any of the following circumstances:
	\begin{enumerate}
		\item $t$ uses $x$ exactly once
		\item $t$ uses $x$ at least once, and $e$ is copyable
		\item $t$ uses $x$ at most once, and $e$ is discardable
		\item $e$ is deterministic (combining the previous two points)
	\end{enumerate}
\end{proposition}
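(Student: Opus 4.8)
The plan is to reduce all four cases to the linear-substitution axiom \eqref{cd:let_lin}, using a single \emph{copy-duplication lemma} for point~(2) and a short discardability argument for point~(3). Point~(1) is immediate: if $t$ uses $x$ exactly once, then $t[e/x]$ is literally $t[e!x]$, so \eqref{cd:let_lin} gives $(\letin x e t)\equiv t[e/x]$. Point~(4) then follows by a case split: if $x$ occurs in $t$, invoke~(2); if it does not, invoke~(3).

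For point~(3), the only situation not already covered by~(1) is $x\notin\fv(t)$, where $t[e/x]=t$ and we must derive $(\letin x e t)\equiv t$. I would bridge through the unit type: for a fresh $z:\tunit$ we have $t\equiv(\letin z{()}{t})$ by \eqref{cd:let_val}, hence $(\letin x e t)\equiv(\letin x e\letin z{()}{t})$, and since $x\notin\fv(t)$ the associativity law \eqref{cd:assoc} rewrites this as $(\letin z{(\letin x e{()})}{t})$. Discardability of $e$ is exactly $(\letin x e{()})\equiv()$, so the congruence rule \eqref{cd:let_cong} turns the bound subterm into $()$, and a final \eqref{cd:let_val} deletes the binding, yielding $t$.

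Point~(2) is the main content. By the general-substitution derivation \eqref{cd:general_subs} it is enough to treat a $t$ with exactly $n\ge 1$ free occurrences of $x$, and I would argue by induction on $n$, the base case $n=1$ being point~(1). The inductive step rests on a \emph{duplication lemma}: if $e$ is copyable and $v$ has exactly one free occurrence of a fresh variable $x'$ (with $x,x'\notin\fv(e)$), then $(\letin x e\letin{x'}{x}{v})\equiv(\letin x e\letin{x'}{e}{v})$. To prove the lemma I would show both sides equal the common term $(\letin{(a,b)}{(e,e)}{v[a/x,b/x']})$ (with $a,b$ fresh): the right-hand side reaches this form by \eqref{cd:let_ast} followed by \eqref{cd:assoc} and by collapsing the pattern-match on a literal pair via \eqref{cd:pair_beta} and \eqref{cd:let_val}; the left-hand side reaches it by first rewriting $(e,e)$ to $(\letin y e{(y,y)})$ with the copyability equation \eqref{cd:copyable}, then pulling $\letin y e$ outwards by \eqref{cd:assoc}, collapsing $\letin{(a,b)}{(y,y)}{-}$ again by \eqref{cd:pair_beta} and \eqref{cd:let_val}, and using \eqref{cd:let_val} once more to identify $(\letin x e{v[x/x']})$ with $(\letin x e\letin{x'}{x}{v})$. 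Granting the lemma, the inductive step runs: peel one occurrence of $x$ off into a fresh $x'$, so $t\equiv(\letin{x'}{x}{v})$ with $v$ using $x$ exactly $n-1$ times and $x'$ once; then $(\letin x e t)\equiv(\letin x e\letin{x'}{x}{v})\equiv(\letin x e\letin{x'}{e}{v})$ by the lemma, and $\equiv(\letin{x'}{e}\letin x e{v})$ by \eqref{cd:comm}; apply the induction hypothesis to $(\letin x e{v})$ — legitimate since $n-1\ge1$ in the inductive step — and then \eqref{cd:let_lin} to the remaining linear binding of $x'$, obtaining $t[e/x]$.

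I expect the duplication lemma to be the main obstacle: it is the one place where copyability is genuinely used, and the bookkeeping — bound-variable renamings, the desugaring of pattern-matching $\mathsf{let}$, and the order in which the $\mathsf{let}$-bindings are re-associated — has to be handled with care. Conceptually the lemma is just the string-diagram identity that an iterated $\cpy$ applied after $\sem{e}$ equals $n$ parallel copies of $\sem{e}$ applied after an iterated $\cpy$, which follows from coassociativity of $\cpy$ together with the defining equation of copyability; but turning that picture into a derivation inside the equational theory of \Cref{fig:cd_axioms} is precisely the work sketched above.
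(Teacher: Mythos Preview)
Your proposal is correct and follows essentially the same route as the paper. Points~(1), (3) and~(4) are handled identically (the paper's discardability argument is line-for-line your unit-bridge via \eqref{cd:let_val}, \eqref{cd:assoc}, \eqref{cd:discardable}); for point~(2) the paper writes out the two-occurrence case directly as the chain $t[e/x]\equiv\letin{x_1}{e}\letin{x_2}{e}\hat t\equiv\letin p{(e,e)}\dots\equiv\letin p{(\letin x e(x,x))}\dots\equiv\letin x e t$ and then says ``repeating this process'' for general~$n$, whereas you package the same step as an explicit duplication lemma and a formal induction on~$n$ (using \eqref{cd:comm} to reorder before applying the hypothesis, which the paper's more symmetric phrasing avoids but which is perfectly valid).
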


Finally, we remark that the CD calculus is complete with respect to CD models. We employ the usual construction of a \emph{syntactic category} or free CD category over a given CD signature. Not only can every term be translated into a string diagram, also every string diagram can be parsed into a term, and the theory of $\equiv$ proves all ways of reading a diagram equivalent. 

\begin{definition}\label{def:syncat}
Fix a CD signature $\mathfrak S$. The \emph{syntactic category} $\catname{Syn}$ has
	\begin{enumerate}
		\item objects are types $A$
		\item morphisms are equivalence classes of terms $x : A \vdash t : B$ modulo $\equiv$
		\item identities are variables $x : A \vdash x : A$
		\item composition is let binding; if $x : A \vdash s : B$ and $x : B \vdash t : C$, their composite is
		\[ x : A \vdash \letin x e t \]
		\item Tensor on objects is defined as $A \otimes B = A \ast B$ with unit type $\tunit$. The tensor on morphisms of $x_1 : A_1 \vdash s_1 : B_1$, $x_2 : A_2 \vdash s_1 : B_2$ is
		\[ x : A_1 \ast A_2 \vdash \letin {x_1} {\pi_1\,x} \letin {x_2} {\pi_2\,x} (s_1,s_2) \]
		\item CD structure is given by nonlinear use of variables, that is
		\begin{align*}
		\cpy_A &= x : A \vdash (x,x) : A \ast A \\
		\del_A &= x : A \vdash () : \tunit
		\end{align*}
	\end{enumerate}
\end{definition}
The verification of the CD category axioms is tedious but standard. Note that we can build on existing work \cite{levy:fcbv} because our axioms prove all equations of the ground fragment of $\lambda_c$ (\Cref{prop:all_lambdac_axioms}).
We expect that the syntactic category is an initial model over a given signature and the definition of the semantics $\sem{-}$ is forced by preserving CD structure, but we won't formalize this here.

\section{An Abstract Account of Inference}\label{sec:abstract_cond}

In \Cref{sec:synth_conditioning}, we recalled Markov categories (\Cref{def:markovcat}) as abstract formulations of probability theory, equipped with multiple notational formalisms: string diagram as well as programming notations. We now present an abstract theory of \emph{inference problems} in Markov categories with sufficient structure. We approach the topic from the programming languages side. An informal outline is as follows: Roughly, an \emph{inference problem} is a closed program of the form 

\begin{equation} \letin {(x,k)} \psi (k \eqo o); \return x
\qquad \qquad\text{for some $\psi$ and $o$}
  \label{eq:closed_inf} \end{equation}
where $k \eqo o$ is, at this point, simply a notation for recording an exact condition we wish to make --- that the second marginal $k$ of $\psi$ is equal to the observation~$o$ (see Sec.~\ref{sec:inferenceproblems}). 
A \emph{solution} to this problem is a contextually equivalent program which no longer mentions $k\eqo o$. 
Our treatment is now guided by program equations like \eqref{eqn:substlong} and \eqref{eqn:initlong}, and also the symbolic approach of \cite{shan_ramsey}: Finding a conditional for~$\psi$ amounts to restructuring the dataflow of \eqref{eq:closed_inf} in a way that the return value $x$ is expressed in terms of the observation $k$ (where $\psi_K = \pi_2\,\psi$):
\[ \letin k {\psi_K} \letin x {\psi|_K(k)} (k \eqo o); \return x
\qquad
\text{for some $\psi|_K$}\text.\]
By commutativity~\eqref{eqn:initcommutativity} and \eqref{cd:comm}, this is the same as
\[ \letin k {\psi_K} (k \eqo o); \return \psi|_K(k) \]
We can regard the initialization principle~(\ref{eqn:initlong}) as a fundamental property of exact conditions ($k\eqo o$), and then use this. Informally, if it is possible for $k$ to be $o$ ($k \ll \psi_K$, Def.~\ref{def:abscont}), we may simply substitute the observation for the variable $k$,
\[ \letin k o \psi|_K(k)\]
which finally results in the solution $\psi|_K(o)$, which is equivalent to~(\ref{eq:closed_inf}). If on the other hand it is impossible for $k$ to be equal to $o$ ($k \not\ll \psi_K$), then the inference problem has no solution and is infeasible.

For the rest of this chapter, we formally express this idea of conditioning via program transformation in terms of Markov categories. We begin by recalling categorical rephrasings of core notations from measure-theoretic probability: conditional probability (\Cref{sec:conditionals}), and almost-sure equality, absolute continuity and support (\Cref{sec:as_equality}), mostly due to~\cite{cho_jacobs,fritz}.
We then formulate precisely what an inference problem is, and when it succeeds (\Cref{sec:inferenceproblems}). For now, we summarize informally: an inference problem is a pair $(\psi,o)$ of a distribution $\psi$ on a compound space~$X\otimes K$ and a deterministic observation~$o$ about $K$, regarded in the spirit of~(\ref{eq:closed_inf}); it succeeds if we are able to infer a conditional distribution, or posterior, about $X$, and fails otherwise. This failure is not sought in practice, but happens for instance if one attempts to record two different exact observations about the same data point, or in general if the observation~$o$ is outside the support of~$\psi$. 

Looking forward, we will develop a notion of \emph{open inference problem} in \Cref{sec:cond} as part of a compositional framework for collecting conditions, so as to give a compositional semantics to the kind of programming with conditioning demonstrated in \Cref{sec:gaussian_language}.

For the rest of this section, we fix a Markov category $\C$ (Def.~\ref{def:markovcat}).

\subsection{Conditionals}\label{sec:conditionals}

In essence, conditioning is a way of recovering a joint distribution only given access to part of its information. Given a joint distribution over $(X,Y)$, we can always form a generative story where the value of $X$ is sampled first, and then $Y$ is computed depending (or conditional) on $X$. The categorical formulations of conditioning trace back to Golubtsov and Cho-Jacobs \cite{golubtsov2002monoidal,cho_jacobs}.

\begin{definition}[{{\cite[11.1]{fritz}}}]\label{def:conditional} A \emph{conditional distribution} for $\psi : I \to X \otimes Y$ (given $X$) is a morphism $\psi|_X : X \to Y$ such that
	\begin{equation} \input{categorical_probability/cond_dist.tikz} \label{eq:cond_dist} \end{equation}
	More generally, a \emph{(parameterized) conditional} for $f : A \to X \otimes Y$ is a morphism $f|_X : X \otimes A \to Y$ such that
	\begin{equation} \input{categorical_probability/cond_param.tikz} \end{equation}
\end{definition}

It is worth spelling out that \eqref{eq:cond_dist}, expressed in the CD-calculus, corresponds precisely the type of restructuring of dataflow which was discussed in the introduction to this chapter. The equation~\eqref{eq:cond_dist} simply becomes
\[ \psi \quad \equiv \quad (\letin x {\pi_1\,\psi} (x,\psi|_X(x))) \]

Parameterized conditionals can again be specialized to conditional distributions by fixing a parameter
\begin{proposition}
	\label{prop:specialization}
	If $f : A \to X \otimes Y$ has conditional $f|_X : X \otimes A \to Y$ and $a : I \to A$ is a \emph{deterministic} state, then $f|_X(\id_X \otimes a)$ is a conditional distribution for the composite $f \circ a$.
\end{proposition}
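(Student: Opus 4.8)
The plan is to obtain the conditional-distribution equation \eqref{eq:cond_dist} for $\psi := f \circ a$ simply by precomposing the defining equation of the parameterized conditional $f|_X$ with the state $a$, using that $a$, being deterministic, is copyable so that it can be duplicated across the two branches of that equation.

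First I would record the parameterized-conditional equation in algebraic form. Writing $f_X := \pi_X \circ f : A \to X$ for the $X$-marginal, the defining diagram for $f|_X : X \otimes A \to Y$ says
\[ f \;=\; \bigl(\id_X \otimes f|_X\bigr)\circ(\cpy_X \otimes \id_A)\circ(f_X \otimes \id_A)\circ\cpy_A\text. \]
Precompose with $a : I \to A$. Since $a$ is deterministic it is in particular copyable, so $\cpy_A \circ a = (a \otimes a)\circ\cpy_I = a \otimes a$ (identifying $\cpy_I$ with the inverse unitor). Substituting and sliding the copy of $a$ that feeds $f_X$ through gives $(f_X \otimes \id_A)\circ(a \otimes a) = (f_X \circ a)\otimes a = \psi_X \otimes a$, where $\psi_X := f_X \circ a = \pi_X \circ \psi$ is the $X$-marginal of $\psi$.

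The remaining steps are purely structural, most transparent on string diagrams: writing $\psi_X \otimes a = (\id_X \otimes a)\circ\psi_X$ and using interchange to commute the disjoint $\cpy_X$ and $a$ wires, $(\cpy_X \otimes \id_A)\circ(\id_X \otimes a) = (\id_{X\otimes X}\otimes a)\circ\cpy_X$; then $(\id_X \otimes f|_X)\circ(\id_{X\otimes X}\otimes a) = \id_X \otimes \bigl(f|_X \circ (\id_X \otimes a)\bigr)$. Assembling the pieces yields
\[ \psi = f\circ a = \bigl(\id_X \otimes (f|_X\circ(\id_X\otimes a))\bigr)\circ\cpy_X\circ\psi_X\text, \]
which is precisely \eqref{eq:cond_dist} with $\psi|_X$ identified as $f|_X\circ(\id_X\otimes a)$ and $\psi_X = \pi_X\circ\psi$. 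Equivalently, in the CD-calculus this is one substitution instance: the parameterized-conditional equation reads $f(a') \equiv \letin x {\pi_1\,f(a')} (x, f|_X(x,a'))$ for a fresh variable $a'$, and substituting the deterministic state $a$ for $a'$ — legitimate by \Cref{prop:cd_subst}(2), as $a$ is copyable — produces exactly the conditional-distribution equation for $f\circ a$.

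I do not expect a genuine obstacle here: the sole place the hypothesis is used is the step $\cpy_A\circ a = a\otimes a$, i.e.\ the copyability half of determinism (discardability is not needed), and everything else is bookkeeping with the comonoid and interchange axioms.
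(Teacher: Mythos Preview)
Your proof is correct and is essentially the same argument as the paper's: both precompose the parameterized-conditional equation with $a$ and use copyability of $a$ (the only place determinism enters) to pass between $\cpy_A\circ a$ and $a\otimes a$. The paper presents this purely as a three-step string-diagram chain (run in the opposite direction to yours), whereas you write it out algebraically, but the content is identical.
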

\begin{proof}
	Using determinism of $a$, we check that
	\[ \input{categorical_probability/specialization.tikz} \]
\end{proof}

\begin{proposition}\label{prop:ex_conditionals}
	$\finstoch$, $\borelstoch$, $\gauss$ and $\rel^+$ have all conditionals.
\end{proposition}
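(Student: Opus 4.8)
The plan is to treat the four categories separately, in each case producing a parameterized conditional $f|_X : X \otimes A \to Y$ for an arbitrary morphism $f : A \to X \otimes Y$ and verifying the defining equation of \Cref{def:conditional}; conditional distributions $\psi|_X$ are then the special case $A = I$ (cf.\ \Cref{prop:specialization}). The categories $\finstoch$ and $\rel^+$ admit elementary pointwise constructions, $\gauss$ is handled by the Gaussian conjugacy formula recalled in \Cref{sec:recap_gauss}, and $\borelstoch$ by the classical disintegration theorem.

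For $\finstoch$, given $f : A \to X \otimes Y$ with $X$-marginal $f_X(x\mid a) = \sum_{y} f(x,y\mid a)$, put $f|_X(y\mid x,a) = f(x,y\mid a)/f_X(x\mid a)$ whenever $f_X(x\mid a) > 0$ and let $f|_X(-\mid x,a)$ be any fixed distribution on $Y$ otherwise; the defining equation follows from a one-line sum manipulation, the off-support choice being irrelevant since $f_X(x\mid a)=0$ forces $f(x,y\mid a)=0$ for all $y$. For $\rel^+$, write a left-total relation $f : A \to X \otimes Y$ as $R \subseteq A \times (X\times Y)$ with $X$-marginal $R_X = \{(a,x) : \exists y.\ (a,(x,y)) \in R\}$, which is again left-total. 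Define $f|_X$ to be the relation $S \subseteq (X\otimes A)\times Y$ with $((x,a),y)\in S \iff (a,(x,y))\in R$ on pairs $(x,a) \in R_X$, and $((x,a),y)\in S$ for all $y \in Y$ on the remaining pairs, so that $S$ is left-total. Unwinding the composite that forms the right-hand side of the defining equation then gives back exactly $R$, because every $(a,(x,y)) \in R$ already has $(a,x) \in R_X$, and on such $(a,x)$ the second clause of $S$ is never reached.

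For $\gauss$, write $f = (M, c, \Theta)$ and split the output space $\R^{m+n}$ into its $X$-part (dimension $m$) and $Y$-part (dimension $n$), so that $M = \begin{pmatrix} M_1 \\ M_2 \end{pmatrix}$, $c = \begin{pmatrix} c_1 \\ c_2 \end{pmatrix}$ and $\Theta = \begin{pmatrix} \Theta_{11} & \Theta_{12} \\ \Theta_{21} & \Theta_{22} \end{pmatrix}$. Fixing a generalized inverse $\Theta_{11}^-$, define $f|_X : X \otimes A \to Y$ by the conjugacy formula \eqref{eq:conjugacy_formula}: on input $(x,a)$ it returns $M_2 a + c_2 + \Theta_{21}\Theta_{11}^-(x - M_1 a - c_1)$ perturbed by Gaussian noise of covariance the Schur complement $\Theta_{22} - \Theta_{21}\Theta_{11}^-\Theta_{12}$. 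Checking the defining equation then amounts to matching the $(A,b,\Sigma)$-triples of its two sides, which uses only $\Theta_{11}\Theta_{11}^-\Theta_{11} = \Theta_{11}$ together with the inclusion $\col(\Theta_{12}) \subseteq \col(\Theta_{11})$ valid for all positive semidefinite $\Theta$ (so that $\Theta_{11}\Theta_{11}^-$ acts as the identity on the noise directions of the $X$-marginal and on $\Theta_{12}$); these same identities also show the verification is insensitive to the choice of generalized inverse.

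For $\borelstoch$, the morphism $f|_X$ is precisely the measurable family of conditional measures produced by the disintegration theorem for standard Borel spaces applied to $\{f(-\mid a)\}_a$; one invokes the classical existence statement (see e.g.\ \citep{fritz} and the references therein). I expect the $\gauss$ verification to be the main obstacle for a fully self-contained proof: it is exactly the point at which the support caveat stressed in \Cref{sec:recap_gauss} bites, since the constructed $f|_X$ is a \emph{total} morphism whereas the conjugacy formula only pins down the genuine conditional on the support, so one must confirm that the defining equation is insensitive to the off-support behaviour --- which is what the column-space inclusion above delivers. The $\borelstoch$ case is the only one relying on a nontrivial external theorem, and it in fact subsumes $\finstoch$ and $\gauss$ as sub-Markov categories, but the explicit constructions are more informative and steer clear of measure theory.
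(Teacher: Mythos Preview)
Your proposal is correct and follows the same route as the paper: elementary pointwise formulas for $\finstoch$ and $\rel^+$, the Gaussian conjugacy/Schur-complement formula for $\gauss$, and an appeal to regular conditional distributions (disintegration) for $\borelstoch$. Your treatment is in fact more thorough than the paper's own proof---you handle the parameterized case directly, and in $\rel^+$ you are careful to patch the off-support fibres so that the conditional stays left-total, a point the paper's ``slicing'' description leaves implicit.
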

\begin{proof}
	In $\borelstoch$, the definition of conditionals instantiates to \emph{regular conditional distributions} which are known to exist under the assumptions of the category (that is on \emph{standard Borel spaces}) (see \citep[11.7]{fritz}, \citep[Thm.~3.5]{bogachev}). As a special case, conditionals in $\finstoch$ are given by the traditional conditional distribution \citep[11.2]{fritz}
	\begin{equation} \psi|_X(y|x) = \frac{\psi(x,y)}{\psi_X(x)} \quad \text{ when } \pi_X(x) > 0 \label{eq:finstoch_conditional} \end{equation}
	Conditionals in $\gauss$ exist and can be given using an explicit formula generalizing \eqref{eq:conjugacy_formula} \cite[11.8]{fritz}. The property that conditionals of Gaussians are again Gaussian is sometimes called \emph{self-conjugacy} \citep{jacobs:conjugatepriors}.
	
	In $\rel^+$, the conditional of the state $R \subseteq X \times Y$ with respect to $X$ is given by `slicing' the relation
	\[ R|_X(x) = \{ y \in Y : (x,y) \in R \} \]
	which is nothing but $R$ itself.
\end{proof}

\subsection{Almost-sure Equality, Absolute Continuity and Supports}\label{sec:as_equality}

Conditionals in Markov categories are generally not uniquely determined, although there is still a sense in which they are unique. For example, the formula \eqref{eq:finstoch_conditional} only determines $\psi|_X$ on the set $\{ x : \pi(x) > 0 \}$, which we call the \emph{support} of $\psi_X$. Outside of the support, the conditional may be modified arbitrarily. Similarly, the formula for conditionals of Gaussian distributions \eqref{eq:conjugacy_formula} depends on a choice of generalized inverse, which is only unique on the appropriate support.

If the reader is familiar with measure-theoretic probability, they will recall that the essential uniqueness of conditionals, when they exist, is usually stated in terms of `almost-sure equality' and `absolute continuity'. These have established generalizations to Markov categories in general, as we now recall (Definitions~\ref{def:almost-sure} and~\ref{def:abscont} respectively). These definitions specialize to the measure-theoretic concepts by fixing a Markov category (Propositions~\ref{ex:almost_surely},~\ref{ex:borelstoch_ll}), but for a reader not familiar with measure-theoretic probability, the definitions can be taken as basic. An reference of measure theoretic terminology is given in \Cref{app:meas}.

\begin{definition}[{{\cite[5.1]{cho_jacobs}, \cite[13.1]{fritz}}}]\label{def:almost-sure}
	Let $\mu : I \to X$ be a distribution. Two morphisms $f, g : X \to Y$ are called \emph{$\mu$-almost surely equal} (written $f =_\mu g$) if \[ \langle \id_X, f \rangle\mu = \langle \id_X, g \rangle \mu \]
\end{definition}

\begin{proposition}\label{ex:almost_surely}
	For our example categories, the abstract definition of almost sure equality recovers the familiar meaning:
	\begin{enumerate}
		\item In $\finstoch$, $f, g : X \to D(Y)$ are $\mu$-almost surely equal iff the distributions $f(x) = g(x)$ agree for all $x$ with $\mu(x) > 0$ 
		\item In $\borelstoch$, $f, g : X \to \G(Y)$ are $\mu$-almost surely equal iff $f(x) = g(x)$ as measures for $\mu$-almost all $x$, that is the set $\{ x : f(x) \neq g(x) \}$ has $\mu$-measure $0$.
		\item In $\gauss$, if $\mu : I \to m$ is a distribution with support $S$(in the sense of \Cref{sec:recap_gauss}), then $f,g : m \to n$ are $\mu$-almost surely equal iff $f \circ x=g \circ x$ for all elements $x \in S$, seen as deterministic states $x : 0 \to m$.
		\item In $\rel^+$, if $M \subseteq X$ and $R,S : X \to \mathcal P^+(Y)$ are two left-total relations, then $R =_M S$ iff $R(x) = S(x)$ for all $x \in M$.
	\end{enumerate}
\end{proposition}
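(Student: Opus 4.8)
In every case the plan is to unfold \Cref{def:almost-sure}. The morphism $\langle\id_X, f\rangle : X \to X\otimes Y$ is ``copy the input, then apply $f$ to one copy'', so $\langle\id_X, f\rangle\mu : I \to X\otimes Y$ is exactly the joint distribution with $X$-marginal $\mu$ obtained by sampling $x\sim\mu$ and returning $(x, f(x))$; this joint has a transparent explicit description in each of the four example categories. Writing it down and comparing it with the analogous joint built from $g$ reduces ``$\langle\id_X,f\rangle\mu = \langle\id_X,g\rangle\mu$'' to a concrete condition on $f$ and $g$. For $\finstoch$ and $\rel^+$ this comparison is immediate; for $\gauss$ it is a short block-matrix calculation; for $\borelstoch$ it needs the standard uniqueness-of-disintegration argument, which is the only real obstacle.

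\emph{Discrete and relational cases.} In $\finstoch$, $\langle\id_X,f\rangle\mu$ assigns weight $\mu(x)\,f(y\mid x)$ to $(x,y)$, and likewise $\mu(x)\,g(y\mid x)$ for $g$; these agree at every $(x,y)$ iff $f(y\mid x)=g(y\mid x)$ for all $y$ whenever $\mu(x)>0$, and since $f(-\mid x), g(-\mid x)$ are probability distributions this is precisely ``$f(x)=g(x)$ for all $x$ with $\mu(x)>0$''. In $\rel^+$, a state $I\to X$ is a nonempty $M\subseteq X$, and one computes $\langle\id_X,R\rangle\circ M = \{(x,y) : x\in M,\ (x,y)\in R\}$; equality with the corresponding set for $S$ holds iff $R$ and $S$ pick out the same subset of $Y$ over each $x\in M$, i.e.\ $R(x)=S(x)$ for all $x\in M$. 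Both converse directions are immediate.

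\emph{Gaussian case.} Write $f=(A,b,\Sigma_f)$, $g=(C,d,\Sigma_g):m\to n$ and $\mu=\N(\mu_0,\Sigma_\mu)$, so that $S=\mu_0+\col(\Sigma_\mu)$. Using the copy- and composition formulas of \Cref{def:gauss}, $\langle\id_X,f\rangle\mu$ is the Gaussian on $\R^{m+n}$ with mean $(\mu_0,\,A\mu_0+b)$ and covariance $\left(\begin{smallmatrix}\Sigma_\mu & \Sigma_\mu A^T\\ A\Sigma_\mu & A\Sigma_\mu A^T+\Sigma_f\end{smallmatrix}\right)$, with the analogous matrices for $g$. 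Comparing mean vectors and covariance blocks, and noting that $A\Sigma_\mu=C\Sigma_\mu$ forces $A\Sigma_\mu A^T=C\Sigma_\mu C^T$ (as $\Sigma_\mu$ is symmetric), this equality is equivalent to the three conditions $A\mu_0+b=C\mu_0+d$, $A\Sigma_\mu=C\Sigma_\mu$ and $\Sigma_f=\Sigma_g$. On the other hand, for $x=\mu_0+v$ with $v\in\col(\Sigma_\mu)$ the deterministic-state composites are $f\circ x=\N(Ax+b,\Sigma_f)$ and $g\circ x=\N(Cx+d,\Sigma_g)$, and demanding these agree for all such $x$ unwinds (taking $v=0$, then varying $v$ over $\col(\Sigma_\mu)$) to exactly the same three conditions. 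Hence $f=_\mu g$ iff $f\circ x=g\circ x$ for all $x\in S$.

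\emph{Measure-theoretic case.} In $\borelstoch$, $\langle\id_X,f\rangle\mu$ is the measure on $X\times Y$ determined on rectangles by $(B\times C)\mapsto\int_B f(x)(C)\,\mu(dx)$. If this agrees with the corresponding measure for $g$, then $\int_B f(x)(C)\,\mu(dx)=\int_B g(x)(C)\,\mu(dx)$ for all Borel $B\subseteq X$, so for each fixed $C$ we get $f(x)(C)=g(x)(C)$ for $\mu$-almost every $x$. The one genuinely nontrivial point is to upgrade this to ``$f(x)=g(x)$ as measures for $\mu$-a.e.\ $x$'': since $Y$ is standard Borel its $\sigma$-algebra is generated by a countable algebra $\mathcal A$, so outside the single $\mu$-null set $\bigcup_{C\in\mathcal A}\{x : f(x)(C)\neq g(x)(C)\}$ the measures $f(x)$ and $g(x)$ agree on $\mathcal A$, hence everywhere by a $\pi$--$\lambda$ (monotone-class) argument. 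The converse direction is again immediate, and one may alternatively appeal to the corresponding computations in \cite{fritz}.
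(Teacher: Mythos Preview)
Your proof is correct. The treatments of $\finstoch$, $\borelstoch$ and $\rel^+$ are essentially the same as what the paper does (the paper simply cites \cite{fritz} and \cite{representable_markov} for the first two and is silent on $\rel^+$, so your explicit computations just fill in what those references contain).

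The genuine difference is in the $\gauss$ case. The paper argues indirectly: it embeds $f,g$ into $\borelstoch$, invokes case~(2) to get $f(x)=g(x)$ for $\mu$-almost all $x$, and then upgrades ``almost all'' to ``all of $S$'' by observing that $x\mapsto f(x)$ and $x\mapsto g(x)$ are continuous and that $\mu$ is equivalent to Lebesgue measure on its support $S$, so a $\mu$-null set cannot contain an open subset of $S$. You instead compute $\langle\id,f\rangle\mu$ explicitly as a Gaussian with block mean and covariance, and reduce equality of these blocks to the three linear-algebra conditions $A\mu_0+b=C\mu_0+d$, $A\Sigma_\mu=C\Sigma_\mu$, $\Sigma_f=\Sigma_g$, which you then match against ``$f\circ x=g\circ x$ for all $x\in S$'' by varying $v\in\col(\Sigma_\mu)$. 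Your route is more self-contained (it never leaves $\gauss$ and does not rely on case~(2) or any measure-theoretic facts), while the paper's route is shorter to state and makes transparent \emph{why} the $\gauss$ statement is a strengthening of the $\borelstoch$ one. Both are perfectly valid; yours is arguably more elementary.
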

\begin{proof}
	The results for $\finstoch$ and $\borelstoch$ are given in \citep[13.2]{fritz} and \citep[3.19]{representable_markov}. The result for $\gauss$ is a strengthening of the result for $\borelstoch$. The morphisms $f, g : m \to n$ can be faithfully considered $\borelstoch$ maps $f, g : \R^m \to \G(\R^n)$, so we have $f(x) = g(x)$ for $\mu$-almost all $x$. Because $f,g$ are furthermore continuous functions and $\mu$ is equivalent to the Lebesgue measure on the support $S$, the equality almost everywhere can be strengthened to equality on all of $S$. 
\end{proof}

It follows directly from the definitions that conditional distributions are almost surely unique:

\begin{proposition}\label{prop:cond_as_unique}
	If $\psi : I \to X \otimes Y$ is a distribution and $\psi|_X, \psi|_X'$ are two morphisms satisfying \eqref{eq:cond_dist}, then 
	\begin{equation} \psi|_X =_{\psi_X} \psi|_X' \label{eq:conditionals_as_unique} \end{equation}
	That is, conditional distributions are unique almost surely with respect to the marginal $\psi_X$
\end{proposition}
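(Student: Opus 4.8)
The plan is to unfold the two definitions in play and observe that the claim is then immediate. First I would rewrite the defining equation \eqref{eq:cond_dist} of a conditional distribution in categorical notation: a morphism $\psi|_X : X \to Y$ is a conditional for $\psi : I \to X \otimes Y$ precisely when $\psi = \langle \id_X, \psi|_X\rangle \circ \psi_X$, where $\psi_X = \pi_1 \circ \psi$ is the first marginal and $\langle \id_X, \psi|_X\rangle = (\id_X \otimes \psi|_X) \circ \cpy_X$. This is exactly the string-diagram equation \eqref{eq:cond_dist}, equivalently the CD-calculus identity $\psi \equiv (\letin x {\pi_1\,\psi} (x, \psi|_X(x)))$ recorded just after Definition~\ref{def:conditional}.

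Next I would apply this to both $\psi|_X$ and $\psi|_X'$. Since each satisfies \eqref{eq:cond_dist}, we obtain the chain $\langle \id_X, \psi|_X\rangle \circ \psi_X = \psi = \langle \id_X, \psi|_X'\rangle \circ \psi_X$. Reading the outer equality through Definition~\ref{def:almost-sure} with the distribution $\mu := \psi_X$, it states precisely that $\psi|_X =_{\psi_X} \psi|_X'$, which is the asserted equation \eqref{eq:conditionals_as_unique}.

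There is essentially no obstacle here: the argument is a two-line chain of equalities, and as the surrounding text notes it follows directly from the definitions. The only point requiring care is to correctly match the right-hand side of the string diagram \eqref{eq:cond_dist} with the categorical composite $\langle \id_X, \psi|_X\rangle \circ \psi_X$ — in particular to check that the copy node acts on the $X$-marginal $\psi_X$ of $\psi$ rather than on $\psi$ itself — after which the identification with almost-sure equality is forced. The analogous statement for parameterized conditionals (not asked here) would follow in the same manner, replacing $\psi_X$ by the marginal of $f$ and using the parameterized form of almost-sure equality; I would keep the proof to the stated distribution case.
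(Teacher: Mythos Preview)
Your proposal is correct and is exactly what the paper intends: it merely notes that the result ``follows directly from the definitions'' without spelling out a proof. Your unfolding of \eqref{eq:cond_dist} as $\psi = \langle \id_X, \psi|_X\rangle \circ \psi_X$ and matching it against Definition~\ref{def:almost-sure} with $\mu = \psi_X$ is precisely the two-line argument the paper has in mind.
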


 The important notion of \emph{absolute continuity} can now be formulated naturally in terms of almost-sure equality:

\begin{definition}[{{\cite[2.8]{representable_markov}}}]\label{def:abscont}
	Given two distributions $\mu,\nu : I \to X$, we say that $\mu$ is \emph{absolutely continuous} with respect to $\nu$, written $\mu \ll \nu$, if for all $f, g : X \to Y$ we have
	\[ f =_\nu g \text{ implies } f =_\mu g \]
\end{definition}

Absolute continuity lets us strengthen statements about almost-sure equality to actual equality. 

\begin{lemma}\label{lemma:suppunique}
	If $\mu : I \to X$, $f =_\mu g$ and $x \ll \mu$ then $fx = gx$
\end{lemma}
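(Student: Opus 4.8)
The plan is to chain the two hypotheses through the relevant definitions; this should be a short argument. First I would unfold $x \ll \mu$ using Definition~\ref{def:abscont}: absolute continuity says precisely that any two morphisms which are $\mu$-almost surely equal are also $x$-almost surely equal. Applying this to the hypothesis $f =_\mu g$ immediately yields $f =_x g$, that is
\[ \langle \id_X, f \rangle \circ x = \langle \id_X, g \rangle \circ x \colon I \to X \otimes Y \text. \]

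Second, I would eliminate the redundant copy of $X$ by post-composing this equation with the second projection $\pi_Y \colon X \otimes Y \to Y$. The key fact is that $\pi_Y \circ \langle \id_X, f \rangle = f$: unfolding $\langle \id_X, f \rangle = (\id_X \otimes f) \circ \cpy_X$ and $\pi_Y = \del_X \otimes \id_Y$, this reduces to the comonoid counit law $(\del_X \otimes \id_X) \circ \cpy_X = \id_X$ (modulo the coherence isomorphism $I \otimes Y \cong Y$), using that $\id_X$ is discardable — which holds because $\C$ is a Markov category (Definition~\ref{def:markovcat}). Post-composing both sides of the displayed equation with $\pi_Y$ therefore gives $f \circ x = g \circ x$, i.e. $fx = gx$, as required. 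Diagrammatically, the whole step is just erasing the left-hand output wire of the copy node on both sides.

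I do not expect any real obstacle here. The only points demanding a little care are the bookkeeping of the unitor isomorphisms when computing $\pi_Y \circ \langle \id_X, f \rangle$, and making explicit that discardability of $\id_X$ (hence the Markov hypothesis on $\C$) is what licenses dropping the $X$-component; a single string-diagram picture renders both transparent.
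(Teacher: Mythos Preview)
Your proposal is correct and matches what the paper intends: the lemma is stated without proof there, as it is immediate from chaining Definition~\ref{def:abscont} with Definition~\ref{def:almost-sure} and marginalizing. One small quibble: the identity $\pi_Y \circ \langle \id_X, f \rangle = f$ follows already from the comonoid counit axiom of any CD category, so you need not invoke the Markov hypothesis (or discardability of $\id_X$, which is trivial) at that step.
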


On the other hand, $\ll$ recovers the usual notion of absolute continuity in our example categories.

\begin{proposition}\label{ex:borelstoch_ll}
	\begin{enumerate}
		\item In $\finstoch$, $\mu \ll \nu$ iff $\nu(x) = 0$ implies $\mu(x) = 0$ 
		\item In $\borelstoch$, $\mu \ll \nu$ iff for all measurable sets $A$, $\nu(A) = 0$ implies $\mu(A) = 0$
		\item In $\gauss$, $\mu \ll \nu$ iff $\supp(\mu) \subseteq \supp(\nu)$ (in the sense of \Cref{sec:recap_gauss}).
		\item In $\rel^+$, $R \ll S$ iff $R \subseteq S$.
	\end{enumerate}
\end{proposition}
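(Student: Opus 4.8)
The plan is to treat the four bi-implications one at a time, each split into its two directions. In every case the direction ``concrete condition $\Rightarrow$ categorical $\ll$'' is routine: by Proposition~\ref{ex:almost_surely}, $f =_\mu g$ holds exactly when $f$ and $g$ agree on a suitable \emph{support} of $\mu$ --- the set $\{x:\mu(x)>0\}$ in $\finstoch$, the affine support in $\gauss$, the carrier set $M$ in $\rel^+$, and agreement ``$\nu$-almost everywhere'' in $\borelstoch$ --- and the conditions in the statement are precisely the corresponding inclusions of supports (respectively measure-theoretic absolute continuity). Hence, assuming the condition, $f =_\nu g$ forces agreement on $\supp(\nu) \supseteq \supp(\mu)$, so $f =_\mu g$, i.e.\ $\mu \ll \nu$. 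The real content is the converse, which I would prove contrapositively: assuming the concrete condition fails, I exhibit a pair $f, g$ that are $\nu$-almost surely equal but not $\mu$-almost surely equal, witnessing $\mu \not\ll \nu$.

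For these counterexamples the recipe is always the same: choose a two-element (or $\R$-valued) codomain, let $f$ be constant, and let $g$ be a copy of $f$ perturbed exactly where $\mu$ ``sees more'' than $\nu$. In $\finstoch$: pick $x_0$ with $\nu(x_0)=0<\mu(x_0)$, let $f$ be the constant map with value $0 \in \{0,1\}$ and let $g$ agree with $f$ except $g(x_0)=1$; then $f,g$ agree on $\supp(\nu)$ so $f =_\nu g$, while $x_0 \in \supp(\mu)$ so they are not $\mu$-almost surely equal. The $\rel^+$ argument is identical with the carrier sets $M,N$ in place of supports and singletons in place of Diracs. In $\borelstoch$: pick measurable $A$ with $\nu(A)=0<\mu(A)$, let $f$ be the constant kernel $\delta_0$ and $g$ the kernel sending $x\in A$ to $\delta_1$ and $x\notin A$ to $\delta_0$; this $g$ is measurable, its disagreement set with $f$ is exactly $A$, which is $\nu$-null but not $\mu$-null, so by Proposition~\ref{ex:almost_surely}(2) we get $f =_\nu g$ but not $f =_\mu g$. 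In $\gauss$: pick $a \in \supp(\mu)\setminus\supp(\nu)$ and write $\supp(\nu)=c+V$ with $V$ a linear subspace; since $a \notin c+V$, the vector $a-c$ has nonzero component in $V^\perp$, so there is $u \in V^\perp$ with $u^{T}(a-c)\neq 0$, and the affine functional $\ell(x)=u^{T}x-u^{T}c$ vanishes on $\supp(\nu)$ but not at $a$. Taking $f=\ell$ and $g=0$ as deterministic maps $m \to 1$, Proposition~\ref{ex:almost_surely}(3) gives $f =_\nu g$, while $f$ and $g$ are not $\mu$-almost surely equal since $a \in \supp(\mu)$.

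The main obstacle is the $\gauss$ case: it is the only one that is not purely combinatorial, needing both the fact (Proposition~\ref{ex:almost_surely}(3)) that almost-sure equality of affine-Gaussian maps is equality of the underlying affine functions on the affine support, and the small linear-algebra step that an affine subspace missing a point $a$ can be cut out by an affine functional nonvanishing at $a$, obtained from the orthogonal decomposition $\R^m = V \oplus V^\perp$ of the direction space of $\supp(\nu)$. The remaining care needed is minor: one must note that the two-element codomains exist in $\finstoch$, $\borelstoch$ and $\rel^+$, and that in $\borelstoch$ the perturbed map $g$ is a genuine measurable kernel whose disagreement set with $f$ is exactly $A$, both immediate from the explicit description.
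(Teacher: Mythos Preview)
Your proposal is correct and follows essentially the same approach as the paper: both directions are handled via the characterization of almost-sure equality in Proposition~\ref{ex:almost_surely}, with the nontrivial converse in $\gauss$ proved by separating $\supp(\nu)$ from a point $a\in\supp(\mu)\setminus\supp(\nu)$ via an affine functional. Your write-up is more self-contained than the paper's (which defers $\borelstoch$ to a citation and leaves $\rel^+$ implicit), but the underlying argument is the same.
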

\begin{proof}
	The claim for $\finstoch$ follows immediately from \Cref{ex:almost_surely}, and the result for $\borelstoch$ is given in \cite[2.9]{representable_markov}. 
	\Cref{ex:almost_surely} implies that the support condition for $\gauss$ is sufficient. To see that it is also necessary, let $x \in \supp(\mu) \setminus \supp(\nu)$. Then we can find two affine functions $f,g$ which agree on $\supp(\nu)$ but $f(x) \neq g(x)$. Now $f=_\nu g$ but not $f=_\mu g$, hence $\mu \not \ll \nu$.
\end{proof}

In $\finstoch$, we can also rephrase $\mu \ll \nu$ as $\supp(\mu) \subseteq \supp(\nu)$ if we define $\supp(\mu) = \{ x : \mu(x) > 0 \}$. For the purposes of our development, it will suffice to consider the special case of the absolute continuity relation restricted to deterministic states and distributions. We take this as the categorical \emph{definition} of supports:

\begin{definition}\label{def:ll_support}
	If $x : I \to X$ is a \emph{deterministic state}, we say that \emph{$x$ lies in the support of $\mu$} if $x \ll \mu$.
\end{definition}

We obtain the following characterization

\begin{proposition}\label{prop:allsupports}
	\begin{enumerate}
		\item In $\finstoch$, $x \ll \mu$ iff $\mu(x) > 0$
		\item In $\borelstoch$, $x \ll \mu$ iff $\mu(\{x\}) > 0$
		\item In $\gauss$, $x \ll \mu$ iff $x \in \supp(\mu)$
		\item In $\rel^+$, $x \ll R$ if $x \in R$
	\end{enumerate}
\end{proposition}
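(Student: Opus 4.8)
The plan is to obtain this proposition as a corollary of Proposition~\ref{ex:borelstoch_ll}, which already characterizes $\mu \ll \nu$ for arbitrary distributions in all four categories; a deterministic state is in particular a state, hence a distribution, so that proposition applies. The only extra ingredient I need is the identification of deterministic states with ``point masses'': a deterministic state $x : I \to X$ is a Dirac distribution $\delta_x$ in $\finstoch$ and in $\borelstoch$, the degenerate Gaussian $\N(x,0)$ (equivalently the constant vector $x$, viewed through $\gauss_\det = \catname{Aff}$) in $\gauss$, and a singleton $\{x\} \subseteq X$ in $\rel^+$. The $\gauss$ case is immediate: a deterministic state $0 \to n$ is an affine map $\R^0 \to \R^n$, i.e.\ a point. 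For the other three it is a short comonoid computation: in $\rel^+$ a state $I \to X$ is a nonempty subset $R \subseteq X$, and the copyability equation $\cpy_X \circ R = (R \otimes R) \circ \cpy_I$ says $\{(x,x) : x \in R\} = \{(x,x') : x,x' \in R\}$, forcing $R$ to be a singleton; in $\finstoch$ and $\borelstoch$ copyability forces the state, regarded as a measure, to be $\{0,1\}$-valued, hence a Dirac measure.

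I would then instantiate Proposition~\ref{ex:borelstoch_ll} with $\nu$ the point mass at $x$ and read off each clause. In $\finstoch$, $\delta_x \ll \mu$ holds iff $\mu(y) = 0$ implies $\delta_x(y) = 0$ for all $y$, which is exactly $\mu(x) > 0$. In $\borelstoch$, $\delta_x \ll \mu$ holds iff $\mu(A) = 0$ implies $x \notin A$ for every measurable $A$; applied to $A = \{x\}$ this gives $\mu(\{x\}) > 0$, and conversely $\mu(\{x\}) > 0$ forces $\mu(A) > 0$ whenever $x \in A$, so the condition is $\mu(\{x\}) > 0$. In $\gauss$, since $\supp(\N(x,0)) = \{x\}$, Proposition~\ref{ex:borelstoch_ll}(3) gives $x \ll \mu$ iff $\{x\} \subseteq \supp(\mu)$, i.e.\ $x \in \supp(\mu)$. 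In $\rel^+$, Proposition~\ref{ex:borelstoch_ll}(4) gives $\{x\} \ll R$ iff $\{x\} \subseteq R$, i.e.\ $x \in R$.

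I do not anticipate a real obstacle: the statement is genuinely a corollary of Proposition~\ref{ex:borelstoch_ll}. The only point needing a little care is the classification of deterministic states in $\borelstoch$ --- that a copyable (equivalently $\{0,1\}$-valued) probability measure on a standard Borel space is a Dirac measure --- which uses the structural properties of standard Borel spaces (for instance, realize $X$ as $[0,1]$, $\R$, or a countable discrete space and exhibit the unique point lying in every full-measure measurable set). Everything else is routine unwinding of the definitions of $\ll$ and of the comonoid structure in each category.
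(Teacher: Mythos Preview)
Your proposal is correct; the paper omits the proof entirely, treating the proposition as an immediate specialization of Proposition~\ref{ex:borelstoch_ll}, which is precisely the route you take. Your extra care in identifying the deterministic states in each category (and in particular noting that the $\borelstoch$ case needs the fact that a $\{0,1\}$-valued probability measure on a standard Borel space is Dirac) is appropriate and more than the paper spells out.
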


It is crucial that the support of a distribution can change with the surrounding Markov category:
\begin{example}\label{ex:borelstoch_no_supp}
	Let $\mu = \mathcal N(0,1)$ be the standard normal distribution. When considered in $\gauss$, its support is $\R$ and in particular for all $x_0 \in \R$ we have $x_0 \ll \mu$. In $\borelstoch$, we have $x_0 \not \ll \mu$ because $\mu(\{x_0\}) = 0$.
\end{example}

This means that smaller Markov categories like $\gauss$ have a stronger notion of support, which in turn allows more interesting conditions to be evaluated. This is reminiscent of the tradeoff between expressiveness and well-behavedness discussed under the notion of ``well-behaved disintegrations'' in \cite{shan_ramsey}. 

By combining the notions of conditionals and support, we can now present an abstract theory of inference problems.

\subsection{Abstract Inference Problems}\label{sec:inferenceproblems}

Let $\C$ be a Markov category with all conditionals. In order to describe statistical inference categorically, we introduce the following terminology: 
\begin{enumerate}
	\item An \emph{observation} is a constant piece of data, that is a \emph{deterministic state} $o : I \to K$.
	\item An \emph{inference problem} over $X$ is a tuple $(K,\psi,o)$ of an object $K$, a joint distribution $\psi : I \to X \otimes K$ called the model and an observation $o : I \to K$.
\end{enumerate} 
The problem is then to infer the posterior distribution over $X$ conditioned on the observation $o$. An inference problem can either succeed, or fail if the observation $o$ is inconsistent with the model.

\begin{enumerate}
	\item  We say $(K,\psi,o)$ \emph{succeeds} if the observation lies in the support of the model, i.e. $o \ll \psi_K$. In that case, a \emph{solution} to the inference problem is the composite $\psi|_K \circ o : I \to X$ where $\psi|_K : K \to X$ is a conditional to $\psi$ with respect to $K$. The solution is also referred to as a \emph{posterior} for the problem.
	\item If $o \not\ll \psi_K$, we say that the inference problem \emph{fails} or is infeasible.
\end{enumerate}

\begin{proposition}\label{prop:inference_unique}
	Solutions to inference problems are unique, i.e. if $(K,\psi,o)$ succeeds and $\psi|_K, \psi|_K'$ are two conditionals then $\psi|_K(o) = \psi|_K'(o)$.
\end{proposition}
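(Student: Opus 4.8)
The plan is to reduce this to the two preparatory results already in hand: the almost-sure uniqueness of conditionals (Proposition~\ref{prop:cond_as_unique}) and the promotion of almost-sure equality to actual equality along the absolute-continuity relation (Lemma~\ref{lemma:suppunique}). The statement is essentially a corollary of those.

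First I would invoke Proposition~\ref{prop:cond_as_unique}: since $\psi|_K$ and $\psi|_K'$ both satisfy the defining equation~\eqref{eq:cond_dist} for a conditional of $\psi$ with respect to $K$, they are almost surely equal with respect to the marginal, i.e.\ $\psi|_K =_{\psi_K} \psi|_K'$. Next I would use the hypothesis that the inference problem $(K,\psi,o)$ \emph{succeeds}, which by definition means $o \ll \psi_K$; note also that an observation is by definition a \emph{deterministic} state $o : I \to K$, so the hypotheses of Lemma~\ref{lemma:suppunique} are met. Applying that lemma with $\mu = \psi_K$, the pair of morphisms $\psi|_K, \psi|_K' : K \to X$, and the deterministic state $o$, we conclude $\psi|_K \circ o = \psi|_K' \circ o$, which is exactly $\psi|_K(o) = \psi|_K'(o)$.

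I do not expect any genuine obstacle here; the content has been front-loaded into Proposition~\ref{prop:cond_as_unique} and Lemma~\ref{lemma:suppunique}. The only point worth stating carefully is that the conclusion quantifies only over the \emph{value at} $o$, not over the conditionals as morphisms --- outside the support of $\psi_K$ the two conditionals may genuinely differ, so the role of the success hypothesis $o \ll \psi_K$ is precisely to pin down the single point we care about. If desired, one could also remark that this is the abstract counterpart of the classical fact that the posterior is well defined exactly when the observation lies in the support of the evidence, mirroring the caveat about generalized inverses in formula~\eqref{eq:conjugacy_formula}.
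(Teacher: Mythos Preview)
Your proposal is correct and follows exactly the paper's own proof, which simply reads ``Combine \Cref{lemma:suppunique} and \Cref{prop:cond_as_unique}.'' Your version spells out precisely how those two results fit together, and your remark about uniqueness holding only at $o$ (not for the conditionals as morphisms) is apt.
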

\begin{proof}
	Combine \Cref{lemma:suppunique} and \Cref{prop:cond_as_unique}.
\end{proof}

\begin{definition}\label{def:inferenceobseq}
	We call two inference problems \emph{observationally equivalent} if they either both fail, or they both succeed with equal posteriors.
\end{definition}

For the rest of this section, we will rederive \Cref{ex:og_gaussian_diagonal} in terms of the categorical machinery and show that it matches the conditioning procedure from \Cref{sec:gaussian_language}.

\begin{example}
	\label{ex:gaussian_diagonal}
	The \cref{ex:og_gaussian_diagonal} can be written as
	\begin{align*}
	X &\sim \mathcal N(0,1) \\
	Y &\sim \mathcal N(0,1) \\
	(X - Y) \,&\eqo\, 0
	\end{align*}
	which corresponds to the inference problem $(1,\mu,0)$ where $\mu : 0 \to 2 \otimes 1$ has covariance matrix
	\[ \Sigma = \begin{pmatrix} 1 & 0 & 1 \\ 0 & 1 & -1 \\ 1 & -1 & 2 \end{pmatrix} \]
	A conditional with respect to the third coordinate $Z$ is
	\[ \mu|_Z(z) = \begin{pmatrix}0.5 \\ 0.5\end{pmatrix}z + \mathcal N\begin{pmatrix}
	0.5 & 0.5 \\ 0.5 & 0.5
	\end{pmatrix} \]
	which can be verified by calculating \eqref{eq:cond_dist}. The marginal $\mu_Z = \mathcal N(2)$ is supported on all of $\R$, hence $0 \ll \mu_Z$ and by \Cref{prop:inference_unique} the composite
	\[ \mu|_{Z}(0) = \mathcal N\begin{pmatrix}
	0.5 & 0.5 \\ 0.5 & 0.5
	\end{pmatrix} \]
	is the uniquely defined solution to the inference problem.
\end{example}

The same inference problem would not have a solution when interpreted in $\borelstoch$ instead of $\gauss$. This is because $0 \not\ll \mu_Z$ (\Cref{ex:borelstoch_no_supp}). In $\borelstoch$, we can only condition on observations of positive probability; this agrees with the classical definition of conditional probability
\[ P(A|B) \defeq \frac{P(A \cap B)}{P(B)} \text{ if } P(B) > 0 \]
In $\gauss$, we can also condition on probability zero observations in a principled way because the notion of support is better behaved.


\section{Compositional Conditioning -- The Cond Construction}
\label{sec:cond}

In \Cref{sec:abstract_cond} we have seen that Markov categories with conditionals allow a general recipe for conditioning. In order to give \emph{compositional} semantics to a language with conditioning, we need to internalize the conditioning operation as a morphism.
The key step is to move from a closed inference problem $(K,\psi\colon I\to X\otimes K,o\colon I\to K)$, to 
\emph{open inference problems} or \emph{conditioning channels}, where $\psi$ is replaced with a morphism with more general domain, so that it can be composed.

With some care, we can turn these conditioning channels into a CD-category (\Cref{def:cd}, a monoidal category where every object has a comonoid structure). This allows us to give a denotational semantics to a CD calculus with conditioning, and in particular a denotational semantics for the Gaussian language with conditioning of \Cref{sec:gaussian_language}. Looking forward, in \Cref{sec:finstoch} we will show that this denotational semantics is fully abstract: it precisely captures the contextual equivalence from the operational semantics. 

The construction presented in this chapter is rather involved, but its well-definedness and properties are the central technical contribution of this article. From \Cref{sec:laws_cond} onwards, we can harness the good properties the construction enjoys to return to a higher level picture and use string diagrams for studying a graphical language of conditioning. \\

Let $\C$ be a Markov category, then a conditioning channel $X \obsto Y$ is given by a morphism $X \to Y \otimes K$ together with an observation (i.e. deterministic state) $o : I \to K$. This represents an intensional open program of the form
\begin{equation}
\label{eq:obs_normalform}
x : X \vdash \letin {(y,k) : Y \otimes K} {f(x)} {(k \eqo o); y}
\end{equation}
We think of $K$ as an additional hidden output wire, to which we attach the observation $o$. Such programs compose in the obvious way, by aggregating observations (\Cref{fig:obs_composition}). Two representations \eqref{eq:obs_normalform} are deemed equivalent if they contextually equivalent, that is roughly they compute the same posteriors in all contexts. \\

An important caveat is that the primary operation we formalize is that of an \emph{exact observation} $(\eqo o)$ where $o$ is a deterministic state. Binary \emph{exact conditioning} $(\eq)$ between two expressions may be encoded in terms of $(\eqo)$, for example as $(x==y)\eqo \true$ for finite sets or as $(x-y) \eqo 0$ for Gaussians. Generally, the choice of encoding does matter (\Cref{ex:borels_paradox}), so we consider this choice additional structure and focus on formalizing $(\eqo)$. \\

For modularity, we present the construction in two stages: In the first stage (\Cref{sec:obscat}) we form a category $\obs(\C)$ on the same objects as $\C$ consisting of the data \eqref{eq:obs_normalform} but without any quotienting. This adds, purely formally, for every observation $o : I \to X$ an observation effect $(\eqo o) : X \obsto I$.
In the second stage (\Cref{sec:condcat}) -- this is the core of the construction -- we relate these morphisms to the conditionals present in~$\C$, that is we quotient by contextual equivalence. The resulting quotient is called $\cond(\C)$. Under mild assumptions, this will have the good properties of a CD category, showing that conditioning stays commutative. We demonstrate the resulting reasoning methods in \Cref{sec:laws_cond} and \Cref{sec:ex-graphmod}. 

\subsection{Obs -- Open Programs with Observations}\label{sec:obscat}
For ease of notation, we will assume $\C$ is a strictly monoidal category, that is all associators and unitors are identities (this poses no restriction by \citep[10.17]{fritz}). We note that all constructions can instead be performed purely string-diagrammatically.
\begin{definition}
	\label{def:obs}
	The following data define a symmetric premonoidal category called $\obs(\C)$:
	\begin{itemize}
		\item the object part of $\obs(\C)$ is the same as $\C$
		\item morphisms $X \obsto Y$ are tuples $(K,f,o)$ where $K$ is an object of $\C$, $f \in \C(X,Y\otimes K)$ and $o \in \C_{\mathrm{det}}(I,K)$, representing \eqref{eq:obs_normalform}
		\item The identity on $X$ is $\Id_X = (I,\id_X,!)$ where $!=\id_I$.
		\item Composition is defined by
		\[ (K',f',o') \bullet (K,f,o) = (K' \otimes K, (f' \otimes \id_K)f, o' \otimes o). \]
		\item if $(K,f,o) : X \obsto Y$ and $(K',f',o') : X' \obsto Y'$, their (premonoidal) tensor product is defined as
		\[ (K' \otimes K, (\id_{Y'} \otimes \swap_{K',Y} \otimes \id_K)(f' \otimes f), o' \otimes o) \]
		\item There is an identity-on-objects functor $J : \C \to \obs(\C)$ that sends $f : X \to Y$ to $(I,f,!) : X \obsto Y$. This functor is strict premonoidal and its image central
		\item $\obs(\C)$ inherits symmetry and comonoid structure
	\end{itemize}
\end{definition}

Recall that a symmetric premonoidal category (due to \cite{power:premonoidal}) is like a symmetric monoidal category where the interchange law $(f_1 \otimes f_2) \circ (g_1 \otimes g_2) = f_1g_1 \otimes f_2g_2$ need not hold. This is the case because $\obs(\C)$ does not yet identify observations arriving in different order. This will be remedied automatically later when passing to the quotient $\cond(\C)$. For an observation $o : I \to K$, the conditioning effect $(\eqo o) : K \obsto I$ is defined by $(I,\id_K,o)$.

\paragraph{Notation:} 
A morphism $F : X \obsto Y$ in $\obs(\C)$ consists of a morphism $f : X \to Y \otimes K$ in $\C$ with an extra datum $o : I \to K$. It can be convenient to describe morphisms in $\obs(\C)$ by their underlying morphisms in $\C$, with extra `conditioning wires' into the object $K$ and observations attached to them. Informally, we will highlight these wires dashed and blue, simply to distinguish the codomain object $Y$ from the condition object $K$. Using this notation, composition and tensor in $\obs(\C)$ take the form shown in \Cref{fig:obs_composition}. In \Cref{sec:laws_cond}, we will use actual string diagram in $\cond(\C)$ as opposed to string diagrams in $\C$, which will obviate the need for blue wires. 

\begin{figure}[h]
	\centering
	\input{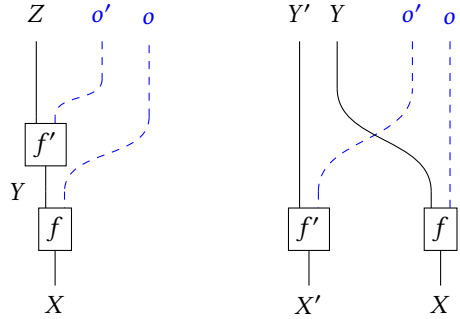}
	\caption{Composition and tensoring of morphisms in $\obs$}
	\label{fig:obs_composition}
\end{figure}

\subsection{Cond -- Contextual Equivalence of Inference Programs}
\label{sec:condcat}

Let us now assume that $\C$ has all conditionals. We wish to quotient $\obs$-morphisms by contextual equivalence, relating them to the conditionals which can be computed in $\C$. We know how to interpret \emph{closed} programs, because a state $(K,\psi,o) : I \obsto X$ is precisely an inference problem as in \Cref{sec:synth_conditioning}: If $o \not \ll \psi_K$, the observation does not lie in the support of the model and conditioning fails. If not, we form the conditional $\psi|_K$ in $\C$ and obtain a well-defined posterior $\mu|_K \circ o$. 

The observational equivalence (\Cref{def:inferenceobseq}) defines an equivalence relation on states $I \leadsto X$ in $\cond(\C)$. We will extend this relation to a congruence on arbitrary morphisms $X \leadsto Y$ by a general categorical construction. 

\begin{definition}
	\label{def:condeq}
	Given two states $I \leadsto X$ we define $(K,\psi,o) \sim (K',\psi',o')$ if they are observationally equivalent as inference problems, that is either
	\begin{enumerate}
		\item $o \ll \psi_K$ and $o' \ll \psi'_{K'}$ and $\psi|_K(o)= \psi'|_{K'}(o')$.
		\item $o \not \ll \psi_K$ and $o' \not \ll \psi'_{K'}$
	\end{enumerate}
\end{definition}

\begin{figure}[h]
	\vspace{-12pt}
	\centering
	\input{conditioning/obs_example.tikz}
	\caption{\Cref{ex:gaussian_diagonal} describes observationally equivalent states $0 \obsto 2$ in $\obs(\gauss)$}
	\label{fig:obs}
\end{figure}

We now give a general recipe to extend an equivalence relation on states to a congruence on arbitrary morphisms $f : X \to Y$. 

\begin{definition}
	\label{def:observational_quotient}
	Let $\mathbb X$ be a symmetric premonoidal category. An equivalence relation $\sim$ on states $\mathbb X(I,-)$ is called \emph{functorial} if $\psi \sim \psi'$ implies $f\psi \sim f\psi'$. We can extend such a relation to a congruence $\approx$ on all morphisms $X \to Y$ via
	\[ f \approx g \Leftrightarrow \forall A,\psi : I \to A \otimes X, (\id_A \otimes f)\psi \sim (\id_A \otimes g)\psi. \]
	The quotient category $\mathbb X/{\approx}$ is symmetric premonoidal. 
\end{definition}

We show now that under good assumptions, the quotient by conditioning (\Cref{def:condeq}) on $\mathbb X = \obs(\C)$ is functorial, and induces a quotient category $\cond(\C)$. The technical condition is that supports interact well with dataflow:
\begin{definition}\label{def:precise_supports}
	A Markov category $\C$ has \emph{precise supports} if the following are equivalent for all deterministic $x : I \to X$, $y : I \to Y$, and arbitrary $f : X \to Y$ and $\mu : I \to X$.
	\begin{enumerate}
		\item $x \otimes y \ll \langle \id_X, f\rangle \mu$
		\item $x \ll \mu$ and $y \ll fx$
	\end{enumerate}
	The word `support' here refers to \Cref{def:ll_support}.
\end{definition}

\begin{proposition}
	\label{prop:have_precise_supports}
	$\gauss$, $\finstoch$, $\borelstoch$ and $\rel^+$ have precise supports. 
\end{proposition}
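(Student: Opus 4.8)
The plan is to verify \Cref{def:precise_supports} separately in each of the four categories, using the explicit characterizations of support ($x \ll \mu$) from \Cref{prop:allsupports} together with the concrete descriptions of $\langle \id_X, f\rangle \mu$ in each setting. In all four cases one direction --- that (1) implies (2) --- is essentially a marginalization argument: if the pair $(x,y)$ lies in the support of the joint $\langle \id_X, f\rangle\mu$, then projecting to the first coordinate shows $x \ll \mu$, and projecting to the second after fixing the first shows $y \ll fx$. The reverse direction, (2) implies (1), requires that the support of a ``sampled-then-transformed'' distribution is built up coordinatewise, which is where the specific structure of each category does the work.

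For $\finstoch$ and $\borelstoch$ this is a direct computation with the formulas. In $\finstoch$, $x \ll \mu$ iff $\mu(x) > 0$, and $\langle \id_X, f\rangle\mu$ assigns mass $\mu(x') \cdot f(y'|x')$ to the pair $(x',y')$; the equivalence is then the statement that $\mu(x)\cdot f(y|x) > 0$ iff $\mu(x) > 0$ and $f(y|x) > 0$, which is immediate. In $\borelstoch$, I would use $x \ll \mu$ iff $\mu(\{x\}) > 0$, and compute $(\langle \id_X, f\rangle\mu)(\{(x,y)\}) = \int_{\{x\}} f(\{y\}\,|\,x')\,\mu(dx') = \mu(\{x\})\cdot f(\{y\}\,|\,x)$, giving the same factorization. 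In $\rel^+$, where $x \ll R$ iff $x \in R$ and $\langle \id_X, f\rangle\mu$ is the relation $\{(x',y') : x' \in \mu,\ y' \in f(x')\}$, the equivalence is just unwinding the definition of membership in that relation.

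The main obstacle is $\gauss$, where support is the affine subspace $\mu + \col(\Sigma)$ (\Cref{sec:recap_gauss}) rather than a set of atoms, so the argument is linear-algebraic rather than combinatorial. Here I would let $\mu = \mathcal N(m,\Sigma)$ on $X = \R^p$ and write $f(x) = Ax + \mathcal N(b, \Xi)$, so that $\langle \id_X, f\rangle\mu = \mathcal N\big((m, Am+b),\, \Sigma'\big)$ with $\Sigma' = \left(\begin{smallmatrix} \Sigma & \Sigma A^T \\ A\Sigma & A\Sigma A^T + \Xi\end{smallmatrix}\right)$. The deterministic state $fx = \mathcal N(Ax + b, \Xi)$ has support $Ax + b + \col(\Xi)$. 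The claim then becomes: $(x,y) \in (m,Am+b) + \col(\Sigma')$ if and only if $x \in m + \col(\Sigma)$ and $y \in Ax + b + \col(\Xi)$. For the forward direction, write $(x-m, y - Am - b) = \Sigma'(u,v)^T$ for some $(u,v)$; the top block gives $x - m = \Sigma u + \Sigma A^T v \in \col(\Sigma)$, and substituting into the bottom block gives $y - Am - b = A\Sigma u + A\Sigma A^T v + \Xi v = A(x - m) + \Xi v$, i.e. $y - Ax - b = \Xi v \in \col(\Xi)$. For the converse, given $x - m = \Sigma u$ and $y - Ax - b = \Xi v$, one checks that $(x - m, y - Am - b) = \Sigma'(u + A^T v,\ v)^T$ by the same block computation --- this is the only slightly fiddly step, and it uses crucially that the off-diagonal block is exactly $\Sigma A^T$, reflecting that the noise $\Xi$ is added \emph{independently} after the affine map. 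This independence is precisely why the support decomposes, and it is the conceptual heart of why $\gauss$ has precise supports.

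I would then remark that $\borelstoch$ subsumes $\finstoch$ as a special case, and that since $\finstoch$ and $\gauss$ embed into $\borelstoch$ as CD-subcategories it might be tempting to deduce their precise-supports property from that of $\borelstoch$; but this does \emph{not} work, because support is not preserved by the inclusion (\Cref{ex:borelstoch_no_supp}), so the $\gauss$ case genuinely needs the separate linear-algebra argument above. The four verifications together establish the proposition.
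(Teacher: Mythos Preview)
Your approach is essentially the same as the paper's: verify each category separately via the explicit characterizations of support in \Cref{prop:allsupports}. For $\finstoch$, $\borelstoch$ and $\rel^+$ your arguments coincide with the paper's almost verbatim.

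For $\gauss$ you take a more computational route, working directly with the block covariance matrix $\Sigma'$, whereas the paper argues more conceptually that the support of $\langle \id, f\rangle\mu$ is the image set $\{(x, Ax+c) : x \in S,\ c \in T\}$ with $S = \supp(\mu)$ and $T = \supp(\mathcal N(b,\Xi))$, from which the equivalence is immediate. Both are fine, but your explicit version has a sign slip in the converse step: with $x - m = \Sigma u$ and $y - Ax - b = \Xi v$, the vector witnessing $(x-m,\ y-Am-b) \in \col(\Sigma')$ is $(u - A^T v,\ v)$, not $(u + A^T v,\ v)$. Indeed the top block gives $\Sigma(u - A^T v) + \Sigma A^T v = \Sigma u$ and the bottom block gives $A\Sigma(u - A^T v) + (A\Sigma A^T + \Xi)v = A\Sigma u + \Xi v$, as required. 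With that correction your argument goes through. Your closing remark that precise supports in $\gauss$ cannot be deduced from $\borelstoch$ via the embedding is apt and matches \Cref{ex:borelstoch_no_supp}.
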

\begin{proof}
This follows from the characterizations of $\ll$ in \Cref{prop:allsupports}. For $\gauss$, let $\mu$ have support $S$ and $f(x)=Ax+\mathcal N(b,\Sigma)$. Let $T$ be the support of $\mathcal N(b,\Sigma)$. The support of $\langle \id, f \rangle\mu$ is the image space $\{ (x,Ax+c) : x \in S, c \in T \}$. Hence $(x,y) \ll \langle \id, f \rangle\mu$ iff $x \ll \mu$ and $y \ll fx$. Similarly, for $\rel^+$, we readily verify
\[ x \in \{ (x,y) : x \in \mu, (x,y) \in f \} \Leftrightarrow x \in \mu \wedge y \in f(x) \]

For $\finstoch$, an outcome $(x,y)$ has positive probability under $\langle \id, f \rangle\mu$ iff $x$ has positive probability under $\mu$, and $y$ has positive probability under $f(-|x)$.

For $\borelstoch$, the measure $\psi = \langle \id, f \rangle \mu$ is given by
\[ \psi(A \times B) = \int_{x \in A} f(B|x) \mu(\mathrm dx) \]
Hence $\psi(\{(x_0,y_0)\}) = f(\{y_0\}|x)\mu(\{x\})$, which is positive exactly if $\mu(\{x_0\}) > 0$ and $f(\{y_0\}|x)>0$.
\end{proof}

\begin{theorem}
	\label{thm:functoriality}
	Let $\C$ be a Markov category that has conditionals and precise supports. Then $\sim$ is a functorial equivalence relation on $\obs(\C)$.
\end{theorem}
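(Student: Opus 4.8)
The plan is to reduce the whole statement to a single structural lemma about composing an open $\obs(\C)$-morphism with a state, and to read off both the routine equivalence-relation axioms and functoriality from it. First, $\sim$ being an equivalence relation is easy: reflexivity and symmetry are immediate from \Cref{def:condeq}, and transitivity holds because, for a chain $(K_1,\psi_1,o_1)\sim(K_2,\psi_2,o_2)\sim(K_3,\psi_3,o_3)$, either all three inference problems fail or all three succeed, and in the latter case the posteriors $\psi_i|_{K_i}\circ o_i$ are pairwise equal, hence equal --- and independent of the choice of conditional by \Cref{prop:inference_unique}.

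The heart of the proof is the following lemma. Given a state $F=(K,\psi,o):I\obsto X$ and a morphism $G=(L,f,p):X\obsto Y$, with composite $GF=(L\otimes K,(f\otimes\id_K)\psi,p\otimes o)$ as in \Cref{def:obs}, I claim that $GF$ succeeds if and only if $F$ succeeds --- write $\mu\defeq\psi|_K\circ o:I\to X$ for its posterior --- and the inference problem $(L,f\circ\mu,p)$ succeeds, and that in that case the posterior of $GF$ equals $(f\mu)|_L\circ p$, the posterior of $(L,f\circ\mu,p)$. Granting the lemma, functoriality is immediate: if $F\sim F'$ and both fail, then $GF$ and $GF'$ fail; if both succeed they share the posterior $\mu$ by \Cref{def:condeq} and \Cref{prop:inference_unique}, so the lemma makes $GF$ and $GF'$ succeed under exactly the same condition (success of $(L,f\mu,p)$) and, when they do, with the common posterior $(f\mu)|_L\circ p$; either way $GF\sim GF'$.

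The lemma I would prove in two parts. For the support condition: the model of $GF$ marginalizes over its condition object $L\otimes K$ as $\bigl((f\otimes\id_K)\psi\bigr)_{L\otimes K}=(f_L\otimes\id_K)\psi$ with $f_L=\pi_L f$; rewriting $\psi=\langle\psi|_K,\id_K\rangle\circ\psi_K$ via \eqref{eq:cond_dist} turns this into $\langle g,\id_K\rangle\circ\psi_K$ with $g\defeq f_L\circ\psi|_K$, and then precise supports (\Cref{def:precise_supports}, applied to $g$ and $\psi_K$, up to the deterministic iso $\swap_{K,L}$) gives $p\otimes o\ll\langle g,\id_K\rangle\psi_K$ iff $o\ll\psi_K$ (success of $F$) and $p\ll g\circ o=(f\mu)_L$ (success of $(L,f\mu,p)$). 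For the posterior: since $\C$ has conditionals it has parameterized conditionals, so I would pick $h\defeq(f\circ\psi|_K)|_L:L\otimes K\to Y$, a conditional of $f\circ\psi|_K$ with respect to its $L$-output, check that $h$ is also a conditional of $(f\otimes\id_K)\psi=\langle f\psi|_K,\id_K\rangle\circ\psi_K$ with respect to $L\otimes K$ --- so the posterior of $GF$ is $h\circ(p\otimes o)$ --- and then use determinism of $o$ with \Cref{prop:specialization} to identify $h\circ(\id_L\otimes o)$ as a conditional of $f\circ\mu$ with respect to $L$, whence $h\circ(p\otimes o)=(f\mu)|_L\circ p$; \Cref{prop:inference_unique} makes the choices of conditional immaterial throughout.

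The hard part is the clause asserting that a parameterized conditional of $f\circ\psi|_K$ with respect to its $L$-output is also a conditional of the full composite $(f\otimes\id_K)\psi$ with respect to $L\otimes K$: this "chain rule for conditionals" has to be established by a string-diagram calculation from the comonoid axioms and the defining equation \eqref{eq:cond_dist}, being careful about which wire plays the role of parameter and which is the conditioned output. Everything else is bookkeeping with marginals, the relation $\ll$, and symmetries.
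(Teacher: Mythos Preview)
Your proposal is correct and follows essentially the same route as the paper. Both arguments use precise supports to split the composite support condition into success of $F$ plus $p\ll(f\mu)_L$, and both establish the ``chain rule'' that the parameterized conditional $(f\circ\psi|_K)|_L$ serves as a conditional of the full composite $(f\otimes\id_K)\psi$ with respect to $L\otimes K$, then specialize via \Cref{prop:specialization}. Your packaging---isolating a single lemma that the posterior of $GF$ depends only on the posterior $\mu$ of $F$---is slightly cleaner than the paper's presentation, which carries both composites $GF$ and $GF'$ through the argument in parallel, but the technical content is the same.
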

\begin{proof}
	Let $(K,\psi,o) \sim (K',\psi',o') : I \obsto X$ be equivalent states and $(H,f,v) : X \obsto Y$ be any morphism. We need to show that the composites
	\begin{equation}
	\label{eq:cond_fun_equiv}
	(H \otimes K, (f \otimes \id_K)\psi, v \otimes o) \sim (H \otimes K', (f \otimes \id_{K'})\psi', v \otimes o')
	\end{equation}
	are equivalent. We analyze different cases.
	
\paragraph{The states fail} If a state $(K,\psi,o)$ fails because $o \not \ll \psi_K$, then any composite must fail too. So both sides of \eqref{eq:cond_fun_equiv} fail and are thus equivalent. 

\paragraph{The composite fails} Assume from now that the states succeed and thus also have equal posteriors
\begin{equation} \psi|_{K}(o) = \psi'|_{K}(o') \label{eq:assumption_posteriors} \end{equation}

 We first show that the success conditions on both sides of \eqref{eq:cond_fun_equiv} are the same, so if the LHS fails so does the RHS. The ``precise supports'' axiom lets us split the success condition into two statements; that is the following are equivalent (and analogous for $\psi', o'$):
	\begin{enumerate}
		\item\label{it:joint_cond} $v \otimes o \ll (f_H \otimes \id_K)\psi$
		\item\label{it:separate_cond} $o \ll \psi_K$ and $v \ll f_H\psi|_K(o)$
	\end{enumerate}
  To see this, we instantiate \Cref{def:precise_supports} with the morphisms $\mu = \psi_K$ and $g=f_H \circ \psi|_K$, because the definition of the conditional $\psi|_K$ lets us recover
	\[ \langle g, \id_K \rangle \mu = (f_H \otimes \id_K)\psi. \]
	
	 It is clear that condition \ref{it:separate_cond} agrees for both sides of \eqref{eq:cond_fun_equiv}. Hence so does \ref{it:joint_cond}. 
	 
\paragraph{The composite succeeds} We are left with the case that both sides of \eqref{eq:cond_fun_equiv} succeed, and need to show that the composite posteriors agree
	\begin{equation} [(f \otimes \id_K)\psi]|_{H \otimes K}(v \otimes o) = [(f \otimes \id_{K'})\psi']|_{H \otimes K'}(v \otimes o') \label{eq:posterior_conclusion} \end{equation}
	We use a variant of the argument from \citep[11.11]{fritz} that double conditionals can be replaced by iterated conditionals. Consider the parameterized conditional 
	\[ \beta \defeq (f \circ \psi|_K)|_H : H \otimes K \to Y \]
	with univsonersal property
	\begin{equation} \input{conditioning/proof_beta_iter.tikz} \label{eq:beta_prop} \end{equation} 	
	Some string diagram manipulation shows that $\beta$ too has the universal property of the double conditional 
	\[ \beta = [(f \otimes \id_K)\psi]|_{H \otimes K} \]
	We check
	\begin{equation*} \input{conditioning/proof_beta_double.tikz} \end{equation*} 	
	which further reduces using \eqref{eq:beta_prop} to the desired	
	\begin{equation*} \input{conditioning/proof_beta_double_2.tikz} \end{equation*} 
	By specialization (\Cref{prop:specialization}), we can fix one observation $o$ in $\beta$ to obtain a conditional 
	\begin{equation}
	\beta(\id_H \otimes o) = (f \circ \psi|_K(o))|_H \label{eq:beta_special}
	\end{equation} 
	But this conditional agrees with $(f \circ \psi'|_K(o'))|_H$ by assumption \eqref{eq:assumption_posteriors}. Hence we can evaluate the joint posterior successively,
	\begin{align*}
	[(f \otimes \id_K)\psi]|_{H \otimes K}(v \otimes o) &= \beta(\id_H \otimes o) \circ v \\
	&\stackrel{\eqref{eq:beta_special}}= (f \circ \psi|_K(o))|_H \circ v \\
	&\stackrel{\eqref{eq:assumption_posteriors}}= (f \circ \psi'|_{K'}(o'))|_H \circ v \\
	&\stackrel{\text{symmetric}}= [(f \otimes \id_{K'})\psi']|_{H \otimes K'}(v \otimes o)
	\end{align*}
	establishing \eqref{eq:posterior_conclusion}.
\end{proof}

We can spell out the induced congruence $\approx$ on $\obs(X,Y)$ as follows:
\begin{proposition}
	\label{prop:simplified_cond}
	We have $(K,f,o) \approx (K',f',o') : X \obsto Y$ if and only if for all $\psi : I \to A \otimes X$, either
	\begin{enumerate}
		\item $o \ll f_K\psi_X$ and $o' \ll f'_{K'}\psi'_X$ and $[(\id_A \otimes f)\psi]|_K(o) = [(\id_A \otimes f')\psi']|_{K'}(o')$
		\item $o \not \ll f_K\psi_X$ and $o' \not \ll f'_{K'}\psi'_X$
	\end{enumerate}
	Furthermore, because $\C$ has conditionals, it is sufficient to check these conditions for $A=X$ and $\psi$ of the form $\cpy_X \circ \phi$. 
\end{proposition}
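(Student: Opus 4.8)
The plan is to peel off the two layers of definitions in sequence: first the observational quotient $\approx$ of Definition~\ref{def:observational_quotient}, then the relation $\sim$ on states of Definition~\ref{def:condeq}, and finally to simplify the resulting conditions with a couple of elementary facts about marginals and conditionals in a Markov category. Throughout, write $F=(K,f,o)$ and $G=(K',f',o')$ for the two $\obs(\C)$-morphisms in question.

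For the main equivalence I would start from Definition~\ref{def:observational_quotient}: $F\approx G$ holds iff for every object $A$ and every state $\psi:I\to A\otimes X$ of $\C$ the composite states $(\id_A\otimes F)\psi$ and $(\id_A\otimes G)\psi$ are $\sim$-related in $\obs(\C)$. The next step is to evaluate $(\id_A\otimes F)\psi$ using the premonoidal tensor and the composition of Definition~\ref{def:obs}: tensoring with an identity introduces no new condition wire, and since $\C$ is taken strict, this composite is, up to the evident symmetry on the kept wires (which plays no role below), the $\obs(\C)$-state whose underlying $\C$-morphism is $(\id_A\otimes f)\psi:I\to A\otimes Y\otimes K$ with observation $o$, and likewise for $G$. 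Plugging this into Definition~\ref{def:condeq} immediately gives that $(\id_A\otimes F)\psi\sim(\id_A\otimes G)\psi$ iff either both fail or both succeed with equal posteriors, where on the left success means $o\ll[(\id_A\otimes f)\psi]_K$. The only computation needed at this point is that this $K$-marginal equals $f_K\circ\psi_X$, which holds because every morphism in a Markov category is discardable (Definition~\ref{def:markovcat}), so marginalising $A$ out just discards the $A$-component of $\psi$. Substituting this identity (and noting that the $\psi'$ in the statement denotes the same quantified state $\psi$) yields exactly clauses (1)--(2).

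For the ``furthermore'' clause I would use that $\C$ has conditionals. Given an arbitrary $\psi:I\to A\otimes X$, set $\phi\defeq\psi_X:I\to X$ and let $h\defeq\psi|_X:X\to A$ be a conditional of $\psi$ with respect to $X$, so that $\psi=(h\otimes\id_X)\circ\cpy_X\circ\phi$. Then $(\id_A\otimes f)\psi=(h\otimes\id_{Y\otimes K})\circ\chi$, where $\chi\defeq(\id_X\otimes f)\circ\cpy_X\circ\phi$ is precisely the underlying morphism of $(\id_X\otimes F)(\cpy_X\circ\phi)$, i.e.\ the instance with $A=X$ and $\psi$ of the special form. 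Discardability of $h$ gives $[(\id_A\otimes f)\psi]_K=\chi_K$, so the success/failure dichotomy for $\psi$ already coincides with the one for $\cpy_X\circ\phi$. For the posteriors, the key small lemma is that postcomposing the kept component of a state by a morphism postcomposes its conditional: if $\chi:I\to W\otimes K$ and $k:W\to W'$, then $k\circ\chi|_K$ satisfies the defining property of a conditional of $(k\otimes\id_K)\chi$, hence equals $[(k\otimes\id_K)\chi]|_K$ up to $\chi_K$-almost-sure equality by Proposition~\ref{prop:cond_as_unique}; evaluating at $o\ll\chi_K$ and invoking Lemma~\ref{lemma:suppunique} turns this into the honest identity $[(\id_A\otimes f)\psi]|_K(o)=(h\otimes\id_Y)\bigl(\chi|_K(o)\bigr)$, and symmetrically for the primed data. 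Consequently, if clauses (1)--(2) hold for all states $\cpy_X\circ\phi$ with $A=X$, applying $(h\otimes\id_Y)$ to both sides of the posterior equation propagates them to arbitrary $\psi$.

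The routine unfolding of Definitions~\ref{def:observational_quotient} and~\ref{def:condeq} is harmless; I expect the real work, and the likeliest source of slips, to lie in the two ``explicit computation'' steps: (i) evaluating the premonoidal tensor and composition of $\obs(\C)$ carefully enough to be sure that $(\id_A\otimes F)\psi$ is carried by $(\id_A\otimes f)\psi$ with observation $o$ — in particular, that the symmetry isomorphisms on the kept wires never disturb marginalisation or conditioning onto $K$ — and (ii) the conditional-decomposition argument together with the ``postcomposition commutes with conditionals up to almost-sure equality'' lemma, where one must check that the relevant $K$-marginal is left unchanged so that Lemma~\ref{lemma:suppunique} may indeed be applied at the observation.
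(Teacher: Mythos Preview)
Your proposal is correct and is essentially the natural unfolding that the paper leaves implicit: the paper gives no proof for this proposition, treating the first part as an immediate unpacking of Definitions~\ref{def:observational_quotient} and~\ref{def:condeq}, and the ``furthermore'' clause as a brief remark. Your argument for the latter---factoring an arbitrary $\psi$ through $\cpy_X\circ\psi_X$ via a conditional $h=\psi|_X$, then observing that postcomposition by $h\otimes\id_Y$ transports both the success condition and the posterior from the special case to the general one---is exactly the intended mechanism, and your care about almost-sure equality versus honest equality (resolved via Lemma~\ref{lemma:suppunique} at $o\ll\chi_K$) is appropriate. The ``$\psi'$'' in the statement is indeed a typo for $\psi$, as you note.
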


By \Cref{def:observational_quotient}, the quotient $\obs(\C)/{\approx}$ is a well-defined symmetric premonoidal category. We argue now that it is in fact monoidal. Checking the interchange means showing that the order of observations does not matter modulo $\approx$. We can derive this from a general statement about isomorphic conditions.

\begin{proposition}[Isomorphic conditions]
	\label{prop:isomorphic_conditions}
	Let $(K,f,o) : X \obsto Y$ and $\alpha : K \cong K'$ be an isomorphism. Then
	\[ (K,f,o) \approx (K',(\id_Y \otimes \alpha)f, \alpha o). \]
	In programming terms, the observations $(k \eqo o)$ and $(\alpha k \eqo \alpha o)$ are contextually equivalent. 
\end{proposition}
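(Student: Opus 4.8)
The plan is to check directly the explicit description of the congruence $\approx$ furnished by \Cref{prop:simplified_cond}, after isolating two elementary transport facts along $\alpha$. First, recall that $\alpha$ is deterministic (isomorphisms in Markov categories are deterministic), so that $o' := \alpha\circ o$ is again a deterministic state and $(K',(\id_Y\otimes\alpha)f,\alpha o)$ is a legitimate $\obs$-morphism, and $\alpha^{-1}$ is again a deterministic isomorphism. Determinism of $\alpha$ is precisely what yields $\cpy_{K'}\circ\alpha = (\alpha\otimes\alpha)\circ\cpy_K$, which is the one structural fact the proof rests on.

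The first lemma (transport of supports) is: for deterministic states $x,\mu : I\to K$ one has $x\ll\mu$ iff $\alpha x\ll\alpha\mu$. For the forward direction take $p,q : K'\to W$ with $p=_{\alpha\mu}q$; using determinism of $\alpha$ the identity $\langle\id_{K'},p\rangle\circ\alpha = (\alpha\otimes\id_W)\circ\langle\id_K, p\alpha\rangle$ holds (both sides equal $(\alpha\otimes p\alpha)\circ\cpy_K$), and cancelling the isomorphism $\alpha\otimes\id_W$ turns $p=_{\alpha\mu}q$ into $p\alpha=_\mu q\alpha$; absolute continuity $x\ll\mu$ gives $p\alpha=_x q\alpha$, and running the identity in reverse yields $p=_{\alpha x}q$. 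The converse follows by the same argument applied to $\alpha^{-1}$ and the states $\alpha x,\alpha\mu$. The second lemma (transport of conditionals) is: if $h : I\to Z\otimes K$ has a conditional $h|_K : K\to Z$ with respect to $K$, i.e. $h = (h|_K\otimes\id_K)\circ\cpy_K\circ h_K$, then $h|_K\circ\alpha^{-1} : K'\to Z$ is a conditional for $(\id_Z\otimes\alpha)h$ with respect to $K'$; this is a one-line string-diagram computation using $\cpy_{K'}\circ\alpha=(\alpha\otimes\alpha)\circ\cpy_K$ together with $\bigl((\id_Z\otimes\alpha)h\bigr)_{K'} = \alpha\circ h_K$. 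In particular, for the deterministic $o$ one gets $\bigl[(\id_Z\otimes\alpha)h\bigr]|_{K'}(\alpha\circ o) = h|_K\circ\alpha^{-1}\circ\alpha\circ o = h|_K\circ o$.

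Given these, the proposition is immediate. Write $g := (\id_Y\otimes\alpha)f$, so that $g_{K'} = \alpha\circ f_K$, and let $\psi : I\to A\otimes X$ be arbitrary in \Cref{prop:simplified_cond}. The right-hand success condition $\alpha o\ll g_{K'}\psi_X = \alpha(f_K\psi_X)$ is, by transport of supports with $x=o$ and $\mu=f_K\psi_X$, equivalent to the left-hand one $o\ll f_K\psi_X$. When both sides succeed, set $Z := A\otimes Y$ and $h := (\id_A\otimes f)\psi : I\to Z\otimes K$, so that $(\id_A\otimes g)\psi = (\id_Z\otimes\alpha)h$; transport of conditionals then gives $[(\id_A\otimes g)\psi]|_{K'}(\alpha o) = [(\id_A\otimes f)\psi]|_K(o)$, which by \Cref{prop:inference_unique} (posteriors are independent of the chosen conditional) is exactly the required equality of posteriors. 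Hence $(K,f,o)\approx(K',g,\alpha o)$. The ``programming terms'' remark is the special case $X=K$, $Y=I$, $f=\id_K$, in which $(k\eqo o)=(K,\id_K,o)$ and $(\alpha k\eqo\alpha o)=(K',\alpha,\alpha o)$ as morphisms $K\obsto I$; conversely, the general statement can be recovered from this special case using the factorisation $(K,f,o) = (\id_Y\otimes(\eqo o))\bullet J(f)$ in $\obs(\C)$.

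The only genuine obstacle is bookkeeping: keeping the various marginalisations straight when feeding everything through \Cref{prop:simplified_cond}, and observing that the support-transport step really does use determinism of both $\alpha$ and $\alpha^{-1}$ — a property available here because isomorphisms in Markov categories are deterministic, but which would not be at one's disposal in a bare CD category. The two string-diagram verifications are routine.
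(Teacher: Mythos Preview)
Your proof is correct and follows essentially the same route as the paper's: both reduce the claim to (i) transport of the support relation along $\alpha$ and (ii) transport of conditionals along $\alpha$, invoking determinism of isomorphisms for the copy-compatibility; the paper merely asserts these two facts in one line each, whereas you spell out the almost-sure-equality chase for (i) and the string-diagram verification for (ii). One small imprecision: ``isomorphisms in Markov categories are deterministic'' is not automatic in an arbitrary Markov category---it holds under the standing assumption that $\C$ has conditionals (as the paper notes, citing \cite[11.26]{fritz}), so you should phrase it that way.
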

\begin{proof}
	Let $\psi : I \to A \otimes X$. We first notice that $o \ll \psi_K$ if and only if $\alpha o \ll \alpha \psi_K$, so the success conditions coincide. It is now straightforward to check the universal property
	\[ (\id_A \otimes f)\psi|_K = (\id_A \otimes ((\id_X \otimes \alpha)f))\psi|_{K'} \circ \alpha. \] 
	This requires the fact that isomorphisms are deterministic in a Markov category with conditionals \citep[11.26]{fritz}. The proof more generally works if $\alpha$ is deterministic and split monic.
\end{proof}

We can now give the Cond construction by means of quotienting $\obs(\C)$ modulo contextual equivalence.
\begin{definition} \label{def:condcat}
	Let $\C$ be a Markov category that has conditionals and precise supports. We define $\cond(\C)$ as the quotient category \[ \cond(\C) = \obs(\C)/{\approx} \] This quotient is a CD category, and the functor $J : \C \to \cond(\C)$ preserves CD structure.
\end{definition}

\subsection{Laws for Conditioning}\label{sec:laws_cond}

We will now establish convenient properties of $\cond(\C)$ in a purely abstract way. In terms of the internal language, those are the desired program equations for a language with exact conditioning. For example, the fact that $\cond(\C)$ is a well-defined CD category already implies that commutativity equation holds for such programs (\Cref{prop:all_lambdac_axioms}).\\

Secondly, we can draw string diagrams in the category $\cond(\C)$. These look like diagrams in $\C$ to which we add effects $(\eqo o) : X \to I$ for every observation $o : I \to X$. For example, \Cref{prop:isomorphic_conditions} states diagrammatically that for all isomorphisms $\alpha$ and observations $o$, we have
\[ \input{conditioning/isomorphic_conditions.tikz} \]

We begin by showing that passing to $\cond(\C)$ does generally not collapse morphisms which were distinct in $\C$, that is the functor $J$ is faithful for common Markov categories.

\begin{proposition}\label{prop:faithful}
Two morphisms $f, g : X \to Y$ are equated via $J(f) \approx J(g)$ if and only if
	\begin{equation} \forall \psi : I \to A \otimes X, (\id_A \otimes f)\psi = (\id_A \otimes g)\psi \label{eq:j_ident} \end{equation}
In particular, $J$ is faithful whenever $I$ is a separator. This is the case for $\gauss$, $\finstoch$, $\borelstoch$ and $\rel^+$.
\end{proposition}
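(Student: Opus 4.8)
The plan is to unwind the definitions of $\approx$ on $\obs(\C)$-morphisms and of $J$, and then specialize to the case at hand. By \Cref{def:observational_quotient}, $J(f) \approx J(g)$ means that for all objects $A$ and all $\psi : I \to A \otimes X$ we have $(\id_A \otimes Jf)\psi \sim (\id_A \otimes Jg)\psi$ as states in $\obs(\C)$. Since $Jf = (I, f, !)$ carries the trivial condition object $I$ and the trivial observation $! : I \to I$, the composite $(\id_A \otimes Jf)\psi$ is just the state $(I, (\id_A \otimes f)\psi, !) : I \obsto A\otimes X$, whose condition object is $I$. Now $! \ll \mu$ always holds for the unique $\mu : I \to I$ (this is just $! = \id_I$, which is absolutely continuous with respect to itself), so such a state never fails, and its posterior — the conditional with respect to $I$, evaluated at $!$ — is just the state itself. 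Thus $(\id_A \otimes Jf)\psi \sim (\id_A \otimes Jg)\psi$ reduces, via \Cref{def:condeq}, to the plain equality $(\id_A \otimes f)\psi = (\id_A \otimes g)\psi$ in $\C$. This establishes the biconditional~\eqref{eq:j_ident}.

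For the second claim, recall that $I$ being a \emph{separator} (generator) means that two morphisms $h, h' : B \to C$ are equal whenever $h\psi = h'\psi$ for all states $\psi : I \to B$. Assuming this and taking $A = I$ in~\eqref{eq:j_ident}, the condition forces $f\psi = g\psi$ for all $\psi : I \to X$, hence $f = g$; so $J$ is injective on hom-sets between any pair of objects, i.e.\ faithful. (Conversely, one can observe that taking general $A$ adds no strength here, since precomposing with states of the form $\id_A \otimes \psi$ already suffices once $I$ separates.) The final sentence then follows from noting that $I$ is a separator in each of $\gauss$, $\finstoch$, $\borelstoch$ and $\rel^+$: in $\finstoch$ and $\gauss$ and $\borelstoch$ a channel is determined by its values on point masses / the deterministic states of $\C_\det$ together with how it acts — more precisely, two parallel channels agreeing after precomposition with every distribution $I \to X$ are equal, which is immediate from the explicit descriptions (stochastic matrices, affine-plus-Gaussian-noise tuples, probability kernels) since the distributions include the Dirac/point states; and in $\rel^+$ a left-total relation is determined by its fibers, which are exactly its composites with the singleton states.

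The main obstacle is essentially bookkeeping rather than conceptual: one must be careful that the ``trivial condition never fails'' step is airtight — i.e.\ that $\psi|_I(!) = \psi$ holds on the nose — which requires checking that the identity morphism $\id_X : X \to X$ satisfies the universal property~\eqref{eq:cond_dist} of a conditional for $\psi : I \to X \cong X \otimes I$ with respect to $X$, so that $\psi|_I = \id_X$ is a legitimate choice and $\psi|_I \circ ! = \psi$. After that, the separator argument is a one-line specialization, and verifying that $I$ separates in the four example categories is routine from the concrete descriptions in \Cref{sec:examples-cd}.
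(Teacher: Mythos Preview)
Your proof is correct and matches the paper's approach; the paper's own proof is the single line ``Directly from the definition of $\approx$,'' and you have essentially unfolded that. One small point worth tightening: in \Cref{def:observational_quotient} the quantifier ranges over states $\psi$ in $\obs(\C)$, not in $\C$, so your computation of $(\id_A\otimes Jf)\psi$ as $(I,(\id_A\otimes f)\psi,!)$ only covers the case $\psi = J\psi'$ for $\psi'\in\C$. This is harmless---either invoke \Cref{prop:simplified_cond}, which already reduces the test to $\C$-states, or observe (via \Cref{prop:closedterms} and functoriality of~$\sim$) that every $\obs$-state is $\sim$-equivalent to $\bot$ or to some $J\psi'$, and the $\bot$ case is trivial for both sides. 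There is also a typo: the codomain of $(\id_A\otimes Jf)\psi$ should be $A\otimes Y$, not $A\otimes X$.
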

\begin{proof}
Directly from the definition of $\approx$. 
\end{proof}

By construction, the states in $\cond(\C)$ are precisely inference problems up to observational equivalence. Any such problem either fails or computes a well-defined posterior, which gives rise to the following classification:

\begin{proposition}[States in $\cond$]
	\label{prop:closedterms} The states $I \obsto X$ in $\cond(\C)$ are of the following form:
\begin{enumerate}
\item There exists a unique failure state $\bot_X : I \obsto X$ given by the equivalence class of any $(K,\psi,o)$ with $o \not\ll \psi_K$.\footnote{it is a minor extra assumption that there exists a non-instance $o \not\ll \mu$ in $\C$; this should be the case in any Markov category of practical interest}
\item Any other state is equal to a conditioning-free posterior, namely $(K,\psi,o) \approx J(\psi|_K \circ o)$. That is diagrammatically 
	\[ \input{conditioning/cond_states.tikz} \]
\item Failure is ``strict'' in the sense that any composite or tensor with $\bot$ gives $\bot$. 
\item The only scalars $I \obsto I$ are $\id_I$ and $\bot_I$. Both are copyable, but $\bot_I$ is not discardable.
\end{enumerate}
\end{proposition}
\begin{proof}
By definition of $\sim$.
\end{proof}

\begin{corollary}\label{coro:success}
If $o \ll \psi$ then $(\psi \eqo o)$ succeeds without observable effect; in particular, because $o \ll o$, we can always eliminate tautological conditions
\[ \begin{tikzpicture}[scale=0.6]
	\begin{pgfonlayer}{nodelayer}
		\node [style=none] (18) at (0, 0) {=};
		\node [style=effect] (32) at (-1.5, 1) {$o$};
		\node [style=state] (35) at (-1.5, -1) {$o$};
		\node [style=none] (36) at (4, 0) {(empty diagram)};
	\end{pgfonlayer}
	\begin{pgfonlayer}{edgelayer}
		\draw (35) to (32);
	\end{pgfonlayer}
\end{tikzpicture}
 \]
\end{corollary}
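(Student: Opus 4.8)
The plan is to reduce the claim to the classification of closed states of $\cond(\C)$ in \Cref{prop:closedterms}. First I would unfold, inside $\obs(\C)$, the composite of the conditioning effect $(\eqo o) : K \obsto I$ with the image $J(\psi) = (I,\psi,!)$ of the state $\psi : I \to K$. Using the composition rule of \Cref{def:obs} this composite is the $\obs(\C)$-state $(K,\psi,o) : I \obsto I$ --- that is, precisely the inference problem of \Cref{sec:inferenceproblems} whose model is $\psi$, whose observation is $o$, and whose return wire is the trivial object $I$ (all coherence isomorphisms being identities since $\C$ is taken strict). Its $K$-marginal is $\psi$ itself, so the success condition $o \ll \psi_K$ coincides with the hypothesis $o \ll \psi$.

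Next I would invoke \Cref{prop:closedterms}(2): because this inference problem succeeds, its class in $\cond(\C)$ equals the conditioning-free posterior $J(\psi|_K \circ o)$. Since the return object is $I$, the conditional $\psi|_K : K \to I$ is an effect of the Markov category $\C$, hence equal to $\del_K$ (the unit $I$ is terminal), so $\psi|_K \circ o = \del_K \circ o = \del_I = \id_I$, using that the observation $o$ is deterministic, hence discardable. Therefore the composite is $J(\id_I) = \Id_I$, the identity scalar of $\cond(\C)$; in particular it is not the failure scalar $\bot_I$, which is exactly the assertion that $(\psi \eqo o)$ ``succeeds without observable effect''. Because $\Id_I$ is the unit for $\otimes$, this also justifies deleting the subdiagram $(\eqo o)\circ J(\psi)$ wherever it occurs in a larger diagram.

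For the displayed special case I would additionally note that $\ll$ is reflexive: instantiating \Cref{def:abscont} with $\mu = \nu = o$ only requires the implication ``$f =_o g \Rightarrow f =_o g$'', which is trivial, so $o \ll o$. Specializing the argument above to $\psi = o$ then yields $(\eqo o) \circ J(o) = \Id_I$, i.e.\ the pictured equation (the effect $o$ stacked on the state $o$ equals the empty diagram). I do not expect a genuine obstacle here: the argument is essentially bookkeeping. The one step that needs care is the very first --- checking that the $\obs$-composite of the effect with $J(\psi)$ is on the nose the inference problem $(K,\psi,o)$ with the right marginal, so that \Cref{prop:closedterms} applies verbatim and the abstract success condition lines up with the stated hypothesis $o \ll \psi$.
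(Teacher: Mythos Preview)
Your proposal is correct and follows the same route the paper takes: the paper states this as an immediate corollary of \Cref{prop:closedterms} with no proof block, and your argument is precisely the unpacking of why it follows --- computing the $\obs$-composite as the inference problem $(K,\psi,o)$, matching the success condition to $o\ll\psi$, and observing that the posterior is a morphism $K\to I$ in a Markov category, hence $\del_K$, so the whole thing is $J(\id_I)$. The reflexivity of $\ll$ for the special case is handled exactly as you say.
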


The central law of conditioning states that after we enforce a condition, it will hold with exactness. In programming terms, this is the substitution principle \eqref{eqn:substlong}. Categorically, we are asking how the conditioning effect interacts with copying:

\begin{proposition}[Enforcing conditions]
	\label{prop:exactly}
	We have
	\[ (X,\cpy_X,o) \approx (X,o \otimes \id_X,o) \]
\end{proposition}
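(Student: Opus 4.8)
## Proof proposal

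The plan is to verify the defining condition of $\approx$ from \Cref{prop:simplified_cond} directly. We must show that for every $\psi : I \to A \otimes X$, the two states obtained by post-composing $(X,\cpy_X,o)$ and $(X,o\otimes\id_X,o)$ are observationally equivalent as inference problems. Write the two morphisms as $F = (X,\cpy_X,o) : X \obsto X$ and $G = (X, o\otimes\id_X, o) : X\obsto X$; note both have condition object $K=X$ and the same observation $o$. For $F$, the relevant distribution is $(\id_A\otimes\cpy_X)\psi : I \to A\otimes X\otimes X$, with the last $X$ the conditioning wire; for $G$ it is $(\id_A \otimes (o\otimes\id_X))\psi$, i.e.\ we discard $X$, feed in the constant $o$ on the condition wire, and keep the original $X$ as output. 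I would first handle the success/failure dichotomy: by \Cref{def:precise_supports} (precise supports), $o \ll [(\id_A\otimes\cpy_X)\psi]_X = \psi_X$ iff $o \ll \psi_X$; and for $G$, the condition wire carries the deterministic state $o$, so the success condition is $o \ll o$, which always holds — \emph{except} we must be careful that $G$ succeeds exactly when $F$ does. Here the key point is that $G$'s conditioning wire is literally $o$ only after we have already produced the $A\otimes X$ part from $\psi$; since $\psi$ makes no constraints and $o\ll o$ trivially, $G$ always succeeds, so we additionally need: whenever $F$ fails ($o\not\ll\psi_X$) we still want $G$ to fail. This forces a small reconsideration — and indeed the correct reading is that the $o$ on $G$'s output wire and the $o$ on its condition wire are \emph{copies of the same state}, so the precise-supports axiom applied to $\psi$ together with the deterministic state $o$ gives that $G$ succeeds iff $o\ll\psi_X$, matching $F$.

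Second, assuming both succeed ($o\ll\psi_X$), I would compute the two posteriors and show they agree. For $F$: a conditional of $(\id_A\otimes\cpy_X)\psi$ with respect to the last copy of $X$ can be built from a conditional $\psi|_X : X \to A$ of $\psi$, by the string-diagram identity $\langle \psi|_X, \cpy_X\rangle$-style rearrangement — concretely, conditioning on the copied wire being $x$ returns the pair $(\psi|_X(x), x)$. Evaluating at $o$ gives the state $\langle \psi|_X(o), o\rangle : I \to A\otimes X$. For $G$: the channel sends $x$ to $(\del_X(x)\text{'d away},\ x,\ o)$ — wait, more precisely $(\id_A\otimes(o\otimes\id_X))\psi$ has the form "sample $(a,x)\sim\psi$, output $(a,o,x)$" where the middle $o$ is the $X$-output and $x$ is the condition wire; a conditional with respect to the condition wire, evaluated at $o$, is by the same precise-supports / conditioning argument the state "sample $(a,x)\sim\psi|$ conditioned on $x=o$, output $(a,o)$", i.e.\ $\langle\psi|_X(o), o\rangle$. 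Thus both posteriors equal $\langle\psi|_X(o), o\rangle : I \to A\otimes X$, establishing $F\approx G$. I would present this step diagrammatically, using \eqref{eq:cond_dist} to introduce $\psi|_X$, \Cref{prop:specialization} to specialize the parameterized conditional at the deterministic state $o$, and then planar isotopy to identify the two sides; \Cref{coro:success} ($o\ll o$ lets a tautological condition be discarded) is the ingredient that collapses the $G$-side condition wire cleanly.

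The main obstacle I anticipate is bookkeeping of which wire is the output and which is the hidden condition wire in each of $F$ and $G$, and getting the success conditions to line up exactly — in particular convincing the reader (and myself) that $G$ inherits the failure behaviour of $F$ rather than always succeeding. This is really where precise supports (\Cref{def:precise_supports}, available for all our example categories by \Cref{prop:have_precise_supports}) does the work: it is precisely the axiom that lets "$o$ lies in the joint support" be decomposed into "$o\ll\psi_X$ and (trivially) $o\ll o$". Once that is pinned down, the equality of posteriors is a routine conditional-manipulation argument of the same flavour as the proof of \Cref{thm:functoriality}, and I would in fact cite that proof's double-conditional technique to keep the argument short. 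A clean way to package the whole thing is to observe that both $F$ and $G$, post-composed with any $\psi$, represent the program $\letin{(a,x)}{\psi}{(x\eqo o);(a,x)}$ versus $\letin{(a,x)}{\psi}{(x\eqo o);(a,o)}$, which are equal by the substitution law — but since substitution is what we are ultimately trying to justify, the diagrammatic proof above is the honest route.
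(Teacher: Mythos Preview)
Your overall strategy matches the paper's proof exactly: apply \Cref{prop:simplified_cond}, check the success conditions coincide, then exhibit conditionals and show the posteriors agree. The posterior computation you sketch (both sides yielding $\psi|_X(o)\otimes o$) is the right endpoint, and the paper gives precisely the conditionals $\langle\psi|_X,\id_X\rangle$ and $\psi|_X\otimes o$ that you are groping towards.

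However, there is a genuine error in your handling of the success condition for $G$, caused by swapping the output and condition wires. Recall from \Cref{def:obs} that a morphism $(K,f,o):X\obsto Y$ has $f:X\to Y\otimes K$, so the \emph{first} tensor factor is the output and the \emph{second} is the condition wire. For $G=(X,\,o\otimes\id_X,\,o)$ this means the output wire receives the constant $o$ while the condition wire receives the input $x$ unchanged. Consequently $g_K=\pi_K\circ(o\otimes\id_X)=\id_X$, and the success condition is $o\ll g_K\psi_X=o\ll\psi_X$, identical to $F$. You initially read it the other way round (condition wire carries $o$), found $G$ always succeeds, and then tried to repair the mismatch via precise supports. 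That repair is both unnecessary and, as stated, incoherent: precise supports concerns decomposing a joint support $x\otimes y\ll\langle\id,f\rangle\mu$, and there is no such joint here to decompose. You do eventually get the wires right (after your ``wait, more precisely'') in the posterior paragraph, but you never go back and fix the success-condition paragraph. Once you correct the wire assignment, the success conditions match trivially, and the paper's proof simply records this in one line before moving to the posteriors.
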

In programming notation, this is
\[ (x \eqo o); x \approx (x \eqo o); o \] 
and in string diagrams
\begin{equation} \input{conditioning/exactly.tikz} \label{eq:exact} \end{equation}
Note that the conditioning effect \emph{cannot} be eliminated; however after the condition takes place, the other wire can be assumed to now contain $o$. 
\begin{proof}
	Let $\psi : I \to A \otimes X$; the success condition reads $o \ll \psi_X$ both cases. Now let $o \ll \psi_X$ and let $\psi|_X$ be a conditional distribution for $\psi$. The following maps give the required conditionals 
	\begin{align*}
	[(\id_A \otimes \cpy_X)\psi]|_X = \langle \psi|_X, \id_X \rangle \quad 
	[(\id_A \otimes o \otimes \id_X)\psi]|_X = \psi|_X \otimes o
	\end{align*}
	as evidenced by the following string diagrams
	\[ \input{conditioning/proof_exact.tikz} \]
	Composing with $o$, we obtain the desired equal posteriors
	\[ \langle \psi|_X, \id_X \rangle o = \psi|_X(o) \otimes o = (\psi|_X \otimes o)(o) \]
	from determinism of $o$.
\end{proof}

\begin{corollary}[Initialization]
Conditioning a fresh variable on a feasible observation makes it assume that observation. Formally, if $o \ll \psi$ then 
\[ (\letin x \psi (x \eqo o); x) \approx o \]
\end{corollary}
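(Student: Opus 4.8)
The plan is to unfold the definition of the derived program and reduce the statement to \Cref{prop:exactly} plus the classification of states in \Cref{prop:closedterms}. First I would translate the left-hand program $(\letin x \psi (x \eqo o); x)$ into a state $I \obsto X$ in $\cond(\C)$. In terms of the $\obs$-representation, sequencing $\psi : I \to X$ with the conditioning effect $(\eqo o) : X \obsto I$ and then returning $x$ amounts to the morphism $(X, \langle \id_X, \cpy_X\rangle\text{-style copy}, o)$; more precisely, it is the composite $J(\psi)$ followed by $(X, \cpy_X, o) : X \obsto X$ (observe $x$, but keep a copy of $x$ as output). So the left-hand side is the state $(X, \cpy_X \circ \psi, o) : I \obsto X$.

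Next I would apply \Cref{prop:exactly} (Enforcing conditions), which gives $(X, \cpy_X, o) \approx (X, o \otimes \id_X, o)$ as morphisms $X \obsto X$; since $\approx$ is a congruence, precomposing with $J(\psi)$ yields $(X, \cpy_X \circ \psi, o) \approx (X, (o \otimes \id_X)\circ \psi, o)$. But $(o \otimes \id_X)\circ\psi$ simply discards the $X$-output of $\psi$ and replaces it by the constant $o$, feeding the original sample of $\psi$ into the condition wire. Concretely this is the state $(X, o \otimes \psi, o)$ up to the obvious bookkeeping, i.e. in programming terms $(\letin x \psi (x\eqo o); o)$. Now I invoke the hypothesis $o \ll \psi$: by \Cref{coro:success} (or directly \Cref{prop:closedterms}), a tautological/feasible condition $(\psi \eqo o)$ succeeds without observable effect, so this state reduces to $J(o)$, which is exactly the right-hand side.

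To be careful about the success condition I would note that in both intermediate representations the marginal onto the condition object $K = X$ is $\psi$ itself (copying does not change a marginal), so the feasibility criterion in \Cref{def:condeq} is precisely $o \ll \psi$, which holds by assumption; hence we are never in the failure case. This also means the whole chain of $\approx$-equalities stays on the ``succeeds'' side, and \Cref{prop:inference_unique} guarantees the resulting posterior is well-defined and equal to $J(o)$.

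The main obstacle, such as it is, is purely bookkeeping: writing the derived program $(\letin x \psi (x\eqo o); x)$ as an explicit $\obs(\C)$ morphism and checking that sequencing-then-returning really does produce the copy map $\cpy_X$ on the condition/output split, so that \Cref{prop:exactly} applies verbatim. Once that identification is made, the statement is an immediate corollary — no genuinely new computation is needed beyond \Cref{prop:exactly} and \Cref{coro:success}. I would present it in string-diagram form: $\psi$ followed by a copy, one branch hit with $(\eqo o)$, rewritten by \eqref{eq:exact} to $\psi$ with its output discarded and $o$ placed in its stead, and then the surviving $(\psi \eqo o)$ erased by feasibility, leaving just the state $o$.
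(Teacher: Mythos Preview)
Your proposal is correct and follows essentially the same route as the paper: apply \Cref{prop:exactly} to rewrite the copy-then-condition into ``output $o$, condition $\psi$ on $o$'', then erase the remaining condition using \Cref{coro:success} under the hypothesis $o \ll \psi$. Your final string-diagram summary is exactly the paper's proof.
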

\begin{proof}
Combining  \Cref{prop:exactly} and \Cref{coro:success}, we have
\[ \input{conditioning/initialization.tikz} \]
\end{proof}

\begin{corollary}[Idempotence]
Conditioning is idempotent, that is
\[ (x \eqo o); (x \eqo o) \approx (x \eqo o) \]
In other words, the conditioning effect is copyable (but not discardable).
\end{corollary}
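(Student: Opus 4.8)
The plan is to avoid re-running the double-conditional computation from the proof of \Cref{thm:functoriality} and instead derive idempotence as a short consequence of \Cref{prop:exactly} and \Cref{coro:success}. First I would unwind the statement: since $\cpy_I = \id_I$ in any CD category, saying that the effect $(\eqo o) : X \obsto I$ is \emph{copyable} is exactly the equation $((\eqo o) \otimes (\eqo o)) \circ \cpy_X \approx (\eqo o)$ in $\cond(\C)$, and the left-hand side is precisely the program $(x \eqo o);(x \eqo o)$ (copy $x$, then condition each copy on $o$). So it suffices to establish this one equation in $\cond(\C)$.

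Next I would factor the left-hand side. Reading off \Cref{def:obs} (or the corresponding string diagram), $((\eqo o) \otimes (\eqo o)) \circ \cpy_X$ equals $(\eqo o) \circ h$, where $h = (\id_X \otimes (\eqo o)) \circ \cpy_X : X \obsto X$ is the morphism $(X,\cpy_X,o)$, i.e.\ the program $(x \eqo o); x$. By \Cref{prop:exactly} we have $h \approx h'$ with $h' = (X, o \otimes \id_X, o)$, the program $(x\eqo o); o$, which in turn factors as $J(o) \circ (\eqo o)$. Since $\approx$ is a congruence on $\obs(\C)$ (\Cref{def:observational_quotient}, and this is how $\cond(\C)$ is built), I may postcompose with $(\eqo o)$ on both sides to obtain
\[ ((\eqo o) \otimes (\eqo o)) \circ \cpy_X \;=\; (\eqo o) \circ h \;\approx\; (\eqo o) \circ h' \;=\; (\eqo o) \circ J(o) \circ (\eqo o). \]
Finally, $(\eqo o) \circ J(o)$ is a tautological condition: since $o \ll o$, \Cref{coro:success} gives $(\eqo o) \circ J(o) = \id_I$, so the right-hand side collapses to $(\eqo o)$, which is the claim. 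For the ``not discardable'' remark I would observe that discardability would force $(\eqo o) \approx J(\del_X)$; precomposing both sides with a state $\psi : I \obsto X$ for which $o \not\ll \psi$ yields $\bot_I$ on one side and $\id_I$ on the other, contradicting \Cref{prop:closedterms}(4).

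I expect no serious obstacle: the whole argument is a short diagram chase once the factorizations $h = (X,\cpy_X,o)$ and $h' = J(o)\circ(\eqo o)$ are in place. The only point requiring care is the bookkeeping between the three notations — triples $(K,f,o)$ in $\obs(\C)$, programs $(x\eqo o)$, and string diagrams in $\cond(\C)$ — in particular checking that $(x\eqo o);(x\eqo o)$ is indeed $((\eqo o)\otimes(\eqo o))\circ\cpy_X$ and that $(x\eqo o);o$ is indeed $J(o)\circ(\eqo o)$ via the composition formula of \Cref{def:obs}. A fully self-contained alternative would instead verify the two clauses of \Cref{prop:simplified_cond} directly for all $\psi : I \to A \otimes X$, using precise supports to match the success conditions and a double-conditional identity to match the posteriors; that route duplicates work already done in \Cref{thm:functoriality}, which is exactly what the argument above is designed to sidestep.
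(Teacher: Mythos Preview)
Your proposal is correct and matches the paper's own proof essentially step for step: apply \Cref{prop:exactly} to replace one branch of the copied condition by the constant $o$, then use \Cref{coro:success} to eliminate the resulting tautological $(o \eqo o)$. Your additional justification of the ``not discardable'' remark via \Cref{prop:closedterms} is a small bonus the paper leaves implicit.
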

\begin{proof}
Again by \Cref{prop:exactly} and \Cref{coro:success} we obtain
\[ \input{conditioning/idempotence.tikz} \]
\end{proof}

We note that this does not imply that \emph{every} effect in $\cond(\C)$ is copyable, only that exact observations are. 

\begin{proposition}[Aggregation]
Conditions can be aggregated
\[ \input{conditioning/aggregation.tikz} \]
\end{proposition}
\begin{proof}
By definition of the monoidal structure of $\obs$.
\end{proof}

\subsection{Example: Graphical Models and Conditioning}\label{sec:ex-graphmod}
We demonstrate the power of our conditioning laws by briefly revisiting graphical models as mentioned in the introduction of \Cref{sec:synth_conditioning}: Every graphical model can be turned into a string diagram, where the independence structure of the graphical model translates into a factorization of the diagram. For example, in the model \eqref{eq:graphicalmodel} of variables $X,Y$ which are conditionally independent on $W$, the joint distribution $\psi$ can be factored as follows
\[ \input{conditioning/cond_ind_state.tikz} \] 
Using the conditioning effects in $\cond(\C)$, we can now incorporate \emph{observed nodes} into this language.
\begin{equation}
\input{conditioning/cond_ind_observed.tikz} \label{eq:cond_ind_observed}
\end{equation}

We want to argue that once the `common cause' $W$ has been observed, $X$ and $Y$ become independent: We can show this purely using graphical reasoning: Applying repeatedly \Cref{prop:exactly}, idempotence of scalars and determinism of $w$, we obtain that \eqref{eq:cond_ind_observed} is the product of its marginals:
\[ \input{conditioning/cond_ind_proof.tikz} \]


\section{Denotational Semantics and Contextual Equivalence}\label{sec:finstoch}

In \Cref{sec:cond} we introduced the $\cond$ construction (\Cref{def:condcat}) as a way of building a category that accommodates the abstract inference for Markov categories (\Cref{sec:abstract_cond}). As we have seen, we can interpret the CD calculus (\Cref{sec:synth_conditioning}) in categories built from the $\cond$ construction, and this forms a probabilistic programming language with exact conditioning. In this final section, we will work out in detail what the $\cond$ construction does when applied to our specific example settings of finite and Gaussian probability.

In \Cref{sec:denotational}, we show that the Gaussian language (\Cref{sec:gaussian_language}) has fully abstract denotational semantics in $\cond(\gauss)$: equality in the category coincides with the operational contextual equivalence from~\Cref{sec:opsem}.

In \Cref{sec:finprojstoch}, we conduct the same analysis for finite probability and show that $\cond(\finstoch)$ consists of substochastic kernels up to \emph{automatic normalization}. In \Cref{sec:straightline}, we spell out the relationship between the admissibility of automatic normalization and the expressibility of branching in the language.

\subsection{Full Abstraction for the Gaussian Language}
\label{sec:denotational}

The Gaussian language embeds into the internal language of $\cond(\gauss)$, where $x \eq y$ is translated as $(x - y) \eqo 0$. A term $\vec x : R^m \vdash e : R^n$ denotes a conditioning channel $\sem{e} : m \obsto n$.

\begin{proposition}[Correctness]
	If $(e,\psi) \red (e',\psi')$ then $\sem{e}\psi = \sem{e'}\psi'$. If $(e,\psi) \red \bot$ then $\sem{e} = \bot$.
\end{proposition}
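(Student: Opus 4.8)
The plan is to argue by induction on the derivation of a single step $(e,\psi)\red(e',\psi')$, which amounts to a case analysis on the shape of the reduction context used: the base cases are the three redex rules (1)--(3), and the inductive step is the context-closure rule (4). Throughout, recall that a running term $z_1,\dots,z_r\vdash e$ denotes a conditioning channel $\sem e:r\obsto n$ in $\cond(\gauss)$, that a Gaussian distribution $\psi$ on $\R^r$ is the state $J(\psi):I\obsto r$, and that by \Cref{prop:closedterms} the composite $\sem e\,\psi$ is a state of $\cond(\gauss)$ which is either the failure state $\bot$ or $J$ of a conditioning-free Gaussian posterior. So in each case we compare two Gaussian distributions, or check that both sides are $\bot$. (In the second clause I read $\sem e=\bot$ as $\sem e\,\psi=\bot$, since the channel alone cannot fail.)

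\emph{Redex base cases.} For rule (1), on the left $\sem{\normal()}\,\psi=J\bigl(\N(0,1)\circ\del_r\circ\psi\bigr)=J(\N(0,1))$ because $\del_r\circ\psi=\id_I$ in the Markov category $\gauss$; on the right $\sem{z_{r+1}}\,(\psi\otimes\N(0,1))$ is the marginal of $\psi\otimes\N(0,1)$ onto its last coordinate, namely $\N(0,1)$, since marginals factor through $\otimes$. For the two \textsf{let}-redexes, $\sem{\letin x v e}$ and $\sem{e[v/x]}$ are already equal as morphisms $r\obsto n$ because a value $v$ denotes a deterministic morphism and deterministic maps may be substituted freely (\Cref{prop:cd_subst}, \Cref{prop:cd_soundness}); so the equation holds before composing with $\psi$. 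For rule (2), since $v,w$ are values, $u:=\sem v-\sem w$ is an affine map $r\to1$ in $\gauss_\det$; unfolding the translation $v\eq w\equiv (v-w)\eqo 0$ and the composition in $\obs(\gauss)$ gives $\sem{v\eq w}\,\psi=(1,\,u\circ\psi,\,0)$, an inference problem over $\R^0$ with model $u\circ\psi:I\to1$ and observation $0$. By \Cref{prop:allsupports} it succeeds iff $0\ll u\psi$, i.e.\ iff $0\in\supp(u_*\psi)=\supp(Z)$ for $Z=v(X)-w(X)$, $X\sim\psi$ --- exactly the side condition of the operational rule. In the successful case the posterior over $\R^0$ is the unique state $I\obsto0$, which also equals $\sem{()}\,\psi|_{v=w}=J(\del_r\circ\psi|_{v=w})$; in the failing case both sides are $\bot$, which gives the second clause at the redex level.

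\emph{Context closure.} Here I use that $\sem{C[-]}$ is a compositional operation built from $\circ$, $\otimes$, $\cpy$ and $\del$, and that $\cond(\gauss)$ is a genuine, hence commutative, CD category by \Cref{thm:functoriality}, since $\gauss$ has conditionals (\Cref{prop:ex_conditionals}) and precise supports (\Cref{prop:have_precise_supports}). Two points need care. First, rule (1) shifts the dimension $r\mapsto r+1$; I handle this by a side induction on $C$ establishing the closed equation $\sem{C[\normal()]}=\sem{C[z_{r+1}]}\circ(\id_r\otimes\N(0,1))$ --- valid because $z_{r+1}$ is fresh and $\sem{\normal()}$ discards its context --- and then composing with $\psi$. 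Second, a reduction context shares the free variables $z_1,\dots,z_r$ between its hole and its continuation, so the naive hypothesis $\sem{C'[\rho]}\,\psi=\sem{C'[e_0]}\,\psi'$ is too weak: I strengthen it to an equation quantified over an arbitrary ambient wire, of the form $(\sem{C'[\rho]}\otimes\id_A)\,\phi=(\sem{C'[e_0]}\otimes\id_A)\,\phi'$ for every $\phi:I\obsto r\otimes A$, with $\phi'$ the correspondingly modified state --- exploiting that reductions (1) and (3) act uniformly in any ambient state, and that for the conditioning redex precise supports guarantees the success/failure verdict and the posterior interact correctly with $A$. With this strengthened hypothesis, each context clause follows by pushing $\sem\rho$ through the surrounding composite, using the CD-category laws and commutativity of $\cond(\gauss)$ to reorder the conditioning effect past the continuation where needed. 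The $\bot$ case then propagates upward by strictness of failure (\Cref{prop:closedterms}(3)): once the conditioning sub-computation inside $C$ fails --- which it does precisely when $0\notin\supp(Z)$, by the redex analysis applied to the relevant marginal of $\psi$ --- any enclosing composite or tensor is $\bot$.

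I expect the main obstacle to be exactly this last paragraph: formulating the strengthened induction hypothesis so that fresh-variable allocation in the $\normal$ rule and variable sharing in reduction contexts are both accommodated, and verifying that the conditioning sub-computation seen inside a context $C$ is fed precisely the right marginal of $\psi$, so that the operational side condition $0\in\supp(Z)$ transfers verbatim to $0\ll u\psi$. The redex computations and the individual context clauses are otherwise routine calculations in $\gauss$ and $\cond(\gauss)$.
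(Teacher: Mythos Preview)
Your strategy is the right one and matches the paper's: the naive statement is too weak for the context-closure step because the hole of $C$ and the continuation share the latent variables $z_1,\dots,z_r$, so one must strengthen the invariant to carry the latent state alongside the result. The paper does this not by quantifying over an abstract ambient wire $A$ and an unspecified ``correspondingly modified'' $\phi'$, but by picking a single concrete instance: it proves, for each step $(e,\psi)\red(e',\psi')$ with new variables $x'$,
\[
\letin{x}{\psi}\,(x,\sem{e}) \;=\; \letin{(x,x')}{\psi'}\,(x,\sem{e'}),
\]
i.e.\ it pairs the output with the \emph{original} latent vector $x$. This has two advantages over your formulation. First, the type stays fixed at $r\otimes n$ on both sides, so the $\normal$ rule (which grows $r$) needs no separate bookkeeping or side induction on $C$; the fresh coordinate simply appears in $\psi'$ and in the free variables of $e'$ and is projected out of the first component. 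Second, stability under reduction contexts becomes a single two-line calculation in the CD calculus (binding the hole's result to a fresh $y$, then using the strengthened equation under the binder), rather than a case analysis over the grammar of $C$.

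Your redex analyses are fine in spirit, but note that your base case for $(v\eq w)$ is vacuous as stated: both sides are the unique state $I\obsto 0$, which carries no information. The content only shows up once the latent variables are retained, and then the verification is exactly the computation you sketch---the posterior of $\langle\id,h\rangle\psi$ at $0$ is $\psi|_{h=0}$, which is how the operational rule defines $\psi'$. So rather than introducing an abstract $A$ and leaving $\phi'$ informal, it is cleaner to adopt the paper's concrete paired invariant from the outset; your ``ambient wire'' then becomes simply $A=r$, $\phi=\cpy_r\circ\psi$, and everything lines up without further quantification.
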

\begin{proof}
	We can faithfully interpret $\psi$ as a state in both $\gauss$ and $\cond(\gauss)$. If $x \vdash e$ and $(e,\psi) \red (e',\psi')$ then $e'$ has potentially allocated some fresh latent variables $x'$. We show that
	\begin{equation}
	\letin{x}{\psi} (x,\sem{e}) = \letin{(x,x')}{\psi'} (x,\sem{e'}).
	\end{equation}
	This notion is stable under reduction contexts. 
	
	Let $C$ be a reduction context. 
	Then
	\begin{align*}
	&\letin x \psi (x,\sem{C[e]}(x)) \\
	&= \letin x \psi \letin y {\sem{e}(x)} (x,\sem{C}(x,y)) \\
	&= \letin {(x,x')} {\psi'} \letin y {\sem{e'}(x,x')} (x,\sem{C}(x,y)) \\
	&= \letin {(x,x')} {\psi'} (x,\sem{C[e']})
	\end{align*}
	Now for the redexes
	\begin{enumerate}
		\item The rules for $\mathrm{let}$ follow from the general axioms of value substitution in the internal language
		\item For $\normal()$ we have $(\normal(), \psi) \red (x',\psi \otimes \mathcal N(0,1))$ and verify
		\begin{align*}
		&\letin x \psi (x,\sem{\normal()}) \\
		&= \psi \otimes \mathcal N(0,1) \\
		&= \letin {(x,x')} {\psi \otimes \mathcal N(0,1)} (x,\sem{x'})
		\end{align*}
		\item For conditioning, we have $(v\eq w,\psi) \red ((), \psi|_{v=w})$. We need to show
		\begin{align*}
		&\letin x \psi (x, \sem{v \eq w}) = \letin x {\psi|_{v=w}} (x,())
		\end{align*}
		Let $h=v-w$, then we need to the following morphisms are equivalent in $\cond(\gauss)$:
		\[ \input{conditioning/cond_exact.tikz} \]
		Applying \Cref{prop:closedterms} to the left-hand side requires us to compute the conditional $\langle \id, h\rangle\psi|_2 \circ 0$, which is exactly how $\psi|_{h=0}$ is defined.
	\end{enumerate}
	\vspace{-12pt}
\end{proof}

\begin{theorem}[Full abstraction]
	\label{prop:full_abstraction}
	$\sem{e_1} = \sem{e_2}$ if and only if $e_1 \approx e_2$ (where $\approx$ is contextual equivalence, \Cref{def:ctxequiv}).
\end{theorem}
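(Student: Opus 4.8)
The plan is to establish the two implications separately. Two auxiliary observations underpin everything. First, \emph{compositionality}: $\sem{K[e]}$ depends only on $\sem{e}$; this follows by a routine structural induction from the inductive definition of $\sem{-}$ on terms in \Cref{sec:cd-semantics}, after equipping each context $K[-]$ with a denotation and checking that $\sem{K[e]}$ is obtained by plugging $\sem{e}$ into it. Second, \emph{definability of Gaussians}: for every Gaussian state $\mu : 0 \to n$ of $\gauss$ there is a closed term $\vdash t_\mu : \rv^n$ with $\sem{t_\mu} = J(\mu)$, given concretely by the sugar $\normal(\mu,\Sigma)$ from \Cref{sec:types}. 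Throughout I would use the Correctness proposition above, \Cref{prop:evaluation} (every closed program reduces to a unique value configuration or to $\bot$), \Cref{prop:closedterms} (every closed state of $\cond(\gauss)$ is $\bot$ or $J$ of a Gaussian posterior), and \Cref{prop:faithful} (since $I$ is a separator in $\gauss$, $J$ is faithful; in particular distinct Gaussians have distinct images in $\cond(\gauss)$, and $\bot_X$ is never of the form $J(\mu)$ as it is not discardable).

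For the direction ``$\sem{e_1}=\sem{e_2}$ implies $e_1 \approx e_2$'', I would fix an arbitrary closing context $K[-]$ and use compositionality to get $\sem{K[e_1]} = \sem{K[e_2]}$. Each $\sem{K[e_i]}$ is a closed state, hence by \Cref{prop:closedterms} either $\bot$ or $J(\mu_i)$ for a Gaussian $\mu_i$. By \Cref{prop:evaluation}, $(K[e_i],!)$ reduces to $\bot$ or to a value configuration $(v_i,\psi_i)$; by the Correctness proposition, iterated along $\red^*$, in the first case $\sem{K[e_i]} = \bot$ and in the second $\sem{K[e_i]} = \sem{v_i}\circ\psi_i = J((v_i)_*\psi_i)$. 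Since $\sem{K[e_1]} = \sem{K[e_2]}$ and $\bot$ is not of the form $J(\mu)$, either both $(K[e_i],!)$ reduce to $\bot$, or both reduce to values with $J((v_1)_*\psi_1) = J((v_2)_*\psi_2)$, and then $(v_1)_*\psi_1 = (v_2)_*\psi_2$ by faithfulness of $J$ --- which is precisely the condition in \Cref{def:ctxequiv}.

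For the converse, which is the genuine full-abstraction content, I would argue by contraposition: assuming $\sem{e_1} \neq \sem{e_2}$ as morphisms $m \obsto n$ of $\cond(\gauss)$, I would build a separating context. By \Cref{prop:simplified_cond}, in particular its final remark that it suffices to test states of the form $\cpy_m \circ \phi$, there is a Gaussian state $\phi : 0 \to m$ for which the two closed states $(\id_m \otimes \sem{e_i}) \circ \cpy_m \circ \phi$ of $\cond(\gauss)$ differ. I would then take the closed context $K[-] \defeq (\letin {\vec x} {t_\phi} (\vec x, [-]))$ --- closed because $e_i$ has exactly $\vec x : \rv^m$ free --- and unwind the semantics clauses to see that $\sem{K[e_i]} = (\id_m \otimes \sem{e_i}) \circ \cpy_m \circ \phi$, so $\sem{K[e_1]} \neq \sem{K[e_2]}$. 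Now \Cref{prop:closedterms} forces each $\sem{K[e_i]}$ to be $\bot$ or $J$ of a Gaussian, and since they differ, either exactly one is $\bot$, or both are $J$ of distinct Gaussians. Running the operational semantics exactly as in the first direction then shows that either exactly one of $(K[e_i],!)$ reduces to $\bot$, or both reduce to values with different observable pushforwards; in either case $K[-]$ witnesses $e_1 \not\approx e_2$.

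The main obstacle is this second direction, and inside it the use of \Cref{prop:simplified_cond}: one has to pass from the abstract inequality of morphisms in $\cond(\gauss)$ to an explicit probing state $\phi$ that is (a) of the restricted shape so that it is realisable as an actual context, and (b) genuinely exposes the discrepancy --- be it in feasibility or in the value of the posterior --- at the level of the observable pushforward distribution tested by \Cref{def:ctxequiv}. In the Gaussian setting this is considerably eased by the definability of all Gaussians, so the remaining labour is careful bookkeeping with marginals and conditionals and matching the operational outcome of a closed program against \Cref{prop:closedterms}; no idea beyond the $\cond$ machinery of \Cref{sec:cond} is required.
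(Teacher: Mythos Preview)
Your proposal is correct and follows exactly the same approach as the paper: the forward direction via compositionality and Correctness, the converse by definability of all Gaussian probing contexts in the language together with the fact that $\cond$ is by construction the quotient by such contextual tests (your invocation of \Cref{prop:simplified_cond} makes this explicit). The paper's own proof is extremely terse---its $\Leftarrow$ direction is a single sentence---and your argument is just a careful unfolding of it.
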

\begin{proof}
	For $\Rightarrow$, let $K[-]$ be a closed context. Because $\sem{-}$ is compositional, we obtain $\sem{K[e_1]} = \sem{K[e_2]}$. By \Cref{prop:evaluation} If both succeed, we have reductions $(K[e_i], !) \red^* (v_i,\psi_i)$ and by correctness $v_1\psi_1 = \sem{K[e_1]} = \sem{K[e_2]} = v_2\psi_2$ as desired. If $\sem{K[e_1]} = \sem{K[e_2]} = \bot$ then both $(K[e_i], !) \red^* \bot$. 
	
	For $\Leftarrow$, we note that $\cond$ quotients by contextual equivalence, but all Gaussian contexts are definable in the language. 
\end{proof}

\subsection{Contextual Equivalence for Finite Probability}\label{sec:finprojstoch}

With programs over a finite domain, we can understand conditioning in terms of rejection sampling.
This means that we run a program $N$ times, with different random choices each time. We reject those runs that violate the conditions, and then we resample from the among the acceptable results. As $N\to \infty$, this random distribution converges to the probability distribution that the program describes.

The following reformulation is semantically equivalent: For a closed program, suppose the program would return some value $x_1$ with probability $p_1$, value $x_2$ with probability $p_2$, and so on.
Then the probability that the program will not fail is $Z=\sum_i p_i$. The result of rejection sampling is a program that actually returns $x_i$ with probability $\frac {p_i} {Z}$, so that we have a normalized probability distribution over $\{x_1\dots x_n\}$, i.e.~$\sum \frac {p_i}{Z}=1$. The quantity $Z$ is called the \emph{normalization constant} of the program, or sometimes \emph{model evidence}.

For example, the program
\[ \letin {x} {\bernoulli (0.4)} {\letin {y} {\bernoulli (0.4)} {x \eq y; x}}\]
will fail the condition with probability $2\cdot 0.4\cdot 0.6 = 0.48$, return true with probability $0.4^2=0.16$, and return false with probability $0.6^2=0.36$.
Under rejection sampling, once we renormalize, the program is equivalent to $\bernoulli(\frac {0.16}{0.36})\approx\bernoulli(0.44)$. 

Rejection sampling makes sense for closed programs. For programs with free variables, we can still understand a program that rejects runs that violate the conditions, but normalization is more subtle. For example, in the program
\begin{equation}\label{eqn:finprobintroB}{\letin {y} {\bernoulli (0.4)} {x \eq y; x}}\end{equation}
the normalizing constant is either~$0.4$ or~$0.6$ depending on the value of~$x$.
If we normalize regardless of the value of $x$, then the meaning of the program must change, because it would simply return $x$,  and the context 
\[ \letin {x} {\bernoulli (0.4)} {[-]}\]
distinguishes this.

There is nonetheless some normalization that can be done in straight-line programs, since e.g. the meaning is \emph{not} changed by prefixing a program with a closed program. It is for example safe to regard program~\eqref{eqn:finprobintroB} as equivalent to
\begin{equation} 
\letin {z} {\bernoulli (0.2)} {z\eq \false ; \letin {y} {\bernoulli (0.4)} {x \eq y; x}} \label{eq:autonorm_example}
\end{equation}
because the difference in normalizing constant will be the same for both values of~$x$.

Semantically, the interpretation of a program with free variables is a stochastic kernel, and one involving rejection too is a substochastic kernel (\Cref{def:substoch}). As we show, we can accommodate multiplication by a constant if it is uniform across all arguments; this is what we call `projectivized' substochastic kernels, by analogy with the construction of a projective space from a vector space. 

\begin{definition}
  The CD-category $\finprojstoch$ of \emph{projectivized substochastic kernels} is a quotient of the CD-category of $\finsubstoch$ of substochastic maps: 
	\begin{enumerate}
		\item objects are finite sets $X$
		\item morphisms $X \to Y$ are equivalence classes $[p]$ of substochastic kernels $p(y|x)$ up to a scalar. That is we identify $p$ and $q$ if there exists a number $\lambda > 0$ such that $p(y|x) = \lambda \cdot q(y|x)$ for all $x\in X, y \in Y$. In this circumstance we write $p\propto q$.
	\end{enumerate}
      \end{definition}
      It is routine to verify that the monoidal and CD category structure are preserved by this quotient.
      
\begin{theorem}\label{thm:condfinstoch}
The CD-categories	$\cond(\finstoch)$ and $\finprojstoch$ are equivalent.
\end{theorem}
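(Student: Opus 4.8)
The plan is to exhibit an equivalence of CD categories by constructing a functor $\cond(\finstoch) \to \finprojstoch$ and showing it is full, faithful, and essentially surjective (in fact bijective on objects). Since both categories have the same objects (finite sets), the work is entirely about morphisms. A morphism $X \leadsto Y$ in $\cond(\finstoch)$ is an equivalence class of triples $(K, f, o)$ with $f : X \to Y \otimes K$ a stochastic matrix and $o : I \to K$ a Dirac state, i.e. a chosen element $o \in K$. The natural candidate functor sends $(K,f,o)$ to the substochastic kernel $p(y\mid x) = f(y, o \mid x)$, i.e. we restrict the $K$-output to the observed value $o$ and then forget $K$; this is manifestly substochastic since $\sum_y p(y\mid x) = \sum_y f(y,o\mid x) \le 1$. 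We then pass to the projective quotient $[p] \in \finprojstoch$. First I would check this assignment is functorial: compatibility with composition in $\cond$ follows from the composition formula in $\obs$ (aggregate conditioning wires, tensor the observations) together with the Chapman–Kolmogorov formula, and compatibility with tensor is similar; preservation of copy/discard is immediate because $J$ preserves CD structure. One must also verify that $\approx$-equivalent triples land on proportional kernels — but this is exactly what we want to prove as part of \emph{faithfulness}, so it is cleaner to fold it in there.

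The key technical step is characterizing $\approx$ on $\obs(\finstoch)$ concretely. By \Cref{prop:simplified_cond}, $(K,f,o) \approx (K',f',o')$ iff for all $\psi : I \to A \otimes X$ the two sides either both fail or both succeed with equal posteriors; and it suffices to test $\psi = \cpy_X \circ \phi$ for distributions $\phi$ on $X$. I would compute both sides explicitly using the $\finstoch$ conditional formula \eqref{eq:finstoch_conditional}. Writing $p(y\mid x) = f(y,o\mid x)$ and $q(y\mid x) = f'(y,o'\mid x')$, the success condition ``$o \ll f_K\psi_X$'' becomes ``$\sum_x \phi(x)\, p(\text{-}\mid x) \not\equiv 0$ for the relevant $x$'', and the posterior over $A \otimes Y$ works out to be proportional to $\phi(x)\, p(y\mid x)$, renormalized. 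Demanding that these normalized posteriors agree for \emph{all} $\phi$ — in particular for Diracs $\phi = \delta_x$, which forces $p(\text{-}\mid x) \propto q(\text{-}\mid x)$ with an $x$-dependent constant, and then for two-point $\phi$, which forces that constant to be independent of $x$ — yields exactly $p \propto q$ globally, with the degenerate cases (some row of $p$ identically zero) matching up the ``fails'' clauses. This gives faithfulness; fullness is easy, since every substochastic kernel $p(y\mid x)$ is hit by the triple $(K, f, o)$ with $K = \{0,1\}$, $o = 1$, $f(y,1\mid x) = p(y\mid x)$, $f(\ast, 0 \mid x) = 1 - \sum_y p(y\mid x)$ (padding with a dummy element of $Y$ if $Y = \emptyset$, or noting the subtlety that one needs the total mass to be reachable, which it is as long as we allow the extra $K$-dimension).

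The main obstacle I anticipate is the careful bookkeeping around \emph{failure states} and \emph{zero rows}: in $\cond$, a closed inference problem fails when $o \not\ll \psi_K$, and for open morphisms this failure is parameter-dependent. A substochastic kernel with a genuinely zero row corresponds to a conditioning channel that fails on the Dirac at that argument, and I need the projective quotient to correctly identify all such ``totally failing'' morphisms and to match the strictness of $\bot$ described in \Cref{prop:closedterms}. Relatedly, there is a small mismatch to resolve: $\finprojstoch$ quotients only by \emph{positive} scalars $\lambda > 0$, whereas one might worry the $\cond$ quotient is coarser or finer; the two-point-$\phi$ argument above is precisely what pins the scalar down to be positive and uniform, so I expect it to come out right, but this is the delicate point and deserves a careful lemma rather than a remark. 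A secondary, more routine task is verifying that the constructed functor is an \emph{equivalence of CD categories} and not merely of plain categories — i.e. that it and a chosen pseudo-inverse are symmetric monoidal and comonoid-preserving — which follows from the explicit formulas but should be stated.
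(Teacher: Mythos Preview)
Your proposal is correct and follows essentially the same approach as the paper's proof sketch: both send $(K,f,o)$ to the substochastic kernel $p(y\mid x)=f(y,o\mid x)$, use the boolean-observation trick for fullness, and verify the $\approx$-quotient coincides with proportionality by testing against priors. The only minor difference is that the paper extracts $p$ up to a global constant in one shot using a single uniform prior on $X$, whereas you use Diracs plus two-point priors; your route is slightly more work but is also more explicit about matching up the zero-row/failure cases, which the paper leaves implicit.
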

\begin{proof}
Sketch. Given a conditioning channel $Q : X \obsto Y$ presented by a finite set of observations $K$, a probability kernel $q(y,k|x)$ and an observation $k_0 \in K$, we associate to it the subprobability kernel $\rho_Q : X \to D_{\leq 1}(Y)$ given by the likelihood function
\[ \rho_Q (y|x) = q(y,k_0|x) \]
Conversely, we associate to every subprobability kernel $\rho : X \to D_{\leq 1}(Y)$ a conditioning channel $Q_\rho = (\{0,1\}, q_\rho, 1)$ with a single boolean observation $b \eqo 1$, defined as
\[ q_\rho(y,b|x) = b \cdot \rho(y|x) + (1-b)\cdot (1-\rho(y|x)). \]
We recover the subprobability kernel $\rho_Q$ from the conditioning channel $Q$ that way: Given any distribution $p(x)$, the posterior in \Cref{prop:simplified_cond} is given by 
\[ \frac{p(x)q(y,k_0|x)}{\sum_{x,y} p(x)q(y,k_0|x)} \]
We see that $q_\rho$ computes the same posterior, namely 
\[ \frac{p(x) q_\rho(y,1|x)}{\sum_{x,y} p(x) q_\rho(y,1|x)} = \frac{p(x) \rho(y|x)}{\sum_{x,y} p(x) \rho(y|x)} = \frac{p(x) q(y,k_0|x)}{\sum_{x,y} p(x) \rho(y,k_0|x)}\]

On the other hand, the subprobability kernel $\rho$ can be recovered from $Q_\rho$ up to a constant. For a uniform prior $p(x) = 1/|X|$, the posterior under $Q_\rho$ in \Cref{prop:simplified_cond} becomes
\[ \frac{\rho(y|x)}{\sum_{x,y} \rho(x,y)} \]
from which we can read off $\rho(y|x)$ up to the constant in the denominator.
\end{proof}

Identifying scalar multiples is necessary because the Cond construction by definition `normalizes automatically'. That is, it considers two conditioning channels equivalent if they compute the same posterior distributions for all priors. We will explore the relationship with model evidence and branching in \Cref{sec:straightline}. 

We briefly showcase some of the structure of this category, by characterizing the discardable morphisms, and observing that conditioning gives a commutative monoid structure. The latter gives a characterization for the finite uniform distributions as units for conditioning.
\begin{example}
	A projectivized subprobability kernel $p : X \to Y$ is discardable (\Cref{def:copydisc}) if and only if there exists a constant $\lambda \neq 0$ such that
	\[ \forall x, \sum_y p(y|x) = \lambda \]
	As an instance of \Cref{prop:closedterms}, in particular, to give a state in $\finprojstoch(1,Y)$ is to give either a normalizable distribution $p \in \finstoch(1,Y)$ or the failure kernel $\bot_Y = 0$.
\end{example}

\begin{definition}[Conditioning product]
In $\cond(\finstoch)$, we define an exact conditioning operation $x \eq y$ by exactly observing $\true$ from the boolean equality test $(x == y)$. We define a morphism $\bullet : X \times X \obsto X$ by
\[ x \bullet y \defeq (x \eq y); x \]
In terms of projectivized subprobability kernels, this is
\[ \bullet(z|x,y) = \begin{cases}
1 & z = x = y \\
0 & \text{otherwise}
\end{cases} \]
We call $\bullet$ the \emph{conditioning product}. 
\end{definition}
Concretely for subdistributions $p(x), q(x)$, the subdistribution $(p \bullet q)$ has the product of mass functions
\[ (p \bullet q)(x) = p(x) \cdot q(x) \]

\begin{proposition}\label{prop:uniform_unit}
The conditioning product defines a commutative monoid structure on $X$, where the unit is given by the uniform distribution $u_X : 1 \obsto X$.
\end{proposition}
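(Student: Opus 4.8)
The plan is to verify the three monoid axioms — commutativity, associativity, and unitality — by reducing each to a direct calculation on projectivized substochastic kernels, using the equivalence $\cond(\finstoch)\simeq\finprojstoch$ (Theorem \ref{thm:condfinstoch}) together with the explicit formula
\[ \bullet(z\mid x,y) = \begin{cases} 1 & z=x=y \\ 0 & \text{otherwise} \end{cases} \]
already recorded just before the statement. Commutativity is immediate: the predicate $z=x=y$ is symmetric in $x$ and $y$, so $\bullet\circ\swap = \bullet$ on the nose in $\finsubstoch$, hence in $\finprojstoch$. Associativity asks that $x\bullet(y\bullet z) = (x\bullet y)\bullet z$ as maps $X\times X\times X\obsto X$; composing the kernels, both sides send $(x,y,z)$ to the point mass on $w$ when $w=x=y=z$ and to $0$ otherwise, so again the two composites are literally equal substochastic kernels, no scalar needed. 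Alternatively, one can run these through the internal language: $x\bullet(y\bullet z)$ unfolds to $(y\eqo z);(x\eqo y);x$, and by the substitution law \eqref{eqn:substlong} (Proposition \ref{prop:exactly}) and commutativity of conditions (Proposition \ref{prop:all_lambdac_axioms}, \eqref{cd:comm}) this rearranges to $(x\eqo y);(y\eqo z);x = (x\eqo y)(y\eqo z); x$, which is visibly symmetric under rotating $x,y,z$.

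For unitality, I would take $u_X : 1\obsto X$ to be the uniform distribution, represented in $\finprojstoch$ by the kernel $u_X(x) = 1/|X|$ (or, equivalently up to the projective identification, the all-ones kernel). Then $(u_X\bullet p)(x) = u_X(x)\cdot p(x) = \tfrac{1}{|X|}p(x) \propto p(x)$, using the concrete description $(p\bullet q)(x)=p(x)q(x)$ of the conditioning product on (sub)distributions recorded above; since $\finprojstoch$ identifies kernels up to a positive scalar, this is exactly $[p]$. The parameterized version — that $\bullet\circ(u_X\otimes\id_X) = \id_X$ as a morphism $X\obsto X$ — works the same way: the composite kernel sends $x$ to $\tfrac{1}{|X|}\delta_x$, which is $\propto \delta_x = \id_X$. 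It is worth noting here that $u_X$ is precisely a \emph{normalizable} state (hence a genuine element of $\finstoch(1,X)$, not the failure state $\bot_X$), so it is a legitimate unit rather than a degenerate one; this is where the hypothesis that $u_X$ is the uniform distribution, rather than some arbitrary distribution, is used, since $u_X(x)>0$ for every $x$ guarantees $u_X(x)p(x)\propto p(x)$ with a genuinely nonzero proportionality constant independent of any surviving support considerations.

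The only mild subtlety — and the step I'd expect to need the most care — is making the passage between $\cond(\finstoch)$ and $\finprojstoch$ airtight: one should check that the conditioning product $x\bullet y = (x\eqo y);x$ defined abstractly in $\cond(\finstoch)$ is carried by the equivalence of Theorem \ref{thm:condfinstoch} to the stated kernel, and that $u_X$ corresponds to the uniform distribution under the same equivalence. Both follow by unwinding the explicit functors from the proof of Theorem \ref{thm:condfinstoch}: the boolean-equality-test encoding of $(\eqo)$ produces the conditioning channel with observation space $\{0,1\}$ whose associated likelihood is exactly the indicator of $z=x=y$, and the uniform prior is the one singled out in that proof as the canonical prior from which $\rho$ is recovered. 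Once this dictionary is fixed, all three axioms are one-line kernel computations as above, and the essential uniqueness is handled entirely by the projective quotient.
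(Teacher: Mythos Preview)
Your proposal is correct and follows essentially the same approach as the paper: commutativity and associativity hold already in $\finsubstoch$ by direct inspection of the kernel, and unitality holds in the projective quotient because conditioning against $u_X$ introduces a uniform factor $1/|X|$ that is absorbed by $\propto$. Your version is more detailed (in particular you correctly check the parameterized form $\bullet\circ(u_X\otimes\id_X)\propto\id_X$ rather than only the action on states), but the underlying argument is identical; one small notational slip is that your internal-language sketch uses $\eqo$ where the paper's binary conditioning would be written $\eq$.
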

\begin{proof}
The operation $\bullet$ is commutative and associative already in $\finsubstoch$, however it does not have a unit. Conditioning with the uniform distribution produces a global factor of $1/|X|$, which is cancelled by the proportionality relation. Therefore, $u_X$ is a unit for $\bullet$ in $\cond(\finstoch)$.
\end{proof}

This is intuitive in programming terms: Observing from a uniform distribution gives no new information. Such a conditioning statement can thus be discarded. 

\paragraph{Aside on non-determinism}
We show the analogous version of \Cref{thm:condfinstoch} for nondeterminism. The Cond construction here does nothing more than add the possibility for failure (zero outputs) in a systematic way. 
\begin{proposition}
$\cond(\rel^+) \cong \rel$.
\end{proposition}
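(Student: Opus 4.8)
## Proof plan for $\cond(\rel^+) \cong \rel$

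The plan is to exhibit an explicit isomorphism of CD categories $F : \cond(\rel^+) \to \rel$ that is the identity on objects. On morphisms, a conditioning channel $X \obsto Y$ in $\obs(\rel^+)$ is represented by a triple $(K, R, o)$ with $R \subseteq X \times (Y \times K)$ a left-total relation and $o \in K$ a (deterministic) element — i.e.\ a singleton $\{o\} \subseteq K$. I would send such a triple to the relation
\[ F(K,R,o) \;=\; \{ (x,y) \in X \times Y \;:\; (x,(y,o)) \in R \} \;\subseteq\; X \times Y, \]
which is exactly the "slice at the observation" appearing in the description of conditionals in $\rel^+$ (\Cref{prop:ex_conditionals}) and in the reading of \eqref{eq:obs_normalform}. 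Conversely, given any relation $S \subseteq X \times Y$ in $\rel$ (not necessarily left-total), I would take the conditioning channel $(\{0,1\}, R_S, 1)$ where $R_S = \{ (x,(y,1)) : (x,y) \in S \} \cup \{ (x,(y,0)) : y \in Y, (x,y) \notin S \}$; this is left-total (every $x$ relates to $(y,0)$ for any $y$ when its $S$-slice is empty, and to $(y,1)$ otherwise), so it is a bona fide morphism of $\obs(\rel^+)$, and it mirrors the construction $Q_\rho$ in the proof of \Cref{thm:condfinstoch}. One checks $F(R_S) = S$ on the nose.

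The substantive step is that $F$ is well-defined and \emph{faithful} on $\cond(\rel^+) = \obs(\rel^+)/{\approx}$: two triples $(K,R,o) \approx (K',R',o')$ iff they induce the same slice relation. Here I would invoke \Cref{prop:simplified_cond}: $(K,R,o) \approx (K',R',o')$ iff for every state $\psi : I \to A \otimes X$ (equivalently, by the last sentence of that proposition, for $\psi$ of the form $\cpy_X \circ \phi$ with $\phi \subseteq X$), the success conditions agree and the posteriors agree. Using the characterizations of $\ll$ and of conditionals in $\rel^+$ from \Cref{prop:allsupports} and \Cref{prop:ex_conditionals} — namely $x \ll S$ iff $x \in S$, and the conditional is slicing — the success condition $o \ll R_K \phi$ unpacks to "$\exists x \in \phi,\ (x,o) \in R_K$", i.e.\ the slice $F(K,R,o)$ has nonempty intersection with $\phi$ in its domain; and the posterior is the image of $\phi$ under the slice relation. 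Two slice relations inducing the same image on every subset $\phi \subseteq X$ (and agreeing on nonemptiness of that image) must be equal, since one can take $\phi = \{x\}$ singletons. Thus $\approx$ on $\obs(\rel^+)$ is precisely "same slice", so $F$ descends to a faithful, essentially surjective (in fact bijective-on-morphisms) functor; fullness is immediate from the $S \mapsto R_S$ construction.

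It remains to check that $F$ is a strict monoidal CD functor — composition, tensor, copy and discard are preserved. For composition, $(K',R',o') \bullet (K,R,o)$ has underlying relation $(R' \otimes \id_K) \circ R$ on the wire $Y' \otimes K' \otimes K$ wait — more precisely $(f' \otimes \id_K) f$ with aggregated observation $o' \otimes o$; slicing at $(o',o)$ and chasing the relational composition shows this equals the relational composite of the two slices, which is composition in $\rel$. The tensor in $\obs$ juxtaposes relations and observations (modulo a swap), so slicing distributes over $\otimes$; copy and discard in $\obs(\rel^+)$ come from $\rel^+$ via $J$ with trivial condition wire $K = I$, $o = \id_I$, and $F$ applied to such a channel is just the underlying relation, so the comonoid structure is preserved on the nose. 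I expect the main obstacle to be purely bookkeeping: carefully tracking which wire is the "value" wire $Y$ and which is the "condition" wire $K$ through the premonoidal composition and tensor formulas of \Cref{def:obs}, and confirming that the swap in the tensor does not disturb the slice. No deep ideas are needed beyond \Cref{prop:simplified_cond} and the $\rel^+$ entries of Propositions~\ref{prop:ex_conditionals}, \ref{prop:allsupports}, and \ref{prop:have_precise_supports}; the proof is essentially "unfold the definitions and observe that $\obs$ adds exactly a possibly-empty slice, which is exactly a non-left-total relation."
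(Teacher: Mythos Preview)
Your proposal is correct and follows essentially the same approach as the paper: slice the relation at the observation to get an arbitrary (possibly non-left-total) relation, and for the inverse direction encode an arbitrary relation via a boolean condition wire, then use \Cref{prop:simplified_cond} with singleton priors to show the slice determines the $\approx$-class. The paper's proof is terser (it writes the inverse as $(x,y,b)\in R''\Leftrightarrow((x,y)\in R'\Leftrightarrow b=1)$, which is exactly your $R_S$) and omits the explicit CD-functor bookkeeping you outline, but the content is the same.
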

\begin{proof}
Given a conditioning channel $(K,R,k_0)$ with $R \subseteq X \times Y \times K$ left-total in $X$, we define a possibly non-total relation $R' \subseteq X \times Y$ by $R' = \{ (x,y) : (x,y,k_0) \in R \}$. On the other hand, given $R'$, we form the conditioning channel $(2,R'',1)$ with left-total relation $R'' \subseteq X \times Y \times 2$ defined as
\[ (x,y,b) \in R'' \stackrel{\text{def}}\Leftrightarrow ((x,y) \in R' \Leftrightarrow (b=1)) \]
These constructions are easily seen to be inverses. Any relation $R'$ is recovered from $R''$ and we have $(K,R,k_0) \approx (2,R'',1)$ because \Cref{prop:simplified_cond} boils down to checking that $(x,y,k_0) \in R \Leftrightarrow (x,y,1) \in R''$.
\end{proof}
The conditioning product $\bullet$ in $\rel$ is the relation $\{ (x,x,x) : x \in X \}$, and on states we have $R \bullet S = R \cap S$. The conditioning product has a unit $v_X : 1 \to X$ given by the maximal subset $v_X = X$.

\subsection{Automatic Normalization and Straight-line Inference}\label{sec:straightline}

By \emph{automatic normalization} we mean that two (open) probabilistic programs which differ by an overall normalization constant $Z$ are considered equivalent, as formalized in \Cref{sec:finprojstoch}. As a consequence, the precise value of the normalization constant cannot be extracted operationally from such programs, which is a limitation whenever $Z$ is itself a quantity of interest. On the other hand, auto-normalization is a convenient optimization, as seen in \eqref{eq:autonorm_example} or \Cref{prop:uniform_unit}. \\

In this section, we argue that validity of auto-normalization is tied to the form of branching available in language under consideration. We distinguish \emph{straight-line inference programs} with a static structure of conditions from programs where we can dynamically choose whether to execute conditions or not. This is sufficient for inference in Bayesian networks, in which the observed nodes are determined statically. For example, the Bayesian network depicted in \eqref{eq:graphicalmodel} corresponds to the straight-line program in \Cref{fig:cdcalc}. Auto-normalization is valid for straight-line programs but not for those with more general branching. The quotient from \Cref{sec:finprojstoch} arises naturally from studying contextual equivalence of a straight-line inference language. In semantical terms, this means $\cond(\finstoch)$ does not have coproducts. 

\medskip

We consider the CD-calculus (\Cref{sec:cd-calc}) with base types $X$ for all finite sets and function symbols $\ct f$ for all subprobability kernels $f : X \to D_{\leq 1}(Y)$:
\begin{align*}
t ::= x \s () \s (t,t) \s \pi_it \s \ct f(t) \s \letin x {t_1} t_2 
\end{align*} 
This language can express scoring and exact conditioning because there exist suitable subprobability kernels 
\[ \mathsf{score}_p \in D_{\leq 1}(1) \text{ and } (\eq) : X \times X \to D_{\leq 1}(1) \]
We denote the this calculus $\ppisl$, for \emph{straight-line inference}. Following the development in Section~\ref{cd:structural}, the language $\ppisl$ has canonical denotational semantics in $\finsubstoch$. (In fact, $\ppisl$ is precisely the internal language of $\finsubstoch$ as a CD category.)

We also consider a richer language, $\ppi$, which contains the syntax of $\ppisl$ and also if-then-else branching:
\[t ::= \dots \s \ite {t_1}{t_2}{t_3}\]
with typing rule
\[ \infer{\Gamma \vdash \ite{t_1}{t_2}{t_3} :A }{\Gamma \vdash t_1 : 2 \quad \Gamma \vdash t_2 : A \quad \Gamma \vdash t_3 : A} \]
This language can also be interpreted in $\finsubstoch$, via
\[ \sem{\ite{t_1}{t_2}{t_3}}(a|\gamma) = \sem{t_2}(a|\gamma) \cdot \sem{t_1}(\true|\gamma) + \sem{t_3}(a|\gamma) \cdot \sem{t_1}(\false|\gamma) \]
Categorically, this makes use of the distributive coproducts in $\finsubstoch$ \cite{power:distributive_freyd,staton:commutative_semantics}. $\ppi$ is a commonly considered probabilistic language, and subprobability kernel semantics are already known to be fully abstract, though we rederive this here.

As explained in the introduction to this section, a program in the language $\ppi$ or $\ppisl$ is typically executed by some sort of inference engine, which tries to sample (usually approximately) from the posterior distribution it defines. In the semantics, we express this top-level normalization for a finite set~$X$ using the function $\norm : \subd(X) \to \subd(X)$ which is defined as
\[ \norm(\varphi)(x) = \begin{cases}
    \tfrac 1 Z \cdot \varphi(x) & \text{where } Z = \sum_{x\in X} \varphi(x)\neq 0
  \\ 0& \text{where }\forall x.\,\varphi(x)=0\end{cases}
\]
The zero distribution is mapped to itself, signaling failure of normalization. 

\begin{definition}
Two closed $\ppi$ programs $s,t:X$ are called \emph{observationally equivalent}, written $s \approx t$, if the normalized distributions they define are equal, that is $\norm(\sem{s}) = \norm(\sem{t})$.
\end{definition}

We say that two open programs $\Gamma \vdash s, t:X$ are contextually equivalent if under every closed context $C[-]$ we have $C[s] \approx C[t]$. The distinguishing power crucially depends on the fragment of the language we are allowed to use in the contexts $C[-]$.

\begin{definition}
The terms $s,t$ are called \emph{straight-line equivalent}, written $s \approx_{\ppisl} t$, if for every closed context $C[-]$ in $\ppisl$ (without branching), we have $C[s] \approx C[t]$. The terms $s,t$ are called \emph{branching equivalent}, written $s \approx_{\ppi} t$, if for every closed context $C[-]$ in $\ppi$ (possibly involving branching), we have $C[s] \approx C[t]$.
\end{definition}

The following proposition shows that straight-line equivalence can distinguish subprobability kernels up to a constant. 
\begin{proposition}
Two open programs are straight-line equivalent iff their denotations are proportional.
\[ s \approx_{\ppisl} t \Leftrightarrow \sem{s} \propto \sem{t} \]
That is $\finprojstoch$ is fully abstract for the language $\ppisl$.
\end{proposition}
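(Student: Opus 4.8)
The plan is to prove the two implications separately: ($\Leftarrow$) follows formally from the fact that $\finprojstoch$ is a well-defined quotient CD-category, while ($\Rightarrow$) requires exhibiting an explicit family of $\ppisl$-contexts that probe the kernel $\sem{s}$ finely enough to recover it up to a global scalar.

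\textbf{The easy direction.} Suppose $\sem s \propto \sem t$ in $\finsubstoch$. Since $\propto$ is a monoidal congruence on $\finsubstoch$ (this is precisely what makes the quotient $\finprojstoch$ well-defined, as noted after its definition) and the semantics of $\ppisl$ in $\finsubstoch$ is compositional, every closed $\ppisl$-context $C[-]$ satisfies $\sem{C[s]}\propto\sem{C[t]}$. Now $\norm$ is invariant under $\propto$: if $\varphi=\lambda\psi$ with $\lambda>0$, then either both subdistributions are zero, or $\sum_x\varphi(x)=\lambda\sum_x\psi(x)\neq 0$ and the normalizing factor cancels $\lambda$, so $\norm(\varphi)=\norm(\psi)$. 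Hence $\norm(\sem{C[s]})=\norm(\sem{C[t]})$ for all $C$, i.e. $s\approx_{\ppisl} t$.

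\textbf{The hard direction.} Assume $s\approx_{\ppisl} t$. Identify $\Gamma$ with a single variable $a:A$ of product type, so $\sem s,\sem t:A\to\subd(X)$ are substochastic kernels, with row masses $m^s_a=\sum_x\sem s(x\mid a)$ and likewise $m^t_a$. For a finitely supported probability distribution $p$ on $A$, take the closed $\ppisl$-context $C_p[-]=\letin a {\ct p\,()}\letin x {[-]}{(a,x)}$, which uses only the function symbol $\ct p$ for the kernel $p:1\to\subd(A)$ and in particular no branching. A short computation with the compositional semantics gives $\sem{C_p[s]}$ equal to the subdistribution $(a,x)\mapsto p(a)\,\sem s(x\mid a)$ on $A\times X$, and similarly for $t$; so the hypothesis applied to $C_p$ yields equality of the normalizations of these two subdistributions. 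For $p=\delta_a$ a point mass this says $\norm(\sem s(-\mid a))=\norm(\sem t(-\mid a))$ for every $a$; as two subdistributions with the same normalization are both zero or are positive scalar multiples, we get, for each $a$, either $\sem s(-\mid a)=\sem t(-\mid a)=0$ or $\sem s(-\mid a)=r_a\,\sem t(-\mid a)$ with $r_a=m^s_a/m^t_a>0$; in particular the zero rows of $\sem s$ and $\sem t$ coincide. If $\sem s=0$ we are done (then $\sem t=0$ and $0\propto 0$), so fix $a_\star$ with $\sem s(-\mid a_\star)\neq 0$ and set $\lambda=r_{a_\star}$. For any other nonzero row $a$, apply the hypothesis with $p$ uniform on $\{a_\star,a\}$: comparing normalized joints at $b=a$, substituting $\sem s(x\mid a)=r_a\sem t(x\mid a)$, dividing by some $\sem t(x\mid a)>0$, and using $m^s_{a_\star}=\lambda m^t_{a_\star}$ and $m^s_a=r_a m^t_a$, one is left with $r_a m^t_{a_\star}=\lambda m^t_{a_\star}$, hence $r_a=\lambda$ since $m^t_{a_\star}>0$. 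Thus every nonzero row of $\sem s$ is $\lambda$ times the corresponding row of $\sem t$ and the zero rows agree, so $\sem s=\lambda\sem t$, i.e. $\sem s\propto\sem t$. Since equality in $\finprojstoch$ is by definition proportionality in $\finsubstoch$, and every $\finsubstoch$-morphism is denotable ($\ppisl$ being its internal language), this is the asserted full abstraction.

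\textbf{Main obstacle.} The substance is entirely in ($\Rightarrow$), and within it in forcing the row-wise constants $r_a$ to coincide: the point is to notice that $\ppisl$ can build a static mixture over a finite set of alternatives by precomposing with a deterministic selector (available as a function symbol), so no if-then-else is needed, and then to extract the right linear identity from the normalized joint. One should also be careful with the boundary cases where some rows of $\sem s$ (equivalently $\sem t$) vanish; the point-mass contexts dispose of these cleanly. Everything else -- compositionality, the formula for $\sem{C_p[s]}$, and the $\norm$-invariance of $\propto$ -- is routine.
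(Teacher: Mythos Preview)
Your proof is correct, but the $(\Rightarrow)$ direction is more elaborate than necessary. The paper dispatches it with a single context: take $C[-]=\letin a {u_A} (a,[-])$ where $u_A$ is the uniform distribution on the (finite) context type $A$. Then $\sem{C[s]}(a,x)=\tfrac{1}{|A|}\sem s(x\mid a)$, so the joint subdistribution on $A\times X$ is literally the kernel $\sem s$ scaled by the constant $\tfrac{1}{|A|}$. From $s\approx_{\ppisl} t$ we get $\norm(\sem{C[s]})=\norm(\sem{C[t]})$, and two subdistributions with equal normalization are proportional, giving $\sem s\propto\sem t$ in one step. Your approach instead first probes with Dirac priors $\delta_a$ to get row-wise constants $r_a$, then uses two-point uniforms to align them; this works, but the extra bookkeeping (zero rows, choosing $a_\star$, cancelling $m^t_{a_\star}$) is avoidable once you notice that a single full-support prior already encodes the entire kernel in the joint. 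The upside of your route is that it makes explicit why anything less than a full-support prior would not suffice, and the row-wise analysis could generalize to settings without a uniform distribution; the paper's route is shorter and exploits that $\finstoch$ always has one.
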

\begin{proof}
$\Leftarrow$ Is is easy to show that the semantics of all $\ppisl$ constructs are linear or bilinear and hence respect the relation $\propto$. $\Rightarrow$ Let $s \approx_\ppisl t$ and consider the straight-line context
\[ C[t] \defeq \letin x {u_X} (x,t) \]
where again $u_X$ denotes the uniform distribution on the finite set $X$. Its denotation is
\[ \sem{C[t]}(x,y) = \frac{1}{|X|} \sem{t}(y|x) \]
By assumption $\sem{C[s]} \propto \sem{C[t]}$, so we have $\sem{s} \propto \sem{t}$. 
\end{proof}

This gives a clear interpretation of \Cref{thm:condfinstoch}. In our current terminology, the Cond construction aims to give a canonical semantics for straight-line inference programs: the construction presents a normal form for straight-line programs up to straight-line equivalence. The lack of branching is reflected in the fact that unlike $\finsubstoch$, $\finprojstoch$ does not have coproducts. 

\paragraph{Branching inference}
Up to straight-line equivalence, programs which differ by a global constant are not distinguishable, that is auto-normalization is valid. This changes if we allow branching in the contexts, because branching can be used to extract the normalization constant. This trick is fundamental to so-called `Bayesian model selection'. If $y$ is a closed program, then the boolean program
\[ \ite{\mathsf{bernoulli(0.5)}} {y;\true} {\false} \]
normalizes to a distribution which returns $\true$ with probability
\[ p=\frac {0.5\cdot Z}{0.5\cdot Z + 0.5}=\frac{Z}{Z+1} \qquad \text{ where } Z = \sum_{x} \sem{y}(x) \]
and $\false$ with probability $\frac 1 {Z+1}$.
Because the assignment $Z \mapsto Z/(Z+1)$ is a bijection $[0,\infty) \to [0,1)$, we can recover $Z$ from the probability $p$. It follows that $\finsubstoch$ is fully abstract for $\ppi$. 

\begin{proposition}
Two open programs $s,t$ are branching equivalent if their denotations are equal as subprobability kernels.
\[ s \approx_\ppi t \Leftrightarrow \sem s = \sem t:\sem\Gamma\to D_{\leq 1}(\sem X) \]
That is $\finsubstoch$ is fully abstract for the language $\ppi$.
\end{proposition}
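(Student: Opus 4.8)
The plan is to prove both directions of the equivalence $s \approx_\ppi t \Leftrightarrow \sem{s} = \sem{t}$. The $\Leftarrow$ direction is immediate from compositionality: if the denotations agree as subprobability kernels, then for any closed context $C[-]$ in $\ppi$ we have $\sem{C[s]} = \sem{C[t]}$ (the semantics of every $\ppi$ construct, including $\ite{-}{-}{-}$, is defined compositionally), so in particular $\norm(\sem{C[s]}) = \norm(\sem{C[t]})$, giving $C[s] \approx C[t]$.

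For the hard direction $\Rightarrow$, I would argue the contrapositive: suppose $\sem{s} \neq \sem{t}$ as subprobability kernels $\sem{\Gamma} \to \subd(\sem{X})$, and construct a distinguishing $\ppi$ context. First, using a context that binds the free variables of $\Gamma$ via the uniform distribution $u_{\sem{\Gamma}}$ (as in the proof of the straight-line case), we may reduce to the situation of two \emph{closed} programs with differing denotations in $\subd(\sem{X} )$; since normalization then sees both the shape of the distribution and, via a further coordinate, the normalization constant, it suffices to exhibit a context that recovers the full (unnormalized) mass function from a closed program. Here is where branching is essential: given a closed program $y : X$, the boolean program $\ite{\mathsf{bernoulli}(0.5)}{y;\true}{\false}$ normalizes to return $\true$ with probability $p = Z/(Z+1)$ where $Z = \sum_x \sem{y}(x)$ is the total mass, and since $Z \mapsto Z/(Z+1)$ is a bijection $[0,\infty) \to [0,1)$, the context recovers $Z$. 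Combining this mass-extraction gadget with a projection onto each element $x \in \sem{X}$ (select the event $\{x\}$ by an if-then-else test on equality with $x$, score, and read off the conditional mass at $x$), we obtain from the normalized output the exact value $\sem{y}(x)$ for every $x$; hence a closed context in $\ppi$ determines $\sem{y} \in \subd(\sem{X})$ on the nose.

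Assembling these pieces: if $\sem{s} \neq \sem{t}$, there is some $\gamma \in \sem{\Gamma}$ with $\sem{s}(-|\gamma) \neq \sem{t}(-|\gamma)$ in $\subd(\sem{X})$; use a context that fixes the free variables to the deterministic state $\gamma$ (definable since $\ppi$ over finite sets can express any deterministic constant), then apply the mass-and-projection gadget above to witness the difference through the $\norm$-observable. Therefore $s \not\approx_\ppi t$, which is the contrapositive of $\Rightarrow$. I expect the main obstacle to be the bookkeeping in the mass-extraction step --- making precise that the if-then-else gadget combined with projections recovers \emph{every} coordinate $\sem{y}(x)$, and not merely the normalized profile --- together with the subtle point that one must recover both the normalization constant \emph{and} the relative weights simultaneously, which is exactly what the bijection $Z \mapsto Z/(Z+1)$ buys us over the branching-free case.
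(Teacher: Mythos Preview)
Your proposal is correct, and it shares the decisive ingredient with the paper's proof: the Bayesian-model-selection gadget $\ite{\mathsf{bernoulli}(0.5)}{y;\true}{\false}$, whose normalized $\true$-probability $Z/(Z+1)$ bijectively recovers the total mass~$Z$. Where you differ is in organization. The paper's argument is considerably shorter because it reuses the already-established straight-line full abstraction result as a black box: since every $\ppisl$-context is in particular a $\ppi$-context, branching equivalence implies straight-line equivalence, so $\sem{s} \propto \sem{t}$, i.e.\ $\sem{s} = \lambda \cdot \sem{t}$ for some $\lambda > 0$ (or both zero). It then remains only to show $\lambda = 1$, and one application of the bernoulli gadget (after closing up with, say, a uniform prior) does this.

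By contrast you work from scratch: pick a specific $\gamma$ where the kernels differ, close up deterministically, and then argue that branching contexts can recover \emph{every} coordinate of the resulting closed subdistribution. This is valid but more than necessary --- once you have two distinct closed subdistributions $\mu \neq \nu$, either they are not proportional (and the trivial context already distinguishes them after normalization) or they are proportional with different masses (and a single use of the bernoulli gadget suffices). You do not need the full ``mass-and-projection'' machinery. Also, your first reduction via the uniform prior is slightly garbled: binding $\Gamma$ uniformly and returning only in $\sem{X}$ can collapse differences; the straight-line proof you cite actually returns the pair $(\gamma,t)$ in $\sem{\Gamma} \times \sem{X}$. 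Your later deterministic-$\gamma$ approach avoids this issue entirely, so the final argument stands.
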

\begin{proof}
From straight-line equivalence, we know that $\sem{s} = \lambda \cdot \sem{t}$. We then use the `Bayesian model selection' trick to show $\lambda = 1$.
\end{proof}

The tradeoff between straight-line inference and branching inference is an interesting design decision: Branching inference is more general and allows us to extract the normalization constant. On the other hand, restricting ourselves to straight-line inference, we are free to normalize at any point, which leads to an appealing equational theory. For example, an `uninformative' observation form a uniform distribution can be eliminated (\Cref{prop:uniform_unit}).

We emphasize that the crucial difference between $\ppisl$ and $\ppi$ lies in putting conditions in branches. Even in $\ppisl$, we can still implement if-then-else when the branches do not involve conditions, because we can make use of the probability kernel
\[
\mathsf{ite} : 2 \times X \times X \to D(X)
\]
with $\mathsf{ite}(1,x,y)=\delta_x$ and $\mathsf{ite}(0,x,y)=\delta_y$. If $t_1,t_2$ are discardable (condition-free) terms, we can define \[ (\ite c {t_1} {t_2}) \defeq \mathsf{ite}(c,t_1,t_2). \]
In $\ppisl$, the structure of conditions is static, while it is dynamic in $\ppi$. 


\section{Context, Related Work and Outlook}\label{sec:cond_outlook}

\subsection{Symbolic Disintegration, Consistency and Paradoxes}\label{sec:outlook_paradoxes}
\label{sec:paradoxes}
Our line of work can be regarded as a synthetic and axiomatic counterpart of the symbolic disintegration of \cite{shan_ramsey} (see also~\citep{psi,delayed_sampling,shan:disint,shan}).
That work provides in particular verified program transformations to convert an arbitrary probabilistic program of type $\rv \otimes \tau$ 
to an equivalent one that is of the form 
\[ \letin x {\mathrm{lebesgue}()} {\letin y M {(x,y)}} \]
Now the exact conditioning $x\eqo o$ can be carried out by substituting $o$ for $x$ in $M$. 
We emphasize the similarity to our treatment of \emph{inference problems} in \Cref{sec:synth_conditioning}, as well as the role that coordinate transformations play in both our work \cite{2021compositional} and \citep{shan_ramsey}. One language novelty in our work is that exact conditioning is a first-class construct in our language, as opposed to a whole-program transformation, which makes the consistency of exact conditioning more apparent. \\

Consistency is a fundamental concern for exact conditioning. \emph{Borel's paradox} is an example of an inconsistency that arises if one is careless with exact conditioning (\citep[Ch.~15]{jaynes}, \citep[\S3.3]{jacobs:paradoxes}): It arises when naively substituting equivalent equations within $(\eq)$. For example, the equation $x - y = 0$ is equivalent to $x/y = 1$ over the (nonzero) real numbers. Yet, in a hypothetical extension of our language which allows division, the following programs would not contextually equivalent, as discussed in \Cref{ex:borels_paradox}:
\begin{equation*}
\begin{array}{l}
\mlstinline{x = normal(0,1)} \\
\mlstinline{y = normal(0,1)} \\
\mlstinline{x-y =:= 0} \\
\end{array}
\not \equiv
\begin{array}{l}
\mlstinline{x = normal(0,1)} \\
\mlstinline{y = normal(0,1)} \\
\mlstinline{x/y =:= 1}
\end{array}
\end{equation*}

For that reason, we make it clear in our treatment of inference problems (\Cref{sec:inferenceproblems}) that conditioning on a deterministic observation $(\eqo)$ is the fundamental notion. Binary conditioning $(\eq)$ is a derived notion which involves further choices, and those choices are not equivalent. Our approach also makes it clear that we should always condition on random variables directly, and not on (boolean) predicates: By presenting conditioning as an algebraic effect, the expressions $(s \eq t) : \unit$ and $(s == t) : \mathrm{bool}$ have a different formal status and can no longer be confused. 

\subsection{Contextual Equivalence for Exact Conditioning languages}

In this article, we have emphasized the role of program equations for manipulating probabilistic programs, and based the Cond construction on an analysis of contextual equivalence of straight-line inference (\Cref{sec:finstoch}).

While we have fully characterized the case of finite probability (\Cref{sec:finprojstoch}), a corresponding explict characterization of contextual equivalence for the Gaussian language is still outstanding. We have given partial results in that direction (see \cite{2021compositional}) in the form of a sound equational theory for contextual equivalence. The classification of effects $n \to 0$ in $\cond(\gauss)$ is not straightforward: it is not true that every effect is observing from a unique distribution $0 \to n$, as for example $(\eq) : 2 \to 0$ is not of that form. We believe that by passing to an extension category of Gaussians $\gauss \to \mathsf{GaussEx}$, we can obtain the desired duality and achieve a more explicit characterization. 

It is a further challenge to find semantics for exact conditioning with branching. Automatic normalization is no longer valid here (\Cref{sec:straightline}) and the subtleties of \cite{jacobs:paradoxes} have to be accounted for. Mathematically, this would be an extension of the Cond construction which produces a distributive Freyd category \cite{power:distributive_freyd}. An example of a categorical model of the Beta-Bernoulli process with branching (but no first-class conditioning) is in \cite{staton:betabernoulli}.

\subsection{Other Directions}

\paragraph{Categorical tools}
Once a foundation is in algebraic or categorical form, it is easy to make connections to and draw inspiration from a variety of other work: The $\obs$ construction (\Cref{def:obs}) that we considered here is reminiscent of lenses~\citep{lenses} and the Oles construction~\citep{hermida-tennent}. These have recently been applied to probability theory~\citep{smithe:bayesian}, quantum theory~\citep{huotstatonqpl} and reversible computing~\citep{heunen-kaarsgaard}. The details and intuitions are different, but a deeper connection or generalization may be profitable in the future. 

\paragraph{Probabilistic logic programming} 
The concept of exact conditioning is reminiscent of unification in Prolog-style logic programming. Our presentation in \cite{2021compositional} is partly inspired by the algebraic presentation of predicate logic of~\cite{staton:predicate_logic}, which has a similar signature and axioms. Logic programming is also closely related to relational programming, and we note that our laws for conditioning are reminiscent of graphical presentations of categories of linear relations \citep{baez:control,bsz,pbsz}.

\textsc{ProbLog} \citep{problog} supports both logic variables as well as random variables within a common formalism. We have not considered logic variables in conjunction with the Gaussian language, but a challenge for future work is to bring the ideas of exact conditioning closer to the ideas of unification, both practically and in terms of the semantics. This is again related to the extension $\mathsf{GaussEx}$ by ``improper priors'', which are a unit for the conditioning product in the same way uniform distributions are in finite probability (\Cref{prop:uniform_unit}). The connections with logic programming are spelled out in more detail in \cite[Section~20.2]{dariothesis}.

\paragraph{Implementation} The purpose of our Gaussian language was to give a minimalistic calculus in which to study the novel effect of conditioning in isolation. The close fit of the denotational semantics to the language was thus expected, and can be seen as an instance of letting semantics inspire language design. To extend our calculus to a full-blown programming language, one can make use of the general framework of algebraic effects to combine conditioning with other effects like memory or recursion. For example, we can treat higher-order functions by modelling the language on a presheaf category, which is cartesian closed. The operational semantics easily extend to a full language, for which we have given implementations in Python and F\# \citep{gaussianinfer}. 

\begin{acks}
One starting point for this work was a question from Ohad Kammar, and we thank him for that. It has been helpful to discuss developments in this work with many people over the years, and we are also grateful to our anonymous reviewers for helpful suggestions. This material is partly based on work supported by an Royal Society University Research Fellowship, ERC Project BLAST, and AFOSR Award No.~FA9550–21–1–003. 
\end{acks}

\bibliography{references}

\begin{thebibliography}{10}

\bibitem{baez:control}
{\sc Baez, J.~C., and Erbele, J.}
\newblock Categories in control.
\newblock {\em Theory Appl.~Categ. 30\/} (2015), 836--881.

\bibitem{bogachev}
{\sc Bogachev, V.~I., and Malofeev, I.~I.}
\newblock Kantorovich problems and conditional measures depending on a
  parameter.
\newblock {\em Journal of Mathematical Analysis and Applications\/} (2020).

\bibitem{pbsz}
{\sc Bonchi, F., Piedeleu, R., Sobocinski, P., and Zanasi, F.}
\newblock Graphical affine algebra.
\newblock In {\em Proc.~LICS 2019\/} (2019).

\bibitem{bsz}
{\sc Bonchi, F., Sobocinski, P., and Zanasi, F.}
\newblock The calculus of signal flow diagrams {I}: linear relations on
  streams.
\newblock {\em Inform.~Comput. 252\/} (2017).

\bibitem{stan}
{\sc Carpenter, B., Gelman, A., Hoffman, M.~D., Lee, D., Goodrich, B.,
  Betancourt, M., Brubaker, M., Guo, J., Li, P., and Riddell, A.}
\newblock Stan: A probabilistic programming language.
\newblock {\em Journal of statistical software 76}, 1 (2017).

\bibitem{cho_jacobs}
{\sc Cho, K., and Jacobs, B.}
\newblock Disintegration and {Bayesian} inversion via string diagrams.
\newblock {\em Mathematical Structures in Computer Science 29\/} (2019), 938 --
  971.

\bibitem{lenses}
{\sc Clarke, B., Elkins, D., Gibbons, J., Loregian, F., Milewski, B., Pillmore,
  E., and Roman, M.}
\newblock Profunctor optics, a categorical update.
\newblock arxiv:2001.07488, 2020.

\bibitem{problog}
{\sc De~Raedt, L., and Kimming, A.}
\newblock Probabilistic (logic) programming concepts.
\newblock {\em Mach.~Learn. 100\/} (2015).

\bibitem{eaton}
{\sc Eaton, M.~L.}
\newblock Multivariate statistics: A vector space approach.
\newblock {\em Lecture Notes-Monograph Series 53\/} (2007), i--512.

\bibitem{fong}
{\sc Fong, B.}
\newblock {\em Causal Theories: A Categorical Perspective on {Bayesian}
  Networks (MSc thesis)}.
\newblock PhD thesis, University of Oxford, 2012.

\bibitem{fritz}
{\sc Fritz, T.}
\newblock A synthetic approach to {M}arkov kernels, conditional independence
  and theorems on sufficient statistics.
\newblock {\em Adv.~Math. 370\/} (2020).

\bibitem{representable_markov}
{\sc Fritz, T., Gonda, T., Perrone, P., and Rischel, E.~F.}
\newblock Representable {M}arkov categories and comparison of statistical
  experiments in categorical probability, 2020.

\bibitem{fritz:zero_one}
{\sc Fritz, T., and Rischel, E.~F.}
\newblock Infinite products and zero-one laws in categorical probability.
\newblock {\em {Compositionality} 2\/} (Aug. 2020).

\bibitem{fuhrmann}
{\sc F\"uhrmann, C.}
\newblock Varieties of effects.
\newblock In {\em Proc.~FOSSACS 2002\/} (2002), pp.~144--159.

\bibitem{psi}
{\sc Gehr, T., Misailovic, S., and Vechev, M.}
\newblock P{S}{I}: Exact symbolic inference for probabilistic programs.
\newblock In {\em Proc.~CAV 2016\/} (2016).

\bibitem{giry}
{\sc Giry, M.}
\newblock A categorical approach to probability theory.
\newblock {\em Categorical Aspects of Topology and Analysis. Lecture Notes in
  Mathematics\/} (1982).

\bibitem{golubtsov2002monoidal}
{\sc Golubtsov, P.~V.}
\newblock Monoidal kleisli category as a background for information
  transformers theory.
\newblock {\em Inf. Process. 2}, 1 (2002), 62--84.

\bibitem{dippl}
{\sc Goodman, N.~D., and Stuhlm\"{u}ller, A.}
\newblock {The Design and Implementation of Probabilistic Programming
  Languages}.
\newblock \url{http://dippl.org}, 2014.
\newblock Accessed: 2021-8-3.

\bibitem{probmods2}
{\sc Goodman, N.~D., and Tenenbaum, J.~B.}
\newblock {Probabilistic Models of Cognition}.
\newblock \url{http://probmods.org}, 2016.
\newblock Accessed: 2021-3-26.

\bibitem{hermida-tennent}
{\sc Hermida, C., and Tennent, R.~D.}
\newblock Monoidal indeterminates and categories of possible worlds.
\newblock {\em Theoret.~Comput.~Sci. 430\/} (2012).

\bibitem{heunen-kaarsgaard}
{\sc Heunen, C., and Kaarsgaard, R.}
\newblock Bennett and {S}tinespring, together at last.
\newblock In {\em Proceedings of the 18th International Conference on Quantum
  Physics and Logic (QPL 2021)\/} (Sept. 2021), no.~343 in Electronic
  Proceedings in Theoretical Computer Science, EPTCS, pp.~102--118.

\bibitem{huotstatonqpl}
{\sc Huot, M., and Staton, S.}
\newblock Universal properties in quantum theory.
\newblock In {\em Proc.~QPL 2018\/} (2018).

\bibitem{infernet_tutorial}
Infer.net tutorial 3: Learning a gaussian.
\newblock
  \url{https://dotnet.github.io/infer/userguide/Learning\%20a\%20Gaussian\%20tutorial.html}.

\bibitem{jacobs:conjugatepriors}
{\sc Jacobs, B.}
\newblock A channel-based perspective on conjugate priors.
\newblock {\em Mathematical Structures in Computer Science 30}, 1 (2020),
  44–61.

\bibitem{jacobs:paradoxes}
{\sc Jacobs, J.}
\newblock Paradoxes of probabilistic programming.
\newblock In {\em Proc.~POPL 2021\/} (2021).

\bibitem{jaynes}
{\sc Jaynes, E.~T.}
\newblock {\em Probability Theory: The Logic of Science}.
\newblock CUP, 2003.

\bibitem{joyal_street_1}
{\sc Joyal, A., and Street, R.}
\newblock The geometry of tensor calculus, i.
\newblock {\em Advances in Mathematics 88\/} (1991), 55--112.

\bibitem{kallenberg}
{\sc Kallenberg, O.}
\newblock {\em Foundations of Modern Probability}.
\newblock Springer, New York, 1997.

\bibitem{kammarplotkin}
{\sc Kammar, O., and Plotkin, G.~D.}
\newblock Algebraic foundations for effect-dependent optimisations.
\newblock In {\em Proc.~POPL 2012\/} (2012), pp.~349--360.

\bibitem{lauritzen}
{\sc Lauritzen, S., and Jensen, F.}
\newblock Stable local computation with conditional {G}aussian distributions.
\newblock {\em Statistics and Computing 11\/} (11 1999).

\bibitem{levy:fcbv}
{\sc Levy, P.~B., Power, J., and Thielecke, H.}
\newblock Modelling environments in call-by-value programming languages.
\newblock {\em Information and Computation 185}, 2 (2003), 182--210.

\bibitem{maclane}
{\sc MacLane, S.}
\newblock {\em Categories for the Working Mathematician}.
\newblock Springer-Verlag, 1971.
\newblock Graduate Texts in Mathematics, Vol. 5.

\bibitem{InferNET18}
{\sc Minka, T., Winn, J., Guiver, J., Zaykov, Y., Fabian, D., and Bronskill,
  J.}
\newblock {Infer.NET 0.3}, 2018.
\newblock {M}icrosoft Research Cambridge.

\bibitem{moggi:computational_lambdacalculus}
{\sc Moggi, E.}
\newblock Computational lambda-calculus and monads.
\newblock In {\em Proceedings of the Fourth Annual Symposium on Logic in
  Computer Science\/} (1989), IEEE Press, p.~14–23.

\bibitem{moggi:computation_and_monads}
{\sc Moggi, E.}
\newblock Notions of computation and monads.
\newblock {\em Information and Computation 93}, 1 (1991), 55 -- 92.
\newblock Selections from 1989 IEEE Symposium on Logic in Computer Science.

\bibitem{delayed_sampling}
{\sc Murray, L., Lund\'{e}n, D., Kudlicka, J., Broman, D., and Sch\"{o}n, T.}
\newblock Delayed sampling and automatic {R}ao-{B}lackwellization of
  probabilistic programs.
\newblock In {\em Proceedings of the Twenty-First International Conference on
  Artificial Intelligence and Statistics\/} (2018), pp.~1037--1046.

\bibitem{shan:disint}
{\sc Narayanan, P., and Shan, C.}
\newblock Applications of a disintegration transformation.
\newblock In {\em Workshop on program transformations for machine learning\/}
  (2019).

\bibitem{power:distributive_freyd}
{\sc Power, J.}
\newblock Generic models for computational effects.
\newblock {\em Theor. Comput. Sci. 364}, 2 (2006), 254–269.

\bibitem{power:premonoidal}
{\sc Power, J., and Robinson, E.}
\newblock Premonoidal categories and notions of computation.
\newblock {\em Math. Struct. Comput. Sci. 7\/} (1997), 453--468.

\bibitem{expect_the_unexpected}
{\sc Proschan, M.~A., and Presnell, B.}
\newblock Expect the unexpected from conditional expectation.
\newblock {\em The American Statistician 52}, 3 (1998).

\bibitem{selinger:graphical}
{\sc Selinger, P.}
\newblock {\em A Survey of Graphical Languages for Monoidal Categories}.
\newblock Springer Berlin Heidelberg, Berlin, Heidelberg, 2011, pp.~289--355.

\bibitem{shan_ramsey}
{\sc Shan, C.-c., and Ramsey, N.}
\newblock Exact {B}ayesian inference by symbolic disintegration.
\newblock In {\em Proc.~POPL 2017\/} (2017).

\bibitem{smithe:bayesian}
{\sc St~Clere~Smithe, T.}
\newblock Bayesian updates compose optically, 2020.

\bibitem{staton:predicate_logic}
{\sc Staton, S.}
\newblock An algebraic presentation of predicate logic.
\newblock In {\em Foundations of Software Science and Computation Structures\/}
  (Berlin, Heidelberg, 2013), F.~Pfenning, Ed., Springer Berlin Heidelberg,
  pp.~401--417.

\bibitem{staton:commutative_semantics}
{\sc Staton, S.}
\newblock Commutative semantics for probabilistic programming.
\newblock In {\em Programming Languages and Systems\/} (Berlin, Heidelberg,
  2017), H.~Yang, Ed., Springer Berlin Heidelberg, pp.~855--879.

\bibitem{staton:betabernoulli}
{\sc Staton, S., Stein, D., Yang, H., Ackerman, N., Freer, C., and Roy, D.}
\newblock The beta-bernoulli process and algebraic effects.
\newblock {\em Proceedings of 45th International Colloquium on Automata,
  Languages and Programming (ICALP '18)\/} (02 2018).

\bibitem{gaussianinfer}
{\sc Stein, D.}
\newblock Gaussian{I}nfer.
\newblock \url{https://github.com/damast93/GaussianInfer}, 2021.

\bibitem{dariothesis}
{\sc Stein, D.}
\newblock {\em Structural Foundations for Probabilistic Programming Languages}.
\newblock PhD thesis, University of Oxford, 2021.

\bibitem{2021compositional}
{\sc Stein, D., and Staton, S.}
\newblock {\em Compositional Semantics for Probabilistic Programs with Exact
  Conditioning}.
\newblock Association for Computing Machinery, New York, NY, USA, 2021.

\bibitem{van_de_meent:ppl_introduction}
{\sc van~de Meent, J.-W., Paige, B., Yang, H., and Wood, F.}
\newblock An introduction to probabilistic programming, 2018.

\bibitem{shan}
{\sc Walia, R., Narayanan, P., Carette, J., Tobin-Hochstadt, S., and Shan,
  C.-c.}
\newblock From high-level inference algorithms to efficient code.
\newblock {\em Proc. ACM Program. Lang. 3}, ICFP (July 2019).

\bibitem{zhang_schur}
{\sc Z., Z.}
\newblock {\em The Schur Complement and Its Applications}.
\newblock 01 2005.

\end{thebibliography}

\section{Appendix}

\subsection{Overview of Measure Theory}\label{app:meas}
In order to formalize probability distributions with uncountable support such as Gaussians, one traditionally employs measure theory. While measure theory is not central to the understanding of this article, it is needed to make certain arguments and intuitions rigorous, so we provide a short reference and refer to \citep{kallenberg} for a comprehensive introduction.

\paragraph{Measurable spaces}
A \emph{$\sigma$-algebra} on a set $X$ is a collection of subsets of $X$ which contains $\emptyset$ and is closed under complementation and countable union. A \emph{measurable space} is a pair $(X,\Sigma_X)$ of a set $X$ and a $\sigma$-algebra $\Sigma_X$ on $X$. The subsets $U \in \Sigma_X$ are called \emph{measurable subsets}. A function $f : X \to Y$ between measurable spaces is called \emph{measurable} if for all measurable subsets $A \subseteq Y$, the preimage $f^{-1}(A)$ is measurable in $X$. The product space $X \times Y$ is naturally equipped with the product-$\sigma$-algebra, which is generated by the rectangles $A \times B$ for $A \in \Sigma_X, B \in \Sigma_Y$.

Every topological space $X$ comes with a natural $\sigma$-algebra, namely the \emph{Borel $\sigma$-algebra} $\mathcal B(X)$ generated by its open sets. Continuous functions between topological spaces become measurable under this definition.

All examples of measurable spaces in this article will be of a particularly well-behaved type, namely \emph{standard Borel spaces}. 
\begin{definition}
A \emph{Polish space} is a topological space homeomorphic to a complete metric space with a countable dense subset. A \emph{standard Borel space} is a measurable space which is isomorphic to $(X,\mathcal B(X))$ for some Polish space $X$.
\end{definition}
We write $\sbs$ for the category of standard Borel spaces and measurable maps. This category includes all spaces that will be relevant for real-valued probability, for example $\R^n$, the interval $[0,1]$ and countable discrete spaces.

\begin{definition}
Let $X$ be a measurable space. A \emph{measure} on $X$ is a function $\mu : \Sigma_X \to [0,\infty]$ such that 
	\begin{equation} \mu(\emptyset) = 0 \qquad \text{ and } \qquad \mu\left(\sum_{i=1}^\infty A_i \right) = \sum_{i=1}^\infty \mu(A_i) \label{eq:def_measure} \end{equation}
	where $\sum_{i} A_i$ denotes disjoint union.
\end{definition}
A \emph{probability measure} is a measure satisfying $\mu(X)=1$. If $\mu$ is a measure on $X$ and $f : X \to Y$ is measurable, the \emph{pushforward measure} $f_*\mu$ is defined by $f_*\mu(A) = \mu(f^{-1}(A))$. For $x \in X$, the \emph{Dirac measure} $\delta_x$ is the probability measure on $X$ defined, using Iverson backet notation, by $\delta_x(A) = [x \in A]$. The Borel-Lebesgue measure is the unique measure on $(\R,\mathcal B(\R))$ assigning every interval its length, that is $\ell([a,b]) = b-a$ for all $a \leq b$. For two probability measures $\mu,\nu$, the product probability measure $\mu \otimes \nu$ is uniquely defined via $(\mu \otimes \nu)(A \times B) = \mu(A) \cdot \nu(B)$ for all measurable $A,B$. A subset which is assigned measure zero can be seen as negligible; this informs the following terminology:

\begin{definition}If $\mu$ is a measure on $X$ and $\phi(x)$ some measurable property, we say $\phi$ holds \emph{$\mu$-almost everywhere} if $\mu(\{ x \in X : \neg \phi(x)\}) = 0$. We say $\mu$ is \emph{absolutely continuous} with respect to $\nu$, written $\mu \ll \nu$, if for all measurable sets $A$, $\nu(A) = 0$ implies $\mu(A) = 0$.
\end{definition}

\noindent\paragraph{Integration}

For a nonnegative measurable function $f : X \to [0,\infty)$ and a measure $\mu$, its integral is defined as
\[ \int_X f(x) \mu(\d x) \defeq \sup_{\{A_i\}} \sum_{i} \mu(U_i) \cdot \inf_{x \in U_i} f(x) \quad \in \quad [0,\infty] \]
where the supremum ranges over finite measurable partitions of $X$. The integral extends to measurable functions $f : X \to \R$ that are integrable, i.e. satisfy $\int |f(x)| \mu(\d x) < \infty$. Fubini's theorem states that the order of integration can be interchanged; for all probability measures $\mu,\nu$ and integrable $f : X \times Y \to \R$, we have
\begin{equation}
\int \int f(x,y) \mu(\d x)\nu(\d y) = \int \int f(x,y) \nu(\d y) \mu(\d x)
\end{equation}
Density functions are important practical tools for defining measures: If $\mu$ is a measure on $X$ and $f : X \to [0,\infty)$ is measurable, then
\[ \nu(A) = \int_A f(x)\mu(\d x) \defeq \int f(x)[x \in A]\mu(\d x) \]
defines another measure. For example, the standard normal distribution on $\R$ is defined as having the density function $\varphi$ with respect to the Lebesgue measure, where
\[ \varphi(x) = \frac{1}{\sqrt{2\pi}}e^{-x^2/2} \] 

\begin{definition}\label{def:kernel}
A probability kernel $X \kerto Y$ between measurable spaces is a function $f : X \times \Sigma_Y \to [0,1]$ such that $f(-,A)$ is measurable for all $A \in \Sigma_Y$ and $f(x,-)$ is a probability measure for all $x \in X$. 
Kernels $f : X \kerto Y$ and $g : Y \kerto Z$ compose using integration
\begin{equation}
(g \bullet f)(x,A) = \int_Y g(y,A)f(x,\mathrm dy) \label{eq:kernel_composition}
\end{equation}
and any measurable map $f : X \to Y$ induces a Dirac kernel $\delta_f :X \kerto Y$ via $\delta_f(x,A) = [f(x) \in A]$.
\end{definition}
\begin{definition}[{\cite{fritz}}]
The category $\catname{BorelStoch}$ consists of standard Borel spaces and probability kernels between them. Identities are given by $\delta_{\id_X}$ and composition is kernel composition.
\end{definition}

$\catname{BorelStoch}$ has the structure of a Markov category, whose tensor is given by $X \otimes Y = X \times Y$ on objects and by product measures on morphisms. Copying and discarding is given by the Dirac kernels for the canonical maps $\Delta : X \to X \times X$, $! : X \to 1$ in $\catname{Sbs}$.   

Our example Markov categories $\catname{FinStoch}$ and $\catname{Gauss}$ faithfully embed in $\catname{BorelStoch}$, where we interpret
\begin{enumerate}
\item the finite set $X$ as the discrete standard Borel space $(X, \mathcal P(X))$
\item the object $n$ of $\gauss$ as $(\R^n,\mathcal B(\R^n))$
\end{enumerate}
and the morphisms of these categories as actual probability kernels. 

\begin{definition}[Giry monad]
	\label{def:giry_monad}
	There is a monad $\G : \sbs \to \sbs$ due to Giry \cite{giry} that assigns to $X$ the space of probability measures $\G X$, endowed with the least $\sigma$-algebra making all evaluations $\ev_A : \G X \to [0,1], \mu \mapsto \mu(A)$ measurable for $A \in \Sigma_X$. The unit of this monad takes the Dirac measure $x \mapsto \delta_x$. Kleisli composition takes the average measure via integration, that is for $f : X \to \G Y$ and $\mu \in \G X$, we have the Kleisli extension
	\begin{equation}
	f^\dagger(\mu)(A) = \int_X f(x)(A) \mu(\mathrm dx) \label{eq:giry_bind}
	\end{equation}
	For $h : X \to Y$, the functorial action $\G(h)(\mu)$ is given by the pushforward measure $h_*\mu$.
\end{definition}
The Giry monad is strong, affine and commutative, where commutativity follows from Fubini's theorem. A Kleisli arrow $X \to \G Y$ is the same as a Markov kernel $X \kerto Y$, and Kleisli composition \eqref{eq:giry_bind} agrees with kernel composition \eqref{eq:kernel_composition}.

\subsection{CD-calculus}\label{app:cdcalc}
We verify that the remaining axioms of the CD calculus hold in any CD model, completing the proof from Section~\ref{cd:equations}. \vspace{-1cm}
\begin{proof}[Proof of \Cref{prop:cd_soundness}] 
\eqref{cd:let_cong} follows from the compositionality of the semantics. \eqref{cd:pair_beta}, \eqref{cd:pair_eta} and \eqref{cd:unit_eta} are are immediate. We proceed to prove that \eqref{cd:let_lin} is valid, that is if $t$ uses $x$ exactly once, then 
\[ \input{categorical_probability/eq_let_lin.tikz} \]
We argue by induction over the term structure of $t$. 

\paragraph{Variable} If $t=x$, then 
\[ \input{categorical_probability/eq_let_lin_x.tikz} \]
\paragraph{Pairing} Let $t=(u,s)$ where wlog $x$ occurs freely exactly once in $u$ and zero times in $s$. 
By inductive hypothesis, we have $(\letin x e u) = u[e!x]$, i.e.
\[ \input{categorical_probability/eq_let_lin_u.tikz} \]
from which we derive by the comonoid laws and weakening of $s$
\[ \input{categorical_probability/eq_let_lin_pair.tikz} \]
The case for $(s,u)$ is symmetric.
\paragraph{Function application} If $t=f\,u$ we obtain immediately from the inductive hypothesis  
\[ \input{categorical_probability/eq_let_lin_f.tikz} \]
The proof for the projection case $t=\pi_i\,u$ is analogous.
\paragraph{Let-binding I} Let $t=(\letin y u s)$ with $u,s$ as before then
\[ (\letin x e \letin y u s) \equiv (\letin y {(\letin x e u)} s) \equiv (\letin y {u[e!x]} s) \]
is a special case of \eqref{cd:assoc} which was proved in \eqref{cd:eq_assoc}.
\paragraph{Let-binding II} Let $t = (\letin y s u)$, then
\[ (\letin x e \letin y s u) \equiv (\letin y s \letin x e u) \equiv (\letin y s u[e!x]) \]
is a special case of \eqref{cd:comm}. The inductive hypothesis on $u$ involves both weakening and exchange and reads
\[ \input{categorical_probability/eq_let_lin_let_hyp.tikz} \]
from which we derive
\[ \input{categorical_probability/eq_let_lin_let.tikz} \]
This finishes the validation for linear substitution. \\

For \eqref{cd:let_val}, we carry out the semantic analogue of \Cref{prop:cd_subst} and consider sequences of let-bindings
\[ \letin {x_1} e \cdots \letin {x_n} e \hat t \]
whose denotation is
\[ \input{categorical_probability/eq_let_val.tikz} \]
This can be replaced by $\letin x e \letin {x_1} x \cdots \letin {x_n} x \hat t$, i.e.
\[ \input{categorical_probability/eq_let_val_2.tikz} \]
whenever the denotation of $e$ is deterministic in the CD category sense. It remains to note that the denotations of all values of the CD calculus are always deterministic.
\end{proof}

\begin{proof}
	The first case is \eqref{cd:let_lin} and the last case follows from the combination of the previous ones. 
	
	\paragraph{Copyable} We begin with the special case that $t$ has precisely two occurrences of $x$. Then
	\begin{align*}
	&t[e/x] \\
	\stackrel{\eqref{cd:general_subs}}\equiv\, &\letin {x_1} e \letin {x_2} e \hat t \\
	\stackrel{\eqref{cd:let_val},\eqref{cd:pair_beta}}\equiv\, &\letin p {(e,e)} \letin {x_1} {\pi_1\,p} \letin {x_2} {\pi_2 p} \hat t \\
	\stackrel{\eqref{cd:copyable}}\equiv\, &\letin p {(\letin x e (x,x))} \letin {x_1} {\pi_1\,p} \letin {x_2} {\pi_2 p} \hat t \\
	\stackrel{\eqref{cd:assoc}} \equiv\, & \letin x e \letin p {(x,x)} \letin {x_1} {\pi_1\,p} \letin {x_2} {\pi_2 p} \hat t \\
	\stackrel{\eqref{cd:let_val},\eqref{cd:pair_beta}} \equiv\, & \letin x e \letin {x_1} {x} \letin {x_2} {x} \hat t \\
	\stackrel{\eqref{cd:general_subs}} \equiv\, & \letin x e t
	\end{align*}
	Repeating this process, any chain of repeated let bindings of a copyable term $e$
	\[ \letin {x_1} e \cdots \letin {x_n} e \ldots \]
	can be replaced by 
	\[  \letin x e \letin {x_1} x \cdots \letin {x_n} x \ldots \]
	\paragraph{Discardable} Let $t$ have no free occurrence of $x$, and $e$ be discardable. Then
	\begin{align*}
	&\letin x e t \\
	\stackrel{\eqref{cd:let_val}} \equiv\, &\letin x e \letin y {()} t \\
	\stackrel{\eqref{cd:assoc}} \equiv\, &\letin y {(\letin x e ())} t \\
	\stackrel{\eqref{cd:discardable}} \equiv\, &\letin y {()} t \\
	\stackrel{\eqref{cd:let_val}} \equiv\, &t
	\end{align*}
\end{proof}

\begin{proof}[Proof of \Cref{prop:all_lambdac_axioms}]
	We will invoke \eqref{cd:let_cong} implicitly throughout. \eqref{cd:let_beta} follows immediately from \eqref{cd:let_val} because $x_2$ is a value. \eqref{cd:id}, \eqref{cd:let_f}, \eqref{cd:let_ast} follow by applying \eqref{cd:let_lin} one or two times.
	
	For \eqref{cd:comm}, we notice that because $x_2 \notin \fv(e_1)$, the expression $\letin {x_1} {e_2} {\letin {x_2} {{\color{red} x_2}} e}$ has a unique free occurrence of $x_2$, hence by linear substitution
	\begin{align*}
		&\letin {x_2} {e_2} \letin {x_1} {e_1} {\color{blue} e} \\
		\overset{\eqref{cd:let_beta}}\equiv\;& \letin {{\color{red} x_2}} {{e_2}} \letin {x_1} {e_1} {\color{blue} \letin {x_2} {{\color{red} x_2}} e} \\
		\overset{\eqref{cd:let_lin}}\equiv\;&  \letin {x_1} {e_1} \letin {x_2} {{\color{red} e_2}} e
	\end{align*}
	For \eqref{cd:assoc}, if $x_1 \notin \fv(e)$ then $\letin {x_2} {{(\letin {x_1} {{\color{red} x_1}} e_2)}} e$ has a unique free occurrence of $x_1$, hence
	\begin{align*}
		&\letin {x_1} {e_1} \letin {x_2} {{\color{blue} e_2}} e \\
		\overset{\eqref{cd:let_beta}}\equiv\;& \letin {{\color{red} x_1}} {{e_1}} \letin {x_2} {{(\color{blue} \letin {x_1} {{\color{red} x_1}} e_2)}} e \\
		\overset{\eqref{cd:let_lin}}\equiv\;&  \letin {x_2} {(\letin {x_1} {{\color{red} e_1}} e_2)} e \qedhere
	\end{align*}
\end{proof}

\end{document}